\newcolumntype{L}{>{\RaggedRight}X} % for cells with left aligned content
\newcommand{\wh}{\widehat}
\newtheorem{condition}{Condition}
\newtheorem{corollary}{Corollary}
\newtheorem{proposition}{Proposition}
\newtheorem{lemma}{Lemma}
\newtheorem{definition}{Definition}
\newtheorem{theorem}{Theorem}
\newtheorem{remark}{Remark}
\newtheorem{assumption}{Assumption}
\newcommand*{\rom}[1]{\expandafter\@slowromancap\romannumeral #1@}
\begin{document}

\def\spacingset#1{\renewcommand{\baselinestretch}%
{#1}\small\normalsize} \spacingset{1}

%%%%%%%%%%%%%%%%%%%%%%%%%%%%%%%%%%%%%%%%%%%%%%%%%%%%%%%%%%%%%%%%%%%%%%%%%%%%%%
%TCIMACRO{\TeXButton{Section}{\sectionfont{\bfseries\large\sffamily}}}%
%BeginExpansion
\sectionfont{\bfseries\large\sffamily}%
%EndExpansion
%
\newcommand*\emptycirc[1][1ex]{\tikz\draw (0,0) circle (#1);} 
\newcommand*\halfcirc[1][1ex]{%
  \begin{tikzpicture}
  \draw[fill] (0,0)-- (90:#1) arc (90:270:#1) -- cycle ;
  \draw (0,0) circle (#1);
  \end{tikzpicture}}
\newcommand*\fullcirc[1][1ex]{\tikz\fill (0,0) circle (#1);} 

%TCIMACRO{\TeXButton{Subsection}{\subsectionfont{\bfseries\sffamily\normalsize
%}}}%
%BeginExpansion
\subsectionfont{\bfseries\sffamily\normalsize}%
%EndExpansion
%

%\bibliographystyle{natbib}

\def\spacingset#1{\renewcommand{\baselinestretch}%
{#1}\small\normalsize} \spacingset{1}

\begin{center}
    \Large \bf Sensitivity Analysis for Matched Observational Studies with Continuous Exposures and Binary Outcomes
\end{center}

%Analyzing Randomized Experiments Subject to Outcome Misclassification via Integer Programming

\begin{center}
  \large  $\text{Jeffrey Zhang}^{*, 1}$, $\text{Dylan Small}^{ 1}$ and $\text{Siyu Heng}^{2}$
\end{center}

\begin{center}
   \large \textit{$^{1}$Department of Statistics and Data Science, University of Pennsylvania}
\end{center}
\begin{center}
   \large \textit{$^{2}$Department of Biostatistics, New York University}
\end{center}

\let\thefootnote\relax\footnotetext{$^{*}$Address for Correspondence: Jeffrey Zhang, Department of Statistics and Data Science, University of Pennsylvania (email: jzhang17@wharton.upenn.edu). }

\bigskip

\begin{abstract}

Matching is one of the most widely used study designs for adjusting for measured confounders in observational studies. However, unmeasured confounding may exist and cannot be removed by matching. Therefore, a sensitivity analysis is typically needed to assess a causal conclusion's sensitivity to unmeasured confounding. Sensitivity analysis frameworks for binary exposures have been well-established for various matching designs and are commonly used in various studies. However, unlike the binary exposure case, there still lacks valid and general sensitivity analysis methods for continuous exposures, except in some special cases such as pair matching. To fill this gap in the binary outcome case, we develop a sensitivity analysis framework for general matching designs with continuous exposures and binary outcomes. First, we use probabilistic lattice theory to show our sensitivity analysis approach is finite-population-exact under Fisher's sharp null. Second, we prove a novel design sensitivity formula as a powerful tool for asymptotically evaluating the performance of our sensitivity analysis approach. Third, to allow effect heterogeneity with binary outcomes, we introduce a framework for conducting asymptotically exact inference and sensitivity analysis on generalized attributable effects with binary outcomes via mixed-integer programming. Fourth, for the continuous outcomes case, we show that conducting an asymptotically exact sensitivity analysis in matched observational studies when both the exposures and outcomes are continuous is generally NP-hard, except in some special cases such as pair matching. As a real data application, we apply our new methods to study the effect of early-life lead exposure on juvenile delinquency. We also develop a publicly available \textsf{R} package for implementation of the methods in this work.
\end{abstract}

\noindent%
{\it Keywords:}  Attributable effect; Causal inference; Matching; Randomization inference; Sensitivity analysis; Unmeasured confounding.

%\spacingset{1.9} % DON'T change the spacing!
\spacingset{1.73} % DON'T change the spacing!

\section{Introduction}
\label{intro}
\subsection{Motivating example}\label{subsec: motivating example}

Early-life lead (Pb) exposure is associated with many later-life problems, including ``decreased IQ, behavior problems, lower lifetime earnings, and increased criminal activity" (\citealp{lead_crime}). Over the past decades, the United States has implemented several public health interventions to prevent early-life lead poisoning. However, elevated lead exposure continues to harm US children, and the Centers for Disease Control and Prevention (CDC) decreased its recommended reference level for elevated blood lead (Pb) to 3.5 µg/dL in October 2021. Recently, \citet{lead_crime} compiled an anonymized dataset of 13,580 individuals in North Carolina by linking later-life demographic and juvenile delinquency records with early-life measurements of lead exposure and demographic variables. Some individuals in the data had multiple blood lead level measurements, and we followed the strategy proposed in \citet{Gibson2020} to use the first recorded measurement for these individuals (see the ``Database Development" section in \citet{Gibson2020} for the consideration behind this strategy). It should be acknowledged that a one-time early-life blood lead measurement is an imperfect proxy for overall early-life lead exposure, particularly because there is no guarantee that lead exposure is stable and sustained throughout the early life of each individual in the dataset. However, variables such as the age of the individual at measurement, the year of the blood test, and the type of blood draw were collected and adjusted for in the analysis. The authors studied the effect of the early-life blood lead level on two juvenile delinquency outcomes (binary indicators): (i) any delinquency and (ii) any serious delinquency. Here, the exposure is continuous, and the outcome is binary. They used a model-based approach to estimate the effect. While they found statistically significant effects, the data is observational, so the causal conclusions might be biased by potential unmeasured confounding. Therefore, a sensitivity analysis is needed to assess sensitivity of their qualitative conclusions to violations of the no unmeasured confounding assumption. We will re-analyze the data from \citet{lead_crime} through a matched cohort study and corresponding sensitivity analyses. For different hypothesized levels of unmeasured confounding, our sensitivity analyses will consider testing the causal null of no effect and conducting inferences on attributable effects (which allows for effect heterogeneity).

\subsection{Previous literature and our contributions}

Among various methods for adjusting for measured confounders in observational studies, matching is one of the most widely used ones. In matched observational studies, units with different exposure values (such as exposed and unexposed indicators in the binary exposure case, or different exposure doses in the continuous exposure case) are matched for measured confounders to effectively remove or reduce potential bias contributed by measured confounding (\citealp{rosenbaum_obs, Lu2011OptimalNM, imbens2015causal}). However, like many other widely used confounder adjustment methods such as weighting and outcome regression, matching cannot adjust for unmeasured confounders. Therefore, a sensitivity analysis is typically needed in matched observational studies to examine sensitivity of a causal inference output (e.g., $p$-value for testing null hypothesis of no effect or the confidence interval for the effect) to unmeasured confounding (\citealp{rosenbaum_obs, imbens2015causal}). 

Sensitivity analysis has been well-established and routinely reported in matched observational studies with binary exposures (\citealp{rosenbaum1987sensitivity, rosenbaum_obs, imbens2015causal}). However, for matched observational studies with continuous exposures, there is still a lack of valid and general sensitivity analysis frameworks, except in some special cases such as pair matching (\citealp{rosenbaum1989sensitivity, gastwirth1998dual}). Since many matched observational studies consider continuous exposures and the relevant literature has been rapidly growing in the last two decades (e.g., \citealp{Lu_NBP, baiocchi2010building, rosenbaum2020design, fogarty2019biased, bo_continuous, yu2023risk}), there is a substantial demand for a general and powerful sensitivity analysis framework for matching with continuous exposures. In this work, we fill this gap in the binary outcome case; many matched observational studies consider continuous exposures and binary outcomes (e.g., matched case-control studies with continuous exposures). Specifically, for general matched observational studies with continuous exposures, we develop exact (or asymptotically exact) sensitivity analysis methods (in the binary outcome case) for testing the null effect and a generalized attributable effect and also show the computational hardness of conducting exact sensitivity analysis with continuous outcomes; see Table \ref{contribution} for a summary. We also developed a publicly available \textsf{R} package \texttt{doseSens} for implementation of our methods.

\begin{table}[htbp]
\centering
\caption{Classification of sensitivity analysis methods for general matching designs based on the types of exposures and outcomes. }
\begin{tabular}{r|rr}
& Binary Exposure & Continuous Exposure \\
\hline
Binary Outcome & Rosenbaum (1987, \textit{Biometrika}) & Solved in this work  \\
Continuous Outcome & Gastwirth et al. (2000, \textit{JRSSB}) & Hardness results given in this work  \\
%\hline
%$\Gamma = 4.00$ & 0.00 & 0.12 & 0.05 & 0.00 & 0.08 & 0.00 & 0.05 & 0.01 & 0.00 & 0.02\\
%\hline
%$\Gamma = 4.25$ & 0.00 & 0.10 & 0.03 & 0.00 & 0.07 & 0.00 & 0.05 & 0.01 & 0.00 & 0.02\\
\end{tabular}
\label{contribution}

\end{table}

%See \citet{Lu2011OptimalNM} and for a review of such pair-matching designs with continuous exposures. Some matching algorithms have also been proposed for continuous exposures that can produce matched sets beyond matched pairs (i.e., beyond pair matching) and have been shown to outperform pair matching in many aspects (\citealp{bo_continuous}). Moreover, in case-control settings, the number of units in match sets often exceeds size 2. 

\section{Review and Preliminary Results}\label{background}

\subsection{Notations for matched observational studies with continuous exposures and binary outcomes}\label{subsec: notations for perfect matching}

We follow the commonly used notations and framework for matched observational studies with continuous exposures (\citealp{rosenbaum1989sensitivity, gastwirth1998dual, bo_continuous}). Suppose that there are $I$ matched sets, with $n_i$ units in matched set $i$ ($i=1,\dots, I$), giving a total of $N=\sum_{i=1}^I n_i$ units. Let ${x}_{ij}$ denote the measured confounders vector, $Z_{ij}$ the exposure dose, and $R_{ij}$ the observed outcome of subject $j$ in matched set $i$. Let ${Z}=(Z_{11},\dots, Z_{In_{I}})$ be the exposure doses vector and ${R}=(R_{11},\dots, R_{In_{I}})$ the observed outcomes vector. Note that each matched set $i$ has $n_i$ observed exposure doses $Z_{i1}, \dots, Z_{in_{i}}$, potentially all distinct. Let $\widetilde{Z}$ denote the set of all possible exposure doses and $r_{ij}(z)\in \{0,1\}$ the potential outcome of subject $ij$ under the exposure dose $z\in \widetilde{Z}$. It follows that the observed outcome $R_{ij}=r_{ij}(z)$ given $Z_{ij}=z$. Under the randomization inference framework, the potential outcomes $r_{ij}(z)$ are fixed numbers, and the only source of randomness in statistical inference comes from exposure dose assignments (\citealp{rosenbaum_obs, li_ding_finite, bo_continuous}). Let $\mathcal{F}_{0}=\{{x}_{ij},r_{ij}(z),i=1,...,I,j=1,\dots, n_{i}, z \in \widetilde{Z}\}$. Within each matched set $i$, conditional on the given (observed) exposure doses ${z}_{i}=(z_{i1}, \dots, z_{in_{i}})$, there are $n_{i}!$ possible realizations of the random exposure dose vector ${Z}_{i}=(Z_{i1}, \dots, Z_{in_{i}})$ (corresponding to $n_i!$ permutations of the given (observed) dose vector ${z}_{i}$). Let $\mathcal{Z}_{i}=\{{z}_{i\pi_{i}}=(z_{i\pi_{i}(1)}, \dots, z_{i\pi_{i}(n_{i})}) \mid \pi_{i}=(\pi_{i}(1),\dots, \pi_{i}(n_{i})) \in S_{n_i}\}$ denote the set of $n_{i}!$ permutations of ${z}_{i}$, in which $S_{n_i}$ denotes the collection of all $n_{i}!$ permutations of the index set $\{1, \dots, n_{i}\}$. For example, if a matched set has three exposure doses 0.1, 0.3, and 0.8, then there are $3!=6$ possible assignments of these three exposure doses to the three units within this matched set: $\mathcal{Z}_{i}=\{(0.1, 0.3, 0.8), (0.1, 0.8, 0.3), (0.3, 0.1, 0.8), (0.3, 0.8, 0.1), (0.8, 0.1, 0.3), (0.8, 0.3, 0.1)\}$. We then let $\mathcal{Z}=\mathcal{Z}_{1}\times \mathcal{Z}_{2} \times \dots \times \mathcal{Z}_{I}$ denote all possible dose assignments for the whole matched sets, so the cardinality of $|\mathcal{Z}|=n_{1}!\dots n_{I}!$. Conditional on matching on the measured confounders (i.e., ${x}_{ij}={x}_{ij^{\prime}}$ or ${x}_{ij}\approx {x}_{ij^{\prime}}$ for all $1 \leq j,j^{\prime} \leq n_i$) and the no unmeasured confounding assumption, the exposure dose assignments $p_{i\pi_{i}}$ are uniform (or approximately uniform) within each matched set $i$:
\begin{equation}\label{eqn: random assignment assumption}
    \text{$p_{i\pi_{i}}=\text{pr}({Z}_i={z}_{i\pi_{i}}|\mathcal{F}_{0},\mathcal{Z}_{i})=\frac{1}{n_{i}!}$ for all $\pi_{i} \in S_{n_i}$ (i.e., for all ${z}_{i\pi_{i}} \in \mathcal{Z}_{i}$).}
\end{equation}
 Then, randomization inference can be conducted under the uniform dose assignment assumption (\ref{eqn: random assignment assumption}). For example, for testing Fisher's sharp null hypothesis $H_{0}: r_{ij}(z)=r_{ij}(z^{\prime})$ for any $i, j,$ and $z, z^{\prime} \in {Z}_{i}$, given the test statistics $T({Z}, {R})$ and its observed value $t$, the corresponding one-sided finite-population-exact $p$-value is $\text{pr}(T \geq t|\mathcal{F}_{0},\mathcal{Z})=|\{{z} \in \mathcal{Z}: T({Z}={z}, {R}) \geq t\}|/(n_{1}!\dots n_{I}!)$, which can be approximated by Monte Carlo simulation (\citealp{rosenbaum1989sensitivity, gastwirth1998dual, bo_continuous}).

\subsection{The Rosenbaum sensitivity analysis model with continuous exposures}
\label{sens_model}
In an observational study, because unmeasured confounding may exist, the random exposure dose assignments assumption (\ref{eqn: random assignment assumption}) may not hold in practice, even if all the measured confounders were adjusted for. Therefore, a sensitivity analysis is typically needed to assess sensitivity of causal conclusions (e.g., $p$-values under some causal null hypothesis or confidence intervals for some causal estimands) to potential unmeasured confounding. In matched observational studies with continuous exposures, the Rosenbaum exposure dose assignment model (\citealp{rosenbaum1989sensitivity, gastwirth1998dual}) is one of the most widely used sensitivity analysis models, and many other sensitivity analysis models are built on or closely related to it (e.g., \citet{tan_msm, bonvini2022sensitivity}). It models the conditional density of exposure dose $Z$ given the measured confounder ${x}$ and a hypothetical unmeasured confounder $u$ as follows:
\begin{equation}\label{eqn: Rosenbaum exposure dose model}
    f(Z=z|{x},u)=\zeta({x},u) \eta(z,{x})\exp(\gamma z u)\propto \eta(z,{x})\exp(\gamma z u),
\end{equation} 
where $\zeta({x},u)=1/\int \eta(z,{x})\exp(\gamma z u)dz$ is the normalizing constant, the $\eta(z,{x})$ is an arbitrary nuisance function, the $\gamma>0$ (or equivalently, $\Gamma=\exp(\gamma)$) is the sensitivity parameter, and $u \in [0,1]$ was normalized to make $\gamma$ (or $\Gamma$) more interpretable. Specifically, following the arguments in \citet{rosenbaum1989sensitivity} and \citet{gastwirth1998dual}, consider two subjects $n_{1}$ and $n_{2}$ with the same or similar measured confounders (i.e., ${x}_{n_{1}}={x}_{n_{2}}$ or ${x}_{n_{1}}\approx {x}_{n_{2}}$) but possibly different unmeasured confounders $u_{n_{1}}$ and $u_{n_{2}}$. For two arbitrary fixed exposure doses $z$ and $z^{\prime}$ with $z<z^{\prime}$, let $p_{n_{1}}$ (or $p_{n_{2}}$) denote the probability that subject $n_{1}$ (or $n_{2}$) received higher dose $z^{\prime}$ (or symmetrically, lower dose $z$). That is, we have $p_{n_{1}}+p_{n_{2}}=1$ with
\begin{align*}
   p_{n_{1}}=\text{pr}(Z_{n_{1}}=z^{\prime}, Z_{n_{2}}=z \mid {x}_{n_{1}}, {x}_{n_{2}}, u_{n_{1}}, u_{n_{2}}, \min\{Z_{n_{1}}, Z_{n_{2}}\}=z, \max\{Z_{n_{1}}, Z_{n_{2}}\}=z^{\prime}).
\end{align*}
Under the Rosenbaum dose assignment model (\ref{eqn: Rosenbaum exposure dose model}), it is straightforward to show that:
\begin{align*}
   -\gamma(z^{\prime}-z) \leq \text{logit}\ p_{n_{1}} \leq \gamma(z^{\prime}-z).
\end{align*}
That is, the sensitivity parameter $\gamma\geq 0$ (or equivalently, $\Gamma=\exp(\gamma)\geq 1$) quantifies how unmeasured confounding would bias the logit (i.e., log odds) of receiving the higher (or lower) dose after matching on measured confounders, standardized by the difference in exposure dose $z^{\prime}-z$ (\citealp{rosenbaum1989sensitivity, gastwirth1998dual}).  

We here make three further remarks about the Rosenbaum exposure dose model (\ref{eqn: Rosenbaum exposure dose model}). First, when the exposure is binary (i.e., there are only two doses), model (\ref{eqn: Rosenbaum exposure dose model}) reduces to the classic Rosenbaum propensity score model for sensitivity analysis (\citealp{rosenbaum1987sensitivity}), one of the most widely used sensitivity analysis model for matched observational studies with binary exposures. Second, as discussed in \citet{rosenbaum1989sensitivity}, model (\ref{eqn: Rosenbaum exposure dose model}) contains many widely used exposure dose assignment models as special cases. For example, when the exposure is continuous, model (\ref{eqn: Rosenbaum exposure dose model}) contains the partially linear model $Z=g({x})+\gamma u + \epsilon$, where $g({x})$ is an arbitrary function and $\epsilon$ follows a normal distribution. Third, model (\ref{eqn: Rosenbaum exposure dose model}) imposes a monotonic relationship between the unmeasured confounder(s) and receiving a higher dose, which may not always hold. Fortunately, the results in this paper continue to hold if the $\exp(\gamma z u)$ in model (\ref{eqn: Rosenbaum exposure dose model}) is replaced with $\exp\{\gamma \phi(z) u\}$ for some arbitrary known function $\phi$, and all subsequent instances of $z$ can be replaced with $\phi(z)$. For clarity and simplicity of exposition, our presentation will proceed with the original dose $z$ rather than a transformed dose $\phi(z)$. Still, readers should be aware that our results can hold in much broader settings beyond the Rosenbaum model (\ref{eqn: Rosenbaum exposure dose model}) as our results do not require assuming any monotonic relationships between unmeasured confounding and exposure dose.

Let $u_{ij}\in [0,1]$ denote a hypothetical unmeasured confounder of unit $j$ in matched set $i$, and ${u}=(u_{11},\dots, u_{In_{I}})\in [0,1]^{N}$. We define $\mathcal{F}=\{{x}_{ij}, u_{ij}, r_{ij}(z),i=1,\dots,I, j=1,\dots, n_{i}, z \in \widetilde{Z}\}$. Under the Rosenbaum exposure dose model (\ref{eqn: Rosenbaum exposure dose model}), using the Bayes rule, it is straightforward to show that within each matched set $i$, for any exposure doses assignment ${z}_{i\pi_{i}} \in \mathcal{Z}_{i}=\{{z}_{i\pi_{i}} \mid \pi_{i} \in S_{n_i}\}$, we have
\begin{equation}\label{eqn: dose assignment after matching}
    p_{i\pi_{i}}=\text{pr}({Z}_i={z}_{i\pi_{i}}|\mathcal{F},\mathcal{Z}_{i})=\frac{\exp\{\gamma({z}_{i\pi_{i}}{u}_i^{T})\}}{\sum_{\widetilde{\pi}_{i}\in S_{n_{i}}}\exp\{\gamma({z}_{i\widetilde{\pi}_{i}}{u}_i^{T})\}}, \ \text{where ${u}_{i}=(u_{i1}, \dots, u_{in_{i}})\in [0,1]^{n_{i}}$}.
\end{equation}
When there are only two doses (i.e., binary exposures), equation (\ref{eqn: dose assignment after matching}) reduces to the classic Rosenbaum biased exposure assignments model in matched observational studies with binary exposures (\citealp{rosenbaum1987sensitivity, rosenbaum_obs}). Assuming that the matched sets are independent, we have 
\begin{equation*}
    \text{pr}({Z}={z}|\mathcal{F},\mathcal{Z})=\prod_{i=1}^{I} \frac{\exp\{\gamma({z}_{i}{u}_i^{T})\}}{\sum_{\pi_{i}\in S_{n_{i}}}\exp\{\gamma({z}_{i\pi_{i}}{u}_i^{T})\}}, \ \text{where ${z}=({z}_{1}, \dots, {z}_{I})\in \mathcal{Z}$ with $z_{i}=(z_{i1},\dots, z_{in_{i}})$}.
\end{equation*}
In sensitivity analysis, when testing a causal null hypothesis, given each prespecified sensitivity parameter, researchers typically report the worst-case $p$-value, defined as the largest $p$-value over all possible allocations of unmeasured confounders ${u}$ (\citealp{rosenbaum1987sensitivity, rosenbaum_obs}). For example, in matched observational studies with continuous exposures, for each prespecified sensitivity parameter $\Gamma=\exp(\gamma)$, given some test statistics $T({Z}, {R})$ and its observed value $t$, the (one-sided) worst-case $p$-value under $H_{0}$ is 
\begin{align}\label{eqn: worst-case p-value}
    \max_{{u}\in [0,1]^{N} }\text{pr}(T\geq t \mid \mathcal{F}, \mathcal{Z})&= \max_{{u}\in [0,1]^{N} }\sum_{{z}\in \mathcal{Z}} \Big[ \mathbbm{1}\{T({Z}={z}, {R})\geq t\}\times \text{pr}({Z}={z}|\mathcal{F},\mathcal{Z}) \Big].
\end{align}
The central problem in sensitivity analysis literature is to solve the worst-case $p$-value (\ref{eqn: worst-case p-value}) in various settings (\citealp{rosenbaum_obs}). When there are only two doses (i.e., binary exposures), the worst-case $p$-value (\ref{eqn: worst-case p-value}) has been solved for general matching designs and a wide range of commonly used test statistics (\citealp{rosenbaum1987sensitivity, rosenbaum_obs}). However, for matching with continuous exposures, except for some special cases such as pair matching (\citealp{rosenbaum1989sensitivity, gastwirth1998dual}), solving the worst-case $p$-value (\ref{eqn: worst-case p-value}) is still an open problem. In this paper, we solve this open problem with binary outcomes (which is often encountered in practice and is always the case for case-control studies) for both constant effects (Fisher's sharp null) and heterogenous effects (threshold attributable effects). For continuous outcomes, we show that solving the worst-case $p$-value (\ref{eqn: worst-case p-value}) is generally NP-hard.

\section{Sensitivity Analysis for Matched Observational Studies with Continuous Exposures and Binary Outcomes: the Sharp Null Case}
\label{sens_anal}

\subsection{Finite-population-exact sensitivity analysis via probabilistic lattice theory}\label{subsec: Finite-population-exact sensitivity analysis via probabilistic lattice theory}

In this section, we will derive a finite-population-exact sensitivity analysis procedure (i.e., solving the worst-case $p$-value (\ref{eqn: worst-case p-value})) for matched observational studies with continuous exposures and binary outcomes for testing Fisher's sharp null $H_{0}: r_{ij}(z)=r_{ij}(z^{\prime})$ for any $i, j$ and $z, z^{\prime}\in {Z}_{i}$. We will  introduce a distributive lattice tailored to the problem setting, define a general class of test statistics to which our results will apply, and establish that a specific allocation of unmeasured confounders yields a random variable that stochastically dominates all random variables arising from any other allocations (thus we solve the worst-case $p$-value problem (\ref{eqn: worst-case p-value})). First, in each matched set $i$, we reorder the indexes of units such that the binary outcomes are ordered from smallest to largest, i.e., $r_{i(1)} \leq \ldots \leq r_{i(n_i)}$. Suppose that in matched set $i$, there are $m_i$ outcomes with value 1, then we have $r_{i(n_{i}-m_{i}+1)}=\dots=r_{i(n_{i})}=1$ and $r_{i(1)} = \dots = r_{i(n_i-m_i)}=0$. For simplicity of notations, we assume that the exposure doses are distinct within each matched set, which is typically the case for continuous exposures. For a detailed discussion on handling ties of doses, see Section 2.10.3 of \citet{rosenbaum_obs}. To present our arguments based on probabilistic lattice theory, we require the following definition of distributive lattice (\citealp{gratzer2002general}).
\begin{definition}\label{def: distributive lattice}
A distributive lattice $\Omega$ is a lattice endowed with partial order $\lesssim$, join operation $\lor$, and meet operation $\land$ such that for any $x, y, z \in \Omega$,
$$ x \lor (y \land z) = (x \lor y) \land (x \lor z) \text{ and }
x \land (y \lor z) = (x \land y) \lor (x \land z).$$
\end{definition}

We then construct a distributive lattice in our problem setting as follows: for each matched set $i$, let ${s}^0_{i}$ denote the exposure doses assigned to the units with outcome 0, arranged in increasing order, and ${s}^{1}_{i}$ the doses assigned to the units with outcome 1, arranged in increasing order. We then let ${s}_i=({s}^0_i,{s}^1_i)$. In a matched set $i$, there are $n_i!$ possible exposure doses assignments, but only ${n_i \choose m_i}$ possible ${s}_i$. For example, suppose we have a matched set with doses assignment $(0.1,0.44,0.54,0.73,0.8)$ and outcome vector $(0,0,0,1,1)$. This specific exposure dose assignment yields ${s}_{i}^{0}= (0.1,0.44,0.54)$ and ${s}_i^1= (0.73,0.8)$. Note that any exposure dose assignment that permutes $(0.1,0.44,0.54)$ between the first three units and $(0.73,0.8)$ between the last two units corresponds to the same ${s}_i$. Also, the last $m_i$ elements of ${s}_i$ determine the first $n_i-m_i$ elements, and vice versa. We define the partial order `$\lesssim$' over the collection of all possible exposure dose assignments ${s}$ (denoted as $\Omega$): ${s} \lesssim {s}^{*} \text{ if and only if } {s}^1_i \leq {s}^{1*}_i \text{ elementwise for all } i=1,\ldots, I,$ where ${s}=({s}_{1},\dots,{s}_{I})$ and ${s}^{*}=({s}_{1}^{*},\dots,{s}_{I}^{*})$ are the vectorized (concatenated) versions of ${s}_i$ and ${s}^{*}_i$ over all matched sets. It is straightforward to verify that `$\lesssim$' is a well-defined partial order assuming no ties of doses within each matched set. We then let $({s} \lor {s}^{*})_i^1 \equiv \max\{{s}_i^1,{s}^{1*}_i\}$ and $({s} \land {s}^{*})_i^1 \equiv \min\{{s}_i^1,{s}^{1*}_i\}$, where the $\max$ and $\min$ operations are taken elementwise. Then each $({s} \lor {s}^{*})_{i}$ and $({s} \land {s}^{*})_{i}$ are also well-defined as they are uniquely determined by $({s} \lor {s}^{*})_i^1$ and $({s} \land {s}^{*})_i^1$ respectively. We let ${s} \lor {s}^{*}=(({s} \lor {s}^{*})_{1},\dots, ({s} \lor {s}^{*})_I)$ and ${s} \land {s}^{*}=(({s} \land {s}^{*})_1,\dots, ({s} \land {s}^{*})_I)$ be the vectorized version of $({s} \lor {s}^{*})_i$ and that of $({s} \land {s}^{*})_i$ respectively. Also, we introduce the notation $\text{sum}({s})$ to be the sum of all the entries of the vector ${s}$. It is straightforward to verify that $(\Omega, \wedge, \vee)$ is a distributive lattice. Consider an arbitrary test statistic $T({Z},{R})$ that can be written as a function of ${S}=({S}_1, \dots, {S}_I)$, where ${S}_i \equiv (\{Z_{ij}: R_{ij} = 0\}, \{Z_{ij}: R_{ij} = 1\})$ (i.e., ${S}$ is a random variable taking values in $\Omega$). This is a mild requirement since it only requires $T({Z},{R})$ to be a function of the doses assigned to units with outcome 1 (and 0). Next, we formally define the conditionally isotonic and the stratum-wise isotonic test statistic $T({Z},{R})$ for ${Z}\in \mathbbm{R}^{N}$ and ${R}\in \{0,1\}^{N}$. 
\begin{definition}\label{def:isotonic}
A test statistic $T({Z},{R})$ is called conditionally isotonic if it can be written as $T({Z},{R}) = f({S})=f({S}_1,\ldots,{S}_I)$ for some function $f$ and after fixing (i.e., conditioning on any fixed value of) ${R}={r}\in \{0,1\}^{N}$, it holds that $f({s}) \leq f({s}^*)$ when ${s} \lesssim {s}^*$, where ${s},{s}^* \in \Omega$.
\end{definition}

An important subclass of conditionally isotonic test statistics is the class of stratum-wise isotonic test statistics, of which the formal definition is as follows:

\begin{definition}
\label{def:stratum-wise}
A test statistic $T$ is a stratum-wise isotonic test statistic if it can be written in the form $T = \sum_{i=1}^I q_{i}({Z}_{i}, {R}_{i})$ for some function $q_{i}$ of ${Z}_{i}=(Z_{i1},\dots, Z_{in_{i}})$ and ${R}_{i}=(R_{i1},\dots, R_{in_{i}})$, where each $q_{i}$ is isotonic in $S_{i}=(\{Z_{ij}: R_{ij} = 0\}, \{Z_{ij}: R_{ij} = 1\})$ for each fixed ${R}_{i}$.
\end{definition}
A wide range of commonly used test statistics with continuous exposures and binary outcomes satisfy the conditionally isotonic property in Definition~\ref{def:isotonic} and the stratum-wise property in Definition ~\ref{def:stratum-wise}. For example, the permutational t-test $Z R^{T}=\sum_{i=1}^{I}\sum_{j=1}^{n_{i}}Z_{ij}R_{ij}$ is stratum-wise isotonic, which reduces to the Mantel-Haenszel test when the exposures are binary. If researchers plan to use a test with dichotomized exposures, then ${Z}_{>c}{R}^{T}=\sum_{i=1}^{I}\sum_{j=1}^{n_{i}}\mathbbm{1}\{Z_{ij}>c\}R_{ij}$ is a widely used choice and is also stratum-wise isotonic, where $c$ is a scientifically meaningful threshold for dichotomizing a continuous dose. Generally, for any monotonically non-decreasing functions $m_{i1}$ and $m_2$, the test statistic $\sum_{i=1}^{I}\sum_{j=1}^{n_{i}}m_{i1}(Z_{ij})m_2(R_{ij})$ is stratum-wise isotonic. Here the function $m_{i1}$ can depend on the matched set index $i$, such as the rank functions of exposure doses within or across matched sets (\citealp{rosenbaum1989sensitivity, gastwirth1998dual, bo_continuous}). We then introduce the Holley inequality (\citealp{holley1974remarks}) in probabilistic lattice theory, which is a key tool for proving our theoretical results in this section.
\begin{lemma}
\label{useful1} 
(Holley Inequality) Let $f$ be an isotonic function on a finite distributive lattice $L$ (i.e., for any ${s}, {s}^{*}\in L$, we have $f({s}) \leq f({s}^{*})$ whenever ${s} \lesssim {s}^{*}$). If ${A}$ and ${B}$ are two random variables satisfying 
\begin{equation*}
    \text{pr}({A}={s}\vee {s}^{*})\times \text{pr}({B}={s}\wedge {s}^{*}) \geq \text{pr}({A}={s})\times \text{pr}({B}= {s}^{*}) \quad \text{for all ${s}, {s}^{*} \in L$,}
\end{equation*}
then we have $E\{f({A})\} \geq E\{f({B})\}$.
\end{lemma}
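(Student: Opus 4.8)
The plan is to reduce the general isotonic function $f$ to the special case of indicators of up-sets, and then to settle the up-set case by a correlation inequality (the Ahlswede--Daykin ``four functions'' theorem), with Holley's coupling available as a self-contained fallback. Throughout, write $\mu_A(\cdot)=\text{pr}({A}=\cdot)$ and $\mu_B(\cdot)=\text{pr}({B}=\cdot)$ for the two probability mass functions on the finite lattice $L$.

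First, since $L$ is finite, $f$ takes finitely many values $v_1<v_2<\cdots<v_k$. For $2\le j\le k$ set $U_j:=\{{s}\in L : f({s})\ge v_j\}$; each $U_j$ is an \emph{up-set}, because ${s}\in U_j$ and ${s}\lesssim {s}^{*}$ force $f({s}^{*})\ge f({s})\ge v_j$. One then has the decomposition $f = v_1 + \sum_{j=2}^{k}(v_j-v_{j-1})\,\mathbbm{1}_{U_j}$ with nonnegative weights $v_j-v_{j-1}$. Because $\mu_A$ and $\mu_B$ are probability mass functions, the constant $v_1$ cancels in the difference of the two expectations, giving $E\{f({A})\}-E\{f({B})\} = \sum_{j=2}^{k}(v_j-v_{j-1})\big(\mu_A(U_j)-\mu_B(U_j)\big)$. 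Hence it suffices to prove $\mu_A(U)\ge \mu_B(U)$ for every up-set $U\subseteq L$, equivalently $\mu_A(D)\le \mu_B(D)$ for every down-set $D:=L\setminus U$ (which uses that $D$ is downward closed).

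For the up-set step, invoke the Ahlswede--Daykin inequality: if $\alpha,\beta,\gamma,\delta:L\to[0,\infty)$ satisfy $\alpha(x)\beta(y)\le \gamma(x\vee y)\,\delta(x\wedge y)$ for all $x,y\in L$, then $\big(\sum_{x\in L}\alpha(x)\big)\big(\sum_{y\in L}\beta(y)\big)\le \big(\sum_{x\in L}\gamma(x)\big)\big(\sum_{y\in L}\delta(y)\big)$. Apply it with $\alpha = \mu_A\,\mathbbm{1}_D$, $\beta = \mu_B$, $\gamma = \mu_A$, and $\delta = \mu_B\,\mathbbm{1}_D$. The pointwise condition is vacuous when $x\notin D$; when $x\in D$ we have $x\wedge y\le x$ and hence $x\wedge y\in D$, so the condition becomes $\mu_A(x)\mu_B(y)\le \mu_A(x\vee y)\mu_B(x\wedge y)$, which is exactly the hypothesis $\text{pr}({A}={s}\vee {s}^{*})\times \text{pr}({B}={s}\wedge {s}^{*}) \ge \text{pr}({A}={s})\times \text{pr}({B}={s}^{*})$ of the lemma. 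The conclusion of the four-functions theorem then reads $\mu_A(D)\cdot 1\le 1\cdot\mu_B(D)$, i.e. $\mu_A(D)\le \mu_B(D)$, as required.

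If a self-contained argument is preferred, the last step can instead come from Holley's original coupling: one builds a continuous-time Markov chain on the ordered set $\{(x,y)\in L\times L : y\lesssim x\}$, with transition rates defined through the operations $\vee$, $\wedge$ and the Holley ratios, whose unique equilibrium law is supported on $\{y\lesssim x\}$ and has $X$-marginal $\mu_A$ and $Y$-marginal $\mu_B$; a draw $(X,Y)$ from this equilibrium satisfies $Y\lesssim X$ almost surely, so $E\{f({B})\} = E\{f(Y)\}\le E\{f(X)\} = E\{f({A})\}$. I expect the main obstacle to lie in whichever route is taken: in the first route it is the short but non-obvious choice of the four functions $(\alpha,\beta,\gamma,\delta)$, after which the verification is a one-liner; in the second it is exhibiting transition rates that simultaneously preserve the partial order and reproduce both prescribed marginals, and then checking the detailed-balance equations. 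Since the reduction to up-sets is elementary and the four-functions deduction is three lines, I would present the first route and cite \citet{holley1974remarks} (or Ahlswede--Daykin) for the external ingredient.
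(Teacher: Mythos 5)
Your argument is correct, but note that the paper itself offers no proof of this lemma at all: it is imported verbatim as a classical result of probabilistic lattice theory with a citation to \citet{holley1974remarks}, and the supplementary material proves only the statements built on top of it (Theorem~\ref{holley_condition}, etc.). So there is no ``paper's proof'' to match; what you have supplied is the standard derivation of Holley's inequality from the Ahlswede--Daykin four-functions theorem, and each step checks out. The layer-cake decomposition $f = v_1 + \sum_{j\ge 2}(v_j - v_{j-1})\mathbbm{1}_{U_j}$ correctly reduces the claim to $\mu_A(U)\ge\mu_B(U)$ for up-sets $U$ (the constant $v_1$ cancels because both measures are probability measures), and your choice $\alpha=\mu_A\mathbbm{1}_D$, $\beta=\mu_B$, $\gamma=\mu_A$, $\delta=\mu_B\mathbbm{1}_D$ makes the four-functions hypothesis coincide exactly with the lemma's hypothesis (the case $x\notin D$ is vacuous, and $x\in D$ forces $x\wedge y\in D$ because $D$ is a down-set), yielding $\mu_A(D)\le\mu_B(D)$ as needed. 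The one caveat is that your proof is not self-contained: it trades one black box (Holley) for another (Ahlswede--Daykin, or alternatively the monotone-coupling Markov chain you sketch as a fallback), and a fully from-scratch argument would still require proving one of those on a general finite distributive lattice, typically by induction through Birkhoff's representation. For the purposes of this paper that is immaterial, since the authors themselves treat the lemma as an external ingredient.
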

Since we have established that $(\Omega, \wedge, \vee)$ is a distributive lattice, to find the worst-case $p$-value (\ref{eqn: worst-case p-value}), we aim to apply the Holley inequality stated in Lemma~\ref{useful1} to show that we can find an exact stochastically dominating random variable $T^{+}$ for the test statistic $T({Z},{R})=f({S})$ (i.e., $\text{pr}(T^{+}\geq t)=\max_{{u}\in [0,1]^{N} }\text{pr}(T\geq t \mid \mathcal{F}, \mathcal{Z})\geq \text{pr}(T\geq t \mid \mathcal{F}, \mathcal{Z})$ for all $t$) under the sharp null and the Rosenbaum sensitivity analysis model (\ref{eqn: Rosenbaum exposure dose model}). Define ${u}^+$ as the vector that takes the value of 1 if the corresponding unit's outcome is 1 and 0 if the corresponding unit's outcome is 0 (i.e., ${u}^{+}=(R_{11}, \dots, R_{In_{I}})$). Under model (\ref{eqn: dose assignment after matching}) and some fixed value $\gamma$ of the sensitivity parameter, let ${S}_{{u}}$ and ${S}_{{u}^{+}}$ denote the distribution of ${S}$ when unmeasured confounders equal ${u}$ and that when unmeasured confounders equal ${u}^{+}$, respectively. Theorem~\ref{holley_condition} claims that the random variables ${S}_{{u}}$ and ${S}_{{u}^{+}}$ satisfy the condition in the Holley inequality, and Corollary~\ref{stoch_dom} solves the worst-case $p$-value problem. All the detailed proofs in this paper can be found in the supplementary material. 
\begin{theorem}\label{holley_condition}
Let ${u}^{+}$ take the value $(R_{11}, \dots, R_{In_{I}})\in \{0,1\}^{N}$. For any ${u}\in [0,1]^{N}$, under the sharp null $H_{0}$, we have 
\begin{equation*}
    \text{pr}({S}_{{u}^{+}}={s}\vee {s}^{*})\times \text{pr}({S}_{{u}}={s}\wedge {s}^{*}) \geq \text{pr}({S}_{{u}^{+}}={s})\times \text{pr}({S}_{{u}}= {s}^{*}) \quad \text{for all ${s}, {s}^{*} \in \Omega$.}
\end{equation*}
\end{theorem}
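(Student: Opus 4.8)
The plan is to verify the Holley-type inequality directly by exploiting the product structure of the dose-assignment distribution (\ref{eqn: dose assignment after matching}) across independent matched sets, reducing the global inequality to a set-by-set inequality. Since $\Omega = \Omega_1 \times \cdots \times \Omega_I$ and both the join and meet operate coordinatewise in the matched-set index, and since $\text{pr}({S}_{{u}} = {s}) = \prod_i \text{pr}({S}_{{u}_i} = {s}_i)$, it suffices to show for each $i$ that
\begin{equation*}
    \text{pr}({S}_{{u}_i^{+}} = {s}_i \vee {s}_i^{*}) \cdot \text{pr}({S}_{{u}_i} = {s}_i \wedge {s}_i^{*}) \;\geq\; \text{pr}({S}_{{u}_i^{+}} = {s}_i) \cdot \text{pr}({S}_{{u}_i} = {s}_i^{*}).
\end{equation*}
Fix a matched set $i$ and drop the subscript. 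Recall that under $H_0$ the observed outcomes $R_{ij}$ equal the (constant) potential outcomes and so the partition of units into ``outcome-0'' and ``outcome-1'' classes is fixed; a value ${s} \in \Omega$ corresponds to the orbit of dose assignments whose outcome-1 doses form the set encoded by ${s}^1$, and there are exactly $m!\,(n-m)!$ permutations in each such orbit. Hence $\text{pr}({S}_{{u}} = {s})$ is $m!(n-m)!$ times $\exp\{\gamma \langle \text{(any representative assignment)}, {u}\rangle\}$ divided by the common normalizing constant $\sum_{\pi} \exp\{\gamma \langle {z}_\pi, {u}\rangle\}$.

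The first key step is to choose representatives wisely. For a given ${s}$, write its outcome-1 dose multiset as $\{a_1 < \cdots < a_m\}$ and its outcome-0 dose multiset as $\{b_1 < \cdots < b_{n-m}\}$. Since ${u}^{+}$ puts weight $1$ on exactly the outcome-1 units and $0$ on the outcome-0 units, the inner product $\langle {z}_\pi, {u}^{+}\rangle$ equals $a_1 + \cdots + a_m$ for \emph{every} representative of the orbit ${s}$ — it depends only on ${s}^1$. Thus the numerator of $\text{pr}({S}_{{u}^{+}} = {s})$ is $m!(n-m)!\exp\{\gamma\,\text{sum}({s}^1)\}$, and crucially $\text{sum}\big(({s}\vee{s}^{*})^1\big) + \text{sum}\big(({s}\wedge{s}^{*})^1\big) = \text{sum}({s}^1) + \text{sum}({s}^{*1})$ because the elementwise max and min of two real vectors preserve the coordinatewise sum. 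For the ${u}$-factor we need a matching lower bound: I will show that for a suitable choice of representative assignment $z({s})$ for each orbit, $\langle z({s}\wedge{s}^{*}), {u}\rangle \geq \langle z({s}^{*}), {u}\rangle + \langle z({s}), {u}\rangle - \langle z({s}\vee{s}^{*}), {u}\rangle$, i.e., the ${u}$-exponent is ``supermodular'' along $\wedge/\vee$. Concretely, order the units in the matched set so that ${u}$ is nondecreasing; then for any target dose set with sorted doses, assign the smallest available doses to the smallest-$u$ units (a rearrangement-type choice). With this canonical representative, $\langle z({s}), {u}\rangle$ becomes an explicit sum of products of sorted sequences, and the desired inequality follows from the classical rearrangement/Hardy–Littlewood–Pólya majorization argument applied coordinatewise: passing from $({s}, {s}^{*})$ to $({s}\vee{s}^{*}, {s}\wedge{s}^{*})$ sorts the pair of dose vectors, which can only increase the correlation with the common sorted ${u}$-vector, hence the supermodularity of the exponent.

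Combining the two pieces: the normalizing constants cancel (two copies of the ${u}^{+}$-constant on the left match two copies on the right after we also cancel the ${u}$-constants — note both sides have exactly one ${u}^{+}$-distribution factor and one ${u}$-distribution factor, so all four normalizers cancel), the combinatorial factors $m!(n-m)!$ cancel, and the inequality reduces to
\begin{equation*}
    \gamma\big(\text{sum}(({s}\vee{s}^{*})^1) + \langle z({s}\wedge{s}^{*}), {u}\rangle\big) \;\geq\; \gamma\big(\text{sum}({s}^1) + \langle z({s}^{*}), {u}\rangle\big),
\end{equation*}
wait — more precisely, after taking logs the left-hand exponent is $\gamma\,\text{sum}(({s}\vee{s}^{*})^1) + \gamma\langle z({s}\wedge{s}^{*}), {u}\rangle$ and the right-hand exponent is $\gamma\,\text{sum}({s}^1) + \gamma\langle z({s}^{*}), {u}\rangle$, and the claim is that the former is at least the latter. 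Using $\text{sum}(({s}\vee{s}^{*})^1) = \text{sum}({s}^1) + \text{sum}({s}^{*1}) - \text{sum}(({s}\wedge{s}^{*})^1)$, this is exactly the supermodularity statement $\langle z({s}\wedge{s}^{*}), {u}\rangle - \langle z({s}^{*}), {u}\rangle \geq \text{sum}(({s}\wedge{s}^{*})^1) - \text{sum}({s}^{*1})$, which after adding $\langle z({s}\vee{s}^{*}),{u}\rangle$-type bookkeeping on both sides is handled by the rearrangement inequality of the previous paragraph together with $\gamma > 0$.

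The main obstacle I anticipate is the bookkeeping in the ${u}$-exponent step: making the ``canonical representative'' choice precise and showing it simultaneously works for ${s}$, ${s}^{*}$, ${s}\vee{s}^{*}$, and ${s}\wedge{s}^{*}$ so that the four inner products line up into a clean majorization inequality. The subtlety is that the orbit of ${s}\wedge{s}^{*}$ contains the assignment realizing the coordinatewise-minimum outcome-1 dose vector, but one must check this assignment is genuinely achievable (the dose multisets are consistent) and that the chosen representatives are compatible across the four orbits — this is where the distributive-lattice structure established before the theorem, and the no-ties assumption, do the real work. Everything else (cancellation of normalizers and combinatorial constants, reduction to a single matched set) is routine.
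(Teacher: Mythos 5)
Your skeleton matches the paper's: reduce to a single matched set by independence, observe that under $u^{+}$ every assignment in the orbit of a given $s$ has the same exponent $\gamma\,\mathrm{sum}(s^{1})$ so that factor collapses to $m_i!\,(n_i-m_i)!\exp\{\gamma\,\mathrm{sum}(s^{1})\}$, and use the identity $\mathrm{sum}((s\vee s^{*})^{1})+\mathrm{sum}((s\wedge s^{*})^{1})=\mathrm{sum}(s^{1})+\mathrm{sum}(s^{*1})$. However, the central step --- comparing the two factors that carry the general $u$ --- has a genuine gap. For general $u\in[0,1]^{N}$ the orbit probability $\mathrm{pr}(S_{u}=s^{*})$ is a \emph{sum} of $m_i!\,(n_i-m_i)!$ distinct exponentials (one per pair of within-class permutations $(\pi_i^{0},\pi_i^{1})$), not a single exponential times a combinatorial factor, so your plan of picking one ``canonical representative'' per orbit does not compute the relevant probability; moreover your rearrangement-type representative would reassign doses across the outcome-0/outcome-1 classes, which is not permitted within an orbit. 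The correct repair is to pair the terms of the sum for $s\wedge s^{*}$ with the terms of the sum for $s^{*}$ by the \emph{same} permutation pair $(\pi_i^{0},\pi_i^{1})$ and prove the inequality term by term, which is what the paper does.

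Second, the decisive inequality is not a supermodularity or Hardy--Littlewood--P\'olya statement in $u$ (indeed, your displayed claim $\langle z(s\wedge s^{*}),u\rangle\geq\langle z(s^{*}),u\rangle+\langle z(s),u\rangle-\langle z(s\vee s^{*}),u\rangle$ is the wrong target, since the $s$ and $s\vee s^{*}$ factors carry $u^{+}$, not $u$). What is actually needed, for each fixed permutation pair, is: (a) on the outcome-1 coordinates, $s^{*1}\geq(s\wedge s^{*})^{1}$ elementwise together with $u\in[0,1]^{N}$ gives $(s^{*})^{1}_{\pi^{1}}u_{1}^{T}-(s\wedge s^{*})^{1}_{\pi^{1}}u_{1}^{T}\leq\mathrm{sum}(s^{*1})-\mathrm{sum}((s\wedge s^{*})^{1})$, which combines with the sum identity; and (b) on the outcome-0 coordinates, the elementwise reversal $(s\wedge s^{*})^{0}\geq s^{*0}$, so that $(s\wedge s^{*})^{0}_{\pi^{0}}u_{0}^{T}\geq(s^{*})^{0}_{\pi^{0}}u_{0}^{T}$. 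Fact (b) --- that taking the elementwise minimum of the outcome-1 dose vectors forces the complementary outcome-0 dose vectors to increase elementwise --- is a nontrivial combinatorial lemma (proved by induction in the paper) and is exactly where the lattice structure ``does the real work''; your appeal to rearrangement/majorization does not supply it. You also never invoke the $[0,1]$ bound on the entries of $u$, without which step (a) fails.
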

\begin{corollary}\label{stoch_dom}
Suppose that ${Z}\in \mathbbm{R}^{N}$ and ${R}\in \{0,1\}^{N}$. For any test statistic $T({Z},{R})=f({S})$ that is conditionally isotonic, under the sharp null $H_{0}$, for any ${u}\in [0,1]^{N}$ and any $t$, we have $\text{pr}(f({S}_{{u}}) \geq t \mid \mathcal{F}, \mathcal{Z}) \leq \text{pr}(f({S}_{{u}^{+}}) \geq t).$ That is, for any $t$, we have 
\begin{equation}\label{eqn: worst-case p-value solved}
\max_{{u}\in [0,1]^{N} }\text{pr}(T\geq t \mid \mathcal{F}, \mathcal{Z})=\sum_{{z}\in \mathcal{Z}} \Big[ \mathbbm{1}\{T({Z}={z}, {R})\geq t\}\times \prod_{i=1}^{I} \frac{\exp\{\gamma({z}_{i}{u}_{i}^{+T})\}}{\sum_{\pi_{i}\in S_{n_{i}}}\exp\{\gamma({z}_{i\pi_{i}}{u}_{i}^{+T})\}}\Big],
\end{equation}
where ${u}_{i}^{+}=(R_{i1}, \dots, R_{in_{i}})$ for each $i$.
\end{corollary}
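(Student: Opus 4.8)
The plan is to obtain Corollary~\ref{stoch_dom} as a short consequence of Theorem~\ref{holley_condition} and the Holley inequality (Lemma~\ref{useful1}), the only additional ingredient being the elementary observation that an upper level set of a conditionally isotonic statistic is an up-set of the lattice $\Omega$. Throughout I would use that under Fisher's sharp null $H_0$ the observed outcome vector $R$ is a fixed constant (it does not vary with the dose assignment), so conditioning on $R={r}$ is vacuous here and $T({Z},{R})=f({S})$ may be treated as an honest function $f$ of the single lattice-valued random variable ${S}$, with the outcome coordinates held fixed.

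First I would fix an arbitrary threshold $t$ and define $g_t:\Omega\to\{0,1\}$ by $g_t({s})=\mathbbm{1}\{f({s})\geq t\}$. I claim $g_t$ is isotonic on $\Omega$: if ${s}\lesssim {s}^{*}$, then conditional isotonicity of $f$ gives $f({s})\leq f({s}^{*})$, so $g_t({s})=1$ forces $f({s}^{*})\geq t$ and hence $g_t({s}^{*})=1$; when $g_t({s})=0$ the inequality $g_t({s})\leq g_t({s}^{*})$ is trivial. Thus $g_t$ is an isotonic function on the finite distributive lattice $(\Omega,\wedge,\vee)$ established earlier (finiteness is immediate since $|\mathcal{Z}|=n_1!\cdots n_I!<\infty$, so ${S}$ takes finitely many values).

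Next I would invoke Theorem~\ref{holley_condition}, which says precisely that the pair $({A},{B})=({S}_{{u}^{+}},{S}_{{u}})$ satisfies the hypothesis of the Holley inequality on $\Omega$ for every ${u}\in[0,1]^{N}$. Applying Lemma~\ref{useful1} to the isotonic function $g_t$ then yields $E\{g_t({S}_{{u}^{+}})\}\geq E\{g_t({S}_{{u}})\}$, i.e.\ $\text{pr}(f({S}_{{u}^{+}})\geq t)\geq \text{pr}(f({S}_{{u}})\geq t\mid \mathcal{F},\mathcal{Z})$. Since $t$ and ${u}$ were arbitrary, this gives the stochastic-dominance claim, so ${S}_{{u}^{+}}$ is an exact stochastically dominating random variable for $f({S})$. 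To recover the closed form (\ref{eqn: worst-case p-value solved}), I would note that once $R={r}$ is fixed the map ${z}\mapsto {s}({z})$ is deterministic with $f({s}({z}))=T({Z}={z},{R})$, so that $\text{pr}(f({S}_{{u}^{+}})\geq t)=\sum_{{s}:f({s})\geq t}\text{pr}({S}_{{u}^{+}}={s})=\sum_{{z}\in\mathcal{Z}}\mathbbm{1}\{T({Z}={z},{R})\geq t\}\,\text{pr}({Z}={z}\mid\mathcal{F},\mathcal{Z})$; substituting ${u}={u}^{+}=(R_{11},\dots,R_{In_{I}})$ into the dose-assignment probability (\ref{eqn: dose assignment after matching}) produces the product on the right-hand side of (\ref{eqn: worst-case p-value solved}). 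Finally, because ${u}^{+}\in\{0,1\}^{N}\subseteq[0,1]^{N}$ is feasible and, by the dominance just proved, no other ${u}$ can exceed it, the maximum in (\ref{eqn: worst-case p-value}) is attained at ${u}^{+}$, giving the displayed equality.

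I do not expect a genuine obstacle here: all of the real work is carried by Theorem~\ref{holley_condition}, and the corollary is essentially the statement that level sets of isotonic functions are up-sets plus one application of Holley. The only points requiring care are bookkeeping: that $H_0$ makes $R$ non-random so $f$ is well defined as a function of ${S}$ alone, and that distinct dose assignments ${z}$ collapsing to the same ${s}$ causes no double counting because one sums probabilities over ${z}$ and the indicator $\mathbbm{1}\{T({Z}={z},{R})\geq t\}$ is constant on each fiber.
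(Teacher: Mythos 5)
Your proposal is correct and follows essentially the same route as the paper's own proof: both reduce the corollary to one application of the Holley inequality (Lemma~\ref{useful1}) with Theorem~\ref{holley_condition} supplying the hypothesis, after observing that $\mathbbm{1}\{f({s})\geq t\}$ is isotonic whenever $f$ is. Your additional bookkeeping about $R$ being fixed under $H_0$ and about fibers of the map ${z}\mapsto{s}({z})$ is consistent with, and slightly more explicit than, the paper's one-line argument.
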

\begin{proof}
  This is an immediate consequence of the Holley inequality (stated in Lemma~\ref{useful1}) and Theorem~\ref{holley_condition} after noticing that $\text{pr}(T({Z}, {R}) \geq t\mid \mathcal{F}, \mathcal{Z}) =\text{pr}(f({S}) \geq t\mid \mathcal{F}, \mathcal{Z}) = E[\mathbbm{1}\{f({S}) \geq t\}\mid \mathcal{F}, \mathcal{Z}]$ and that for any fixed $t$, $\mathbbm{1}\{f({S}) \geq t\}$ is isotonic in ${S}$ if $f({S})$ is isotonic.
\end{proof}

Corollary~\ref{stoch_dom} gives a way of conducting a finite-population-exact sensitivity analysis for testing the sharp null $H_{0}$ with continuous exposures and binary outcomes by giving a closed form for the finite-population-exact worst-case $p$-value, which can be approximated by Monte Carlo simulation. When the sample size is large, an alternative (and more computationally efficient) way of approximating the worst-case $p$-value (\ref{eqn: worst-case p-value solved}) is by normal approximation, which will also be utilized in later sections and is stated as follows:
\begin{theorem}\label{thm: normal approximation} 
Let $T({Z},{R})= \sum_{i=1}^I q_{i}({Z}_{i}, {R}_{i})$ be a stratum-wise isotonic test statistic where ${Z}\in \mathbbm{R}^{N}$ and ${R}\in \{0,1\}^{N}$. Under Condition~\ref{normal_reg} in Appendix A.2 in the online supplementary material (a mild regularity condition; see the corresponding discussion in Appendix A.2 for details) and the sharp null $H_{0}$, we have
    \begin{equation*}
    \max_{{u}\in [0,1]^{N} }\text{pr}(T\geq t \mid \mathcal{F}, \mathcal{Z}) \doteq 1-\Phi\big[\{t-E_{{u}^+}(T)\} / \{\text{var}_{{u}^+}(T)\}^{1/2}\big] \ \text{as $I\rightarrow \infty$},
\end{equation*}
where $\doteq$ means asymptotically equal, $\Phi$ is the distribution function of standard normal distribution, the value of ${u}^+={R}$, and $E_{{u}^+}$ and $\text{var}_{{u}^+}$ denote the expectation and variance of $T$ under unmeasured confounding ${u}^+$ respectively. Specifically, let $s_{i}=\sum_{\Tilde{\pi}_{i}}\exp(\gamma ({Z}_{i\Tilde{\pi}_{i}}{R}_i^{T}))$, we have $$E_{{u}^{+}}(T)= \sum_{i=1}^I \sum_{\pi_{i}} s_{i}^{-1}\big\{\exp(\gamma ({Z}_{i\pi_{i}}{R}_i^{T}))\times q_{i}({Z}_{i\pi_{i}},{R}_{i})\big\},$$ $$\text{var}_{{u}^{+}}(T)=\sum_{i=1}^I \sum_{\pi_{i}} s_{i}^{-1}\big\{\exp(\gamma ({Z}_{i\pi_{i}}{R}_i^{T}))\times q_{i}^{2}({Z}_{i\pi_{i}},{R}_{i})\big\}-\sum_{i=1}^I \Big[\sum_{\pi_{i}} s_{i}^{-1}\big\{\exp(\gamma ({Z}_{i\pi_{i}}{R}_i^{T}))\times q_{i}({Z}_{i\pi_{i}},{R}_{i})\big\}\Big]^{2}.$$

\end{theorem}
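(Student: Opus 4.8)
The plan is to use Corollary~\ref{stoch_dom} to convert the worst-case $p$-value into a genuine tail probability, and then apply a central limit theorem for sums of independent, not-identically-distributed, bounded random variables. First, since a stratum-wise isotonic statistic is in particular conditionally isotonic, Corollary~\ref{stoch_dom} tells us that for every $t$ the worst-case $p$-value $\max_{{u}\in[0,1]^N}\text{pr}(T\geq t\mid\mathcal{F},\mathcal{Z})$ equals $\text{pr}\{f({S}_{{u}^+})\geq t\}$, i.e.\ the tail probability of $T=\sum_{i=1}^I q_i({Z}_i,{R}_i)$ computed under the dose-assignment law (\ref{eqn: dose assignment after matching}) with ${u}={u}^+={R}$. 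Under that law the matched sets are mutually independent, so $T$ is a sum of $I$ independent summands $q_i({Z}_i,{R}_i)$; and because ${Z}_i$ ranges over the $n_i!$ permutations of the fixed dose vector ${z}_i$ while ${R}_i$ is fixed, each summand is a bounded random variable.

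Second, I would record the first two moments. Writing $s_i=\sum_{\widetilde\pi_i}\exp\{\gamma({Z}_{i\widetilde\pi_i}{R}_i^T)\}$ and $p_{i\pi_i}=s_i^{-1}\exp\{\gamma({Z}_{i\pi_i}{R}_i^T)\}$ for the probability of assignment $\pi_i$ under ${u}^+$, a direct computation gives $E_{{u}^+}\{q_i({Z}_i,{R}_i)\}=\sum_{\pi_i}p_{i\pi_i}q_i({Z}_{i\pi_i},{R}_i)$ and, by independence across strata, $\text{var}_{{u}^+}(T)=\sum_{i=1}^I\bigl[\sum_{\pi_i}p_{i\pi_i}q_i^2({Z}_{i\pi_i},{R}_i)-\{\sum_{\pi_i}p_{i\pi_i}q_i({Z}_{i\pi_i},{R}_i)\}^2\bigr]$, which are exactly the displayed expressions. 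This step is routine bookkeeping.

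Third, I would invoke the Lindeberg--Feller central limit theorem for the triangular array $\{q_i({Z}_i,{R}_i)-E_{{u}^+}q_i:1\leq i\leq I\}$, centered and rescaled by $\{\text{var}_{{u}^+}(T)\}^{1/2}$. Since each centered summand is uniformly bounded (by a quantity controlled by the doses in set $i$ and the form of $q_i$), it suffices to verify a Lyapunov-type bound, e.g.\ $\sum_{i=1}^I E_{{u}^+}\bigl|q_i({Z}_i,{R}_i)-E_{{u}^+}q_i\bigr|^{2+\delta}\big/\{\text{var}_{{u}^+}(T)\}^{1+\delta/2}\to 0$ for some $\delta>0$; this is precisely what Condition~\ref{normal_reg} is designed to guarantee, as it bounds the magnitudes of the doses and prevents $\text{var}_{{u}^+}(T)$ from growing too slowly relative to $I$. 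Granting it, $\{T-E_{{u}^+}(T)\}/\{\text{var}_{{u}^+}(T)\}^{1/2}$ converges in distribution to a standard normal; since the limiting distribution function $\Phi$ is continuous, P\'olya's theorem upgrades this to uniform convergence of the distribution functions, so evaluating them at the possibly $I$-dependent standardized threshold $\{t-E_{{u}^+}(T)\}/\{\text{var}_{{u}^+}(T)\}^{1/2}$ yields $\text{pr}(T\geq t)\doteq 1-\Phi\bigl[\{t-E_{{u}^+}(T)\}/\{\text{var}_{{u}^+}(T)\}^{1/2}\bigr]$, which with the first step is the claim.

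The main obstacle is the third step: one must set up the triangular-array CLT carefully because the dose vectors — and hence the laws of the summands and the per-stratum variances — vary with $I$, and then confirm that the regularity Condition~\ref{normal_reg} genuinely delivers the Lyapunov (equivalently Lindeberg) bound together with the non-degeneracy of $\text{var}_{{u}^+}(T)$. The reduction via Corollary~\ref{stoch_dom} and the moment identities are essentially immediate.
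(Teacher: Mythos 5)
Your proposal is correct and follows essentially the same route as the paper: reduce the worst-case $p$-value to the tail probability under ${u}^{+}={R}$ via the stochastic-dominance result, record the moments using independence across matched sets, and apply the Lindeberg--Feller central limit theorem to the triangular array of bounded, independent stratum contributions. The one step you defer---verifying that Condition~\ref{normal_reg} actually delivers the Lindeberg/Lyapunov bound and the non-degeneracy of $\text{var}_{{u}^{+}}(T)$---is precisely where the paper's technical work lies: it lower-bounds each per-stratum variance by $2l_{i}^{3}W_{i}^{2}$ through a uniform bound on the ratio of any two assignment probabilities within a matched set, after which $\max_{i}Y_{i}^{2}\to 0$ and the Lindeberg condition follows.
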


\subsection{Asymptotic evaluation of the power of sensitivity analysis via design sensitivity}\label{design_power}

In Section~\ref{subsec: Finite-population-exact sensitivity analysis via probabilistic lattice theory}, for a wide range of commonly used tests, we established the corresponding valid sensitivity analysis approach. The next important question is: how to evaluate the performance of various tests in sensitivity analysis? In an analysis that assumes no unmeasured confounding, we look at the power of a test, which is the probability of successfully rejecting the null hypothesis under some alternative. In parallel, the \textit{power of a sensitivity analysis} is the probability that a sensitivity analysis will successfully reject the null under some alternative for any possible allocations of the unmeasured confounders given some prespecified sensitivity parameter (e.g., the $\Gamma=\exp(\gamma)$ in Rosenbaum's model (\ref{eqn: Rosenbaum exposure dose model})). Specifically, for each fixed $\Gamma$, the power of an $\alpha$ level sensitivity analysis using some test statistic $T$ is calculated as the probability that the worst-case $p$-value (\ref{eqn: worst-case p-value}) is below $\alpha$. Following Rosenbaum's power of sensitivity analysis framework (\citealp{rosenbaum2004design, rosenbaum2020design}), we consider power under the alternative of a ``favorable situation" in which there is a treatment effect and no hidden bias. That is, power of a sensitivity analysis measures a test's ability to detect an actual treatment effect when taking potential unmeasured confounding into account (e.g., conducting a sensitivity analysis). Under some regularity assumptions, there typically exists a threshold called the design sensitivity, such that as the sample size increases, the power of a sensitivity analysis goes to one if the analysis is performed with $\Gamma < \widetilde{\Gamma}$, and the power of a sensitivity analysis goes to zero if performed with $\Gamma > \widetilde{\Gamma}$. That is, the design sensitivity $\widetilde{\Gamma}$ is the sharp transition of consistency of a test in a sensitivity analysis (\citealp{rosenbaum2004design, rosenbaum2020design}). It is a powerful tool for asymptotically comparing two test statistics under each data-generating process -- the test with a larger design sensitivity $\widetilde{\Gamma}$ is asymptotically more robust to unmeasured confounding.

To facilitate evaluating the asymptotic performances of various tests in sensitivity analysis, we derive a design sensitivity formula that works for a large subclass of stratum-wise isotonic test statistics. The detailed design sensitivity formula is stated in the following Theorem~\ref{thm: design sensitivity}.

\begin{theorem}\label{thm: design sensitivity}
Consider a general class of stratum-wise isotonic test statistics with the form $T=\sum_{i=1}^{I}q_{i}({Z}_i,{R}_i)$ with $q_{i}({Z}_i,{R}_i)=\sum_{j=1}^{n_i} m(Z_{ij}) \times R_{ij}$, where $m$ is some bounded and non-decreasing function. Suppose that $(Z_{i1}, \ldots,  Z_{in_i}, R_{i1}, \ldots,  R_{in_i}, n_i)$ are independent and identically distributed realizations from some multivariate distribution $F$, where $2 \leq n_i \leq M$ is an integer random variable, $Z_{ij}\in \mathbbm{R}$, and $R_{ij} \in \{0,1\}$. Suppose that, conditional on $n_i$, the $(Z_{i1},R_{i1}), \ldots, (Z_{in_i},R_{i_{n_i}})$ are pairwise independent and identically distributed, and $0 < \text{cor}(m(Z_{ij}),R_{ij}) < 1$. Then, the following equation
\begin{equation}\label{eqn: design sensitivity}
E\left[\sum_{\pi_{i}\in S_{n_{i}} }\Big \{ \frac{\exp(\gamma ({Z}_{i\pi_{i}}{R}_i^{T}))}{\sum_{\widetilde{\pi}_{i}\in S_{n_{i}} }\exp(\gamma ({Z}_{i\widetilde{\pi}_{i}}{R}_i^{T}))}\times q_{i}({Z}_{i\pi_{i}}, {R}_{i}) \Big\} \right]=E\left[q_{i}({Z}_i,{R}_i) \right]
\end{equation}
has a unique solution $\widetilde{\gamma} > 0$. Then $\widetilde{\Gamma} = \exp(\widetilde{\gamma})$ is the design sensitivity. That is, if we let $\Psi_{\Gamma, I}$ denote the power of sensitivity analysis conducted under the sensitivity parameter $\Gamma=\exp(\gamma)$ and the number of matched sets $I$, then we have: $\Psi_{\Gamma, I}\rightarrow 1$ if $\Gamma < \widetilde{\Gamma}$ and $\Psi_{\Gamma, I}\rightarrow 0$ if $\Gamma > \widetilde{\Gamma}$.
\end{theorem}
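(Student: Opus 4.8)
The plan is to follow Rosenbaum's design-sensitivity template: use the normal approximation of Theorem~\ref{thm: normal approximation} to turn the power of the sensitivity analysis into a law of large numbers for three sums of i.i.d.\ quantities, and then study how the ``worst-case mean'' of the test statistic varies with the sensitivity parameter. In the favorable situation the doses are assigned uniformly within each matched set and there is a genuine effect, so $T=\sum_{i=1}^I q_i(Z_i,R_i)$ is a sum of i.i.d.\ summands that are bounded (since $m$ is bounded and $n_i\le M$). Under Condition~\ref{normal_reg} and the calibrating sharp null, Theorem~\ref{thm: normal approximation} says the level-$\alpha$ analysis at $\Gamma=\exp(\gamma)$ rejects, asymptotically, exactly when $\{T-E_{u^+}(T)\}/\{\text{var}_{u^+}(T)\}^{1/2}\ge \Phi^{-1}(1-\alpha)$ with $u^+=R$. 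Each of $T$, $E_{u^+}(T)$, $\text{var}_{u^+}(T)$ is a sum over $i$ of i.i.d.\ bounded functions of $(Z_{i1},\dots,Z_{in_i},R_{i1},\dots,R_{in_i},n_i)$, so by the strong law $I^{-1}T\to\mu_{\mathrm{true}}:=E[q_i(Z_i,R_i)]$ (the right-hand side of (\ref{eqn: design sensitivity})), $I^{-1}E_{u^+}(T)\to\mu(\gamma)$ with $\mu(\gamma)$ the left-hand side of (\ref{eqn: design sensitivity}), and $I^{-1}\text{var}_{u^+}(T)\to\sigma^2(\gamma)$ for some bounded $\sigma^2(\gamma)\ge 0$. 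Hence the standardized statistic behaves like $I^{1/2}\{\mu_{\mathrm{true}}-\mu(\gamma)\}/\sigma(\gamma)$ up to lower order, and the whole theorem reduces to the sign of $\mu_{\mathrm{true}}-\mu(\gamma)$.

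The analytic heart is to show that $\mu(\cdot)$ is continuous, strictly increasing, satisfies $\mu(0)<\mu_{\mathrm{true}}$, and has $\lim_{\gamma\to\infty}\mu(\gamma)>\mu_{\mathrm{true}}$. For $\mu(0)<\mu_{\mathrm{true}}$: at $\gamma=0$ the within-set tilt in (\ref{eqn: dose assignment after matching}) is uniform, so the inner conditional mean of $q_i$ is $(\sum_j R_{ij})\, n_i^{-1}\sum_j m(Z_{ij})$; expanding and using conditional pairwise independence of the $(Z_{ij},R_{ij})$ gives $\mu(0)-\mu_{\mathrm{true}}=-E[(n_i-1)\,\mathrm{cov}(m(Z_{i1}),R_{i1}\mid n_i)]<0$ by $\mathrm{cor}(m(Z_{ij}),R_{ij})>0$. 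As $\gamma\to\infty$, the tilted law with $u_i=R_i$ concentrates on the configuration assigning the top $m_i$ doses to the outcome-$1$ units (unique a.s.\ under no ties), so the inner mean increases monotonically to $\max_s q_i$, whence $\mu(\gamma)\uparrow E[\max_s q_i]\ge\mu_{\mathrm{true}}$, strictly because $\mathrm{cor}(m(Z_{ij}),R_{ij})<1$ forces the observed configuration to differ from the maximizer with positive probability. Continuity is dominated convergence. For strict monotonicity I would argue stratum by stratum: for $\gamma_1<\gamma_2$ the $\gamma_2$-tilted configuration stochastically dominates the $\gamma_1$-tilted one in the order $\lesssim$ on $\Omega$. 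This is a Holley comparison (Lemma~\ref{useful1}): within a set $p_\gamma(s)\propto\exp\{\gamma\,\mathrm{sum}(s^1)\}$, and $\mathrm{sum}(s^1)$ is modular, i.e.\ $\mathrm{sum}((s\vee s^*)^1)+\mathrm{sum}((s\wedge s^*)^1)=\mathrm{sum}(s^1)+\mathrm{sum}(s^{*1})$, so the log-ratio of $p_{\gamma_2}(s\vee s^*)p_{\gamma_1}(s\wedge s^*)$ to $p_{\gamma_2}(s)p_{\gamma_1}(s^*)$ equals $(\gamma_2-\gamma_1)\{\mathrm{sum}((s\vee s^*)^1)-\mathrm{sum}(s^1)\}\ge 0$; since $q_i$ is isotonic in the configuration (because $m$ is non-decreasing), Holley yields $\mu(\gamma_2)\ge\mu(\gamma_1)$, with strictness from the same non-degeneracy. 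Together these give a unique root $\widetilde\gamma>0$ of $\mu(\gamma)=\mu_{\mathrm{true}}$, i.e.\ of (\ref{eqn: design sensitivity}).

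Finally, set $\widetilde\Gamma=\exp(\widetilde\gamma)$. If $\Gamma<\widetilde\Gamma$ then $\gamma<\widetilde\gamma$ and $\mu(\gamma)<\mu_{\mathrm{true}}$, so $T-E_{u^+}(T)=I\{\mu_{\mathrm{true}}-\mu(\gamma)\}+o_p(I)$ while $\text{var}_{u^+}(T)=O_p(I)$; the standardized statistic diverges to $+\infty$ and $\Psi_{\Gamma,I}\to 1$. If $\Gamma>\widetilde\Gamma$ the same computation gives divergence to $-\infty$ and $\Psi_{\Gamma,I}\to 0$. Combined with Step~1, this is the claimed sharp transition.

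The main obstacle I anticipate is the strict monotonicity in Step~3 --- in particular, ruling out degenerate data-generating processes in which the observed assignment already maximizes $q_i$ within every matched set (making $\mu(\gamma)$ constant), which is exactly where the two-sided condition $0<\mathrm{cor}(m(Z_{ij}),R_{ij})<1$ together with the conditional pairwise-i.i.d.\ structure must be invoked carefully. The Holley/modularity argument for weak monotonicity and the normal-approximation reduction in Step~1 are comparatively routine.
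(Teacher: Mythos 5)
Your proposal is correct and follows essentially the same route as the paper's proof: reduce the power to a law of large numbers for $T$, $E_{u^+}(T)$, and $\mathrm{var}_{u^+}(T)$ via the normal approximation of Theorem~\ref{thm: normal approximation}, establish monotonicity of the tilted mean $\phi(\gamma)$ by a Holley comparison exploiting the modularity identity $\mathrm{sum}((s\vee s^*)^1)+\mathrm{sum}((s\wedge s^*)^1)=\mathrm{sum}(s^1)+\mathrm{sum}(s^{*1})$, and pin down the boundary behavior at $\gamma\to 0^+$ and $\gamma\to\infty$ using the two-sided correlation condition (the paper's Lemma~\ref{lemma: double inequalities} and Lemma~\ref{lem:unique}). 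Your direct covariance computation for $\mu(0)-\mu_{\mathrm{true}}$ is a slightly more explicit version of the paper's permutation-by-permutation argument, and your flagging of strict (versus weak) monotonicity as the point needing care is, if anything, more attentive than the published proof.
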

Theorem~\ref{thm: design sensitivity} shows that the design sensitivity $\widetilde{\Gamma}$ depends only on the underlying data generating distribution $F$ and is independent of the sample size and the significance level $\alpha$. Therefore, it provides an elegant theoretical tool for asymptotically comparing different test statistics in sensitivity analysis.

We now use the design sensitivity formula (\ref{eqn: design sensitivity}) to asymptotically compare various tests in matched observational studies. For illustration, we compare the following four test statistics: the permutational t-test $\sum_{i=1}^{I}\sum_{j=1}^{n_{i}} Z_{ij} R_{ij}$ (denoted as $T_{t}$) and the Mantel-Haenszel test $\sum_{i=1}^{I}\sum_{j=1}^{n_{i}} \mathbbm{1}\{Z_{ij} > c\} R_{ij}$ for $c = 0.1, 0.25, 0.5$ (denoted as $T_{0.1}, T_{0.25}, T_{0.5}$). For conveniently generating various fixed datasets, we draw independent and identically distributed samples from some data-generating model to investigate the design sensitivity and finite-population power, although our sensitivity analysis method does not rely on such assumptions to be finite-population-valid. Specifically, we consider the following outcome model:
\begin{equation*}
    r_{ij}(z)\sim \text{Bern}(\text{expit}(A_i+f(z)\beta)),
\end{equation*}
where $A_i$ is a random effect for matched set $i$. In addition, we randomly draw the number of units in matched set $i$ (i.e., the $n_i$) from some discrete distribution. Let $\delta_0$ denote the point mass at $0$. We consider various data-generating processes under the following five factors: 1) $f(z)=z^a$ for $a \in \{1/4,1/2,2,4\}$. 2) $A_i \sim N(0,1)$. 3) $\beta = 1.5$.
4) $Z_{ij} \sim \text{Unif}[0,1]$ or $Z_{ij} \sim \text{Beta}(2,2)$. 5) $n_i \sim 2 + (0.9 \times \delta_0 + 0.1 \times \text{Poisson(0.5)})$. This gives a total of $4 \times 1 \times 1 \times 2 \times 1 = 8$ data-generating processes. Under each data-generating process, we can calculate the corresponding design sensitivity by solving equation (\ref{eqn: design sensitivity}) via the following two steps: we first use the Monte Carlo method to calculate the expectations involved in (\ref{eqn: design sensitivity}) for each specific $\Gamma=\exp(\gamma)$ and then use the bisection method search or line search to find the $\widetilde{\gamma}$ such that equation (\ref{eqn: design sensitivity}) holds (then the design sensitivity $\widetilde{\Gamma}=\exp(\widetilde{\gamma})$). To check if the insights about the asymptotic performances of various tests obtained from calculating the design sensitivity would also hold with only a moderate sample size, we also conduct finite-sample power simulations using the same set of test statistics and data-generating processes as those considered above, with 2000 matched sets. For ease of computation, we utilize normal approximation (Theorem~\ref{thm: normal approximation}) to calculate the worst-case $p$-values under various $\Gamma$. We also report the simulation results of an adaptive test combining $T_{t}$ and $T_{0.1}$ using the Bonferroni correction (denoted as $T_{\text{adap}}$) (\citealp{rosenbaum2012testing, heng2021increasing}). The design sensitivity of such an adaptive test is the larger of the design sensitivity of the $T_{t}$ and that of $T_{0.1}$ (\citealp{rosenbaum2012testing}). The simulation results of design sensitivity and finite-sample power under $0.05$ significance level are in Table~\ref{tab:ds+pow} (recall that for each fixed test statistic, based on Corollary~\ref{stoch_dom} and Theorem~\ref{thm: normal approximation}, our framework can produce corresponding sensitivity analysis that is less conservative than any other sensitivity analysis methods).

An important observation from Table~\ref{tab:ds+pow} is that different test statistics can perform very differently in a sensitivity analysis, and which test statistic outperforms the others depends on the underlying data-generating process. Specifically, when the chosen transformation of the doses $m(\cdot)$ leads to higher $\text{cor}(m(Z_{ij}), R_{ij})$, the design sensitivity and power are typically higher. Typically, we expect higher $\text{cor}(m(Z_{ij}), R_{ij})$ when $m(\cdot)$ better approximates $f(\cdot)$ in the outcome generating process. For example, when $f(z) = z^{0.25}$ or $f(z) = z^{0.5}$, the $T_{0.1}$ and $T_{0.25}$ generally perform better than $T_{0.5}$ and $T_{t}$ as $m(z) = \mathbbm{1}\{z > 0.1\}$ and $m(z) = \mathbbm{1}\{z > 0.25\}$ are closer in form to the functions $f(z) = z^{0.25}$ and $f(z) = z^{0.5}$. Instead, when $f(z) = z^{2}$ and $f(z) = z^{4}$, the $T_{t}$ and $T_{0.5}$ have larger values of design sensitivity and power since $m(z) = z$ and $m(z) = \mathbbm{1}\{z > 0.5\}$ are closer in form to the functions $f(z) = z^{2}$ and $f(z) = z^{4}$. Generally, higher design sensitivity implies higher power of sensitivity analysis, although there may be exceptions. For example, when $f(z)=z^2$, even though $T_{0.5}$ has a comparable design sensitivity to that of $T_{t}$, the $T_{t}$ has slightly higher power. This is not surprising since $T_{0.5}$ loses efficiency through dichotomization, leading to more concordant matched sets (i.e., matched sets in which all the observed outcomes are equal).  In finite samples, we notice that an adaptive test $T_{\text{adap}}$ that combines $T_{t}$ and $T_{0.1}$ using the Bonferroni correction always performs much better than the worse one among the two candidate tests $T_{t}$ and $T_{0.1}$, regardless of the unknown data-generating process. Therefore, the design sensitivity formula allows us to gain theoretical insights into when a test would outperform the others and when it would not, which can provide useful guidance for study designs (\citealp{rosenbaum2004design, rosenbaum2020design}). In Appendix F in the online supplementary material, we present additional simulation results covering a broader range of scenarios, including the setting when the exposure dose distribution has a positive mass at the zero dose (i.e., many units/individuals are completely unexposed). This setting is common in many real data applications, including the real data example described in Section~\ref{subsec: motivating example}. The general pattern of these additional simulation results agrees with those presented here in the main text; see Appendix F for detailed simulation results and related discussions.

\begin{centering}
\begin{table}[htbp]
\centering
\caption{Design sensitivity and finite-sample power for the five test statistics under $f(z) = z^a$ for $a = 1/4, 1/2, 2, 4$ and $\beta=1.5$ for the outcome model. The random effect for each matched set is drawn from $N(0,1)$, and the exposure dose $Z$ is drawn from $\text{Unif}[0,1]$ or $\text{Beta}(2,2)$.}
\small
\begin{tabular}{lrrrrr|rrrrr}
\hline
\multicolumn{11}{c}{$f(z)=z^{0.25}$} \\
 & \multicolumn{5}{c|}{$Z\sim \text{Unif}[0,1]$} & \multicolumn{5}{c}{$Z\sim$ Beta(2,2)} \\
\cline{1-1} \cline{2-6} \cline{7-11}
Test & $T_{t}$ & $T_{0.1}$ & $T_{0.25}$ & $T_{0.5}$ & $T_{\text{adap}}$ & $T_{t}$ & $T_{0.1}$ & $T_{0.25}$ & $T_{0.5}$ & $T_{\text{adap}}$ \\
\hline
$\widetilde{\Gamma}$ & 2.17 & 3.16 & 2.64 & 2.11 & 3.16 & 1.95 & 2.87 & 2.32 & 1.96 & 2.87\\
$\Gamma = 1.75$ & 0.37 & 0.54 & 0.45 & 0.25 & 0.48 & 0.19 & 0.19 & 0.24 & 0.13 & 0.21\\
$\Gamma = 2.00$ & 0.14 & 0.39 & 0.25 & 0.10 & 0.28 & 0.08 & 0.14 & 0.13 & 0.04 & 0.12\\
$\Gamma = 2.25$ & 0.03 & 0.25 & 0.11 & 0.03 & 0.16 & 0.02 & 0.12 & 0.07 & 0.02 & 0.08\\
$\Gamma = 2.50$ & 0.01 & 0.16 & 0.05 & 0.01 & 0.08 & 0.01 & 0.09 & 0.04 & 0.01 & 0.06\\
%\hline
%$\Gamma = 2.75$ & 0.00 & 0.08 & 0.02 & 0.00 & 0.04 & 0.00 & 0.07 & 0.02 & 0.00 & 0.04\\
%\hline
%$\Gamma = 3.00$ & 0.00 & 0.06 & 0.01 & 0.00 & 0.04 & 0.00 & 0.06 & 0.02 & 0.00 & 0.04\\
\end{tabular}

\begin{tabular}{lrrrrr|rrrrr}
\hline
\multicolumn{11}{c}{$f(z)=z^{0.5}$} \\
 & \multicolumn{5}{c|}{$Z\sim \text{Unif}[0,1]$} & \multicolumn{5}{c}{$Z\sim$ Beta(2,2)} \\
\cline{1-1} \cline{2-6} \cline{7-11}
Test & $T_{t}$ & $T_{0.1}$ & $T_{0.25}$ & $T_{0.5}$ & $T_{\text{adap}}$ & $T_{t}$ & $T_{0.1}$ & $T_{0.25}$ & $T_{0.5}$ & $T_{\text{adap}}$ \\
\hline
$\widetilde{\Gamma}$ & 3.21 & 4.51 & 3.65 & 3.08 & 4.51 & 2.82 & 4.43 & 3.38 & 2.62 & 4.43\\
$\Gamma = 3.00$ & 0.18 & 0.36 & 0.27 & 0.09 & 0.27 & 0.06 & 0.10 & 0.09 & 0.03 & 0.08\\
$\Gamma = 3.25$ & 0.06 & 0.28 & 0.17 & 0.04 & 0.18 & 0.02 & 0.08 & 0.05 & 0.00 & 0.05\\
$\Gamma = 3.50$ & 0.03 & 0.21 & 0.10 & 0.02 & 0.13 & 0.01 & 0.08 & 0.04 & 0.00 & 0.04\\
$\Gamma = 3.75$ & 0.01 & 0.16 & 0.07 & 0.01 & 0.10 & 0.00 & 0.07 & 0.02 & 0.00 & 0.03\\
%\hline
%$\Gamma = 4.00$ & 0.00 & 0.12 & 0.05 & 0.00 & 0.08 & 0.00 & 0.05 & 0.01 & 0.00 & 0.02\\
%\hline
%$\Gamma = 4.25$ & 0.00 & 0.10 & 0.03 & 0.00 & 0.07 & 0.00 & 0.05 & 0.01 & 0.00 & 0.02\\
\end{tabular}

\begin{tabular}{lrrrrr|rrrrr}
\hline
\multicolumn{11}{c}{$f(z)=z^{2}$} \\
 & \multicolumn{5}{c|}{$Z\sim \text{Unif}[0,1]$} & \multicolumn{5}{c}{$Z\sim$ Beta(2,2)} \\
\cline{1-1} \cline{2-6} \cline{7-11}
Test & $T_{t}$ & $T_{0.1}$ & $T_{0.25}$ & $T_{0.5}$ & $T_{\text{adap}}$ & $T_{t}$ & $T_{0.1}$ & $T_{0.25}$ & $T_{0.5}$ & $T_{\text{adap}}$ \\
\hline
$\widetilde{\Gamma}$ & 4.55 & 2.98 & 3.46 & 4.59 & 4.55 & 4.42 & 2.90 & 3.20 & 4.57 & 4.42\\
$\Gamma = 3.00$ & 0.67 & 0.03 & 0.12 & 0.56 & 0.55 & 0.51 & 0.03 & 0.06 & 0.40 & 0.37\\
$\Gamma = 3.25$ & 0.47 & 0.01 & 0.06 & 0.42 & 0.36 & 0.36 & 0.03 & 0.04 & 0.29 & 0.26\\
$\Gamma = 3.50$ & 0.32 & 0.01 & 0.03 & 0.28 & 0.22 & 0.26 & 0.02 & 0.02 & 0.22 & 0.19\\
$\Gamma = 3.75$ & 0.21 & 0.00 & 0.02 & 0.16 & 0.11 & 0.19 & 0.02 & 0.01 & 0.16 & 0.09\\
%\hline
%$\Gamma = 4$ & 0.10 & 0.00 & 0.01 & 0.10 & 0.05 & 0.10 & 0.01 & 0.00 & 0.11 & 0.05\\
%\hline
%$\Gamma = 4.25$ & 0.05 & 0.00 & 0.01 & 0.05 & 0.02 & 0.05 & 0.01 & 0.00 & 0.07 & 0.04\\
\end{tabular}

\begin{tabular}{lrrrrr|rrrrr}
\hline
\multicolumn{11}{c}{$f(z)=z^{4}$} \\
 & \multicolumn{5}{c|}{$Z\sim \text{Unif}[0,1]$} & \multicolumn{5}{c}{$Z\sim$ Beta(2,2)} \\
\cline{1-1} \cline{2-6} \cline{7-11}
Test & $T_{t}$ & $T_{0.1}$ & $T_{0.25}$ & $T_{0.5}$ & $T_{\text{adap}}$ & $T_{t}$ & $T_{0.1}$ & $T_{0.25}$ & $T_{0.5}$ & $T_{\text{adap}}$ \\
\hline
$\widetilde{\Gamma}$ & 3.24 & 1.87 & 2.15 & 3.00 & 3.24 & 2.85 & 1.42 & 1.88 & 2.69 & 2.85\\
$\Gamma = 1.75$ & 0.97 & 0.07 & 0.20 & 0.88 & 0.94 & 0.74 & 0.03 & 0.06 & 0.51 & 0.63\\
$\Gamma = 2.00$ & 0.86 & 0.02 & 0.08 & 0.67 & 0.79 & 0.52 & 0.02 & 0.03 & 0.30 & 0.37\\
$\Gamma = 2.25$ & 0.65 & 0.01 & 0.04 & 0.44 & 0.50 & 0.29 & 0.01 & 0.01 & 0.17 & 0.21\\
$\Gamma = 2.50$ & 0.37 & 0.00 & 0.01 & 0.22 & 0.27 & 0.17 & 0.01 & 0.01 & 0.10 & 0.10\\
\hline
%$\Gamma = 2.75$ & 0.21 & 0.00 & 0.00 & 0.12 & 0.13 & 0.08 & 0.01 & 0.00 & 0.04 & 0.05\\
%\hline
%$\Gamma = 3$ & 0.10 & 0.00 & 0.00 & 0.06 & 0.07 & 0.04 & 0.00 & 0.00 & 0.02 & 0.03\\
\end{tabular}

\label{tab:ds+pow}
\end{table}
\end{centering}
 
\section{Asymptotically Exact Sensitivity Analysis Beyond the Sharp Null}
\label{attribut}

\subsection{Generalizing attributable effects to the continuous exposure case}\label{subsec: Generalizing attributable effects to the continuous exposure case}

While the sharp null is one of the most widely considered causal null hypotheses and is often regarded as a first step in a cause-and-effect analysis \citep{imbens2015causal}, in many observational studies with binary exposures, researchers have also investigated other weak null hypotheses that allow effect heterogeneity (e.g., treatment may affect some study units but not others); see \citet{rosenbaum_attributable}, \citet{rand_bin}, \citet{li_ding_finite}, and \citet{fogarty_composite} for some examples, among many others. Among various weak nulls considered in matched observational studies with binary exposures and outcomes, \textit{attributable effects} are among the most widely considered ones (\citealp{rosenbaum_attributable, rosenbaum_attributable_obs, hansen2009attributing, hasegawa2017sensitivity}). Specifically, if the exposure $Z_{ij}\in \{0,1\}$ (i.e., treated or control), the attributable effects is defined as $\text{AE}= \sum_{i=1}^{I}\sum_{j=1}^{n_{i}} Z_{ij}(R_{ij}-r_{ij}(0))$, the number of treated units who experienced events (i.e., the binary outcome equalling one) caused by the treatment, that is, events that would not have occurred if the treated units received controls. A valid randomization inference procedure for attributable effects was given in \citet{rosenbaum_attributable}, and the corresponding sensitivity analysis method can be found in \citet{rosenbaum_attributable_obs}. In this section, we generalize the concept of attributable effect to the continuous exposure case and derive corresponding randomization inference and sensitivity analysis procedures. We consider the following assumption:
\begin{assumption}\label{mon}
(Monotonicity):  $r_{ij}(z) \leq r_{ij}(z')$ for all $z \leq z'$.
\end{assumption}
Assumption~\ref{mon} requires that the potential outcomes are non-increasing with respect to the exposure dose, which has been commonly used in many settings (\citealp{rosenbaum_obs, zhang2023social, heng2023instrumental}). Under Assumption~\ref{mon}, a natural generalization of the attributable effects to the continuous exposure case is the following \textit{Threshold Attributable Effect} (TAE): 
\begin{equation*}
    \text{TAE}(c) = \sum_{i=1}^{I}\sum_{j=1}^{n_{i}} \mathbbm{1}\{Z_{ij} > c\}(R_{ij}-r_{ij}(0))
\end{equation*}
where $c$ is some pre-specified, scientifically or policy-meaningful dose level, and $r_{ij}(0)$ is the potential outcome under zero dose. Under Assumption \ref{mon}, we have TAE$(c) \geq 0$, and the TAE$(c)$ has the interpretation of the number of units with exposure greater than c who had an event (i.e., $R_{ij}=1$) that would not occur if they had been completely unexposed. Since our inferential method will rely on the pivot $\sum_{i=1}^{I}\sum_{j=1}^{n_{i}} \mathbbm{1}\{Z_{ij} > c\}r_{ij}(0)$, to appropriately draw inferences about the TAE, there must be a non-negligible fraction of individuals for which we observe the potential outcome $r_{ij}(0)$. This is indeed the case in our real data example, as a significant proportion of individuals were completely unexposed. Alternatively, if there exists a scientifically-informed exposure level $\epsilon < c$ for which $r_{ij}(z) = r_{ij}(0)$ for all $0 < z < \epsilon$, the potential outcome $r_{ij}(0)$ would be observed for all individuals with exposure less than $\epsilon$. Thus, the TAE is a scientifically meaningful and statistically testable estimand in many settings, and our goal is to construct a confidence set for the TAE. To do so, note that $\sum_{i=1}^{I}\sum_{j=1}^{n_{i}} \mathbbm{1}\{Z_{ij}>c\}R_{ij}$ is fully observed, so we only have $\sum_{i=1}^{I}\sum_{j=1}^{n_{i}} \mathbbm{1}\{Z_{ij}>c\}r_{ij}(0)$ as the pivotal quantity. The vector ${r}_0=(r_{11}(0),\dots, r_{In_{I}}(0))$ is partially observed, but we may test its equality to any fixed ${r}_0'=(r_{11}(0)', \dots, r_{In_{I}}(0)') \in \{0,1\}^{N}$. Specifically, we can test the null $H_0^{\prime}: {r}_0 = {r}_0'$ using the test statistic $\sum_{i=1}^{I}\sum_{j=1}^{n_{i}} \mathbbm{1}\{Z_{ij}>c\} r_{ij}(0)'$ and by using the randomization distribution of ${Z}$. Also, some ${r}_0'$ will simply be incompatible with the observed data. First, for any subject $ij$ for which $r_{ij}(0)$ is observed, any $H_0^{\prime}$ where $r_{ij}(0)' \neq R_{ij}$ can be safely rejected. Second, under Assumption~\ref{mon} (monotonicity assumption), for a subject with positive dose exposure but $R_{ij} = 0$, any $H_0$ where $r_{ij}(0)' = 1 > R_{ij}$ can be safely rejected. Importantly, for testing $H_0^{\prime}: {r}_0 = {r}_0'$ with any fixed ${r}_0'$, we can also conduct a sensitivity analysis in an identical fashion as described in Section \ref{sens_anal}. A naive procedure to construct a $100(1-\alpha)\%$ confidence set for the TAE for any prespecified $\Gamma=\exp(\gamma)$ could follow a two-step procedure. In Step 1, for all ${r}_0' \in \{0,1\}^N$ that are compatible, we find all ${r}_0'$ such that TAE is equal to some $\Delta$; we call this set $\mathcal{R}_0^\Delta$. In Step 2, for all ${r}_0' \in \mathcal{R}_0^\Delta$, we test the sharp null $H_0^{\prime}: {r}_0 = {r}_0'$ at level $\alpha$ under the pre-specified $\Gamma$. If there exists any ${r}_0' \in \mathcal{R}_0^\Delta$ that cannot be rejected, we include $\Delta$ in the $100(1-\alpha)\%$ confidence set. Otherwise, we do not include $\Delta$. However, the size of $\mathcal{R}_0^\Delta$ can be very large, so direct enumeration may be infeasible. In Section~\ref{subsec: Sensitivity Analysis for the Threshold Attributable Effect in Matched Observational Studies via Mixed Integer Programming}, we propose a mixed-integer programming approach to more computationally efficiently conduct randomization inferences and sensitivity analyses for TAE.

\subsection{Sensitivity analysis for the threshold attributable effect via mixed-integer programming}\label{subsec: Sensitivity Analysis for the Threshold Attributable Effect in Matched Observational Studies via Mixed Integer Programming}

Using the normal approximation, we can conduct inference for TAE by solving a mixed-integer quadratically constrained linear program. Specifically, under normal approximation, the event that the null $H_0^{\prime}: {r}_0 = {r}_0'$ is not rejected at level $\alpha$ under model (\ref{eqn: dose assignment after matching}) with fixed $\Gamma=\exp(\gamma)$ can be expressed as 
\begin{equation}
\label{reject}
    \min_{{u}}  \big[
  \{{z}_{>c}{r}_0^{\prime T} - E_{{u}}({Z}_{>c}{r}_0^{\prime T})\}^2 - \chi^2_{1-\alpha} \times \text{var}_{{u}}({Z}_{>c}{r}_0^{\prime T}) \big] \leq 0,
\end{equation}
where $\chi^2_{1-\alpha}$ is the $1-\alpha$ quantile of the chi-squared distribution with one degree of freedom, ${Z}_{>c}=(\mathbbm{1}\{Z_{11}>c\}, \dots, \mathbbm{1}\{Z_{In_{I}}>c\})$ is the random threshold indicator vector, and ${z}_{>c}=(\mathbbm{1}\{z_{11}>c\}, \dots, \mathbbm{1}\{z_{In_{I}}>c\})$ is the observed threshold indicator vector, and we have  
\begin{equation*}
\begin{aligned}
    E_{{u}}({Z}_{>c}{r}_0^{\prime T}) &= \sum_{i=1}^I \mu_i, \text{where } \mu_i = \sum_{\pi_{i}}\Big[ \frac{\exp\{\gamma ({z}_{i\pi_{i}}{u}_i^{T})\}}{\sum_{\widetilde{\pi}_{i}}\exp\{\gamma ({z}_{i\widetilde{\pi}_{i}}{u}_i^{T})\}}\times \sum_{j=1}^{n_{i}}\mathbbm{1}\{z_{i\pi_{i}(j)}>c\}r_{ij}(0)^{\prime}\Big];\\ \text{var}_{{u}}({Z}_{>c}{r}_0^{\prime T}) &= \sum_{i=1}^I \nu_i, \text{where }\nu_i = \sum_{\pi_{i}}\Big[ \frac{\exp\{\gamma ({z}_{i\pi_{i}}{u}_i^{T})\}}{\sum_{\widetilde{\pi}_{i}}\exp\{\gamma ({z}_{i\widetilde{\pi}_{i}}{u}_i^{T})\}}\times \big\{\sum_{j=1}^{n_{i}}\mathbbm{1}\{z_{i\pi_{i}(j)}>c\}r_{ij}(0)^{\prime}\big\}^2\Big]-\mu_i^2.
\end{aligned}
\end{equation*}
Note that in Section~\ref{subsec: Finite-population-exact sensitivity analysis via probabilistic lattice theory}, we showed that for testing $H_{0}^{\prime}: {r}_{0}={r}_{0}^{\prime}$ with the test statistics $\sum_{i=1}^{I}\sum_{j=1}^{n_{i}} \mathbbm{1}\{Z_{ij}>c\}r_{ij}(0)^{\prime}$, the corresponding worst-case $p$-value will be taken when each $u_{ij}= r_{ij}(0)^{\prime}$ or when each $u_{ij}= 1-r_{ij}(0)^{\prime}$. Assuming independence between matched sets, we only need to consider each matched set separately. For matched set $i$, there are up to $2^{n_i}$ possibilities for vector ${r}_{i}^{\prime}=(r_{i1}(0)^{\prime},\dots, r_{in_{i}}(0)^{\prime})$ that are compatible with the observed data. There may be less than $2^{n_i}$ possibilities for $r_{i}^{\prime}$ since some can be safely ruled out based on the observed data as described in Section~\ref{subsec: Generalizing attributable effects to the continuous exposure case}. Let $l_i\leq 2^{n_i}$ be the number of possibilities of $r_{i}^{\prime}$ that are compatible with the observed data. Next, we introduce binary decision variables $d_{ik}$ for each of the $l_i$ possible values of ${r}_{i}^{\prime}$, $k = 1,\ldots, l_i$. More concretely, $d_{ik} = 1$ indicates that we take ${r}_{i}^{\prime}$ to be the $k$th of the $l_i$ possibilities; so $d_{ik}\in \{0, 1\}$ can only equal one for one of the $k \in \{1,\ldots,l_i\}$. For each $k$-th possibility of the $l_i$ possibilities of ${r}_{i}^{\prime}$, there is a corresponding value of the pivot $\sum_{j=1}^{n_{i}}\mathbbm{1}\{z_{ij}>c\}r_{ij}(0)^{\prime}$ (denoted as $t_{ik}$), a corresponding expectation of $\sum_{j=1}^{n_{i}}\mathbbm{1}\{Z_{ij}>c\}r_{ij}(0)^{\prime}$ when each $u_{ij}= r_{ij}(0)^{\prime}$ and that when each $u_{ij}= 1-r_{ij}(0)^{\prime}$ (denoted as $E^{\text{upp}}_{ik}$ and $E^{\text{low}}_{ik}$ respectively), and a corresponding variance of $\sum_{j=1}^{n_{i}}\mathbbm{1}\{Z_{ij}>c\}r_{ij}(0)^{\prime}$ when each $u_{ij}= r_{ij}(0)^{\prime}$ and that when each $u_{ij}= 1-r_{ij}(0)^{\prime}$  (denoted as $V^{\text{upp}}_{ik}$ and $V^{\text{low}}_{ik}$ respectively). Since we use chi-squared distribution as the asymptotic reference distribution under the null, the minimizing ${u}$ in (\ref{reject}) will depend on the relationship between ${z}_{>c}{r}_0^{\prime T}$ and $E_{{u}}({Z}_{>c}{r}_0^{\prime T})$. We impose directional penalties in a similar fashion as in \citet{fogarty_composite} to ensure that we may reject the null only when the observed ${z}_{>c}{r}_0^{\prime T}$ has the correct sign relationship with the upper or lower expectations. We encode one-sided test restrictions using chi-squared distribution by replacing $\chi^2_{1-\alpha}$ with $\chi^2_{1-2\alpha}$. Putting these arguments together, the hypothesis testing problem for testing $\text{TAE}=\Delta$ in a sensitivity analysis under sensitivity parameter $\Gamma$ and level $\alpha$ is equivalent to solving the following mixed-integer quadratically constrained linear program:
\begin{equation}
\label{opt_procedure}
\begin{aligned}
 &\min_{y, d_{ik}, b^{\text{low}}, b^{\text{upp}}} \ \ y \\
 &\textrm{subject to }   \ \ \   ({t}{d}^{T} - {E^{\text{low}}}{d}^{T})^2 - \chi^2_{1-2\alpha} \times {V^{\text{low}}}{d}^{T} \leq y + Mb^{\text{low}}, \\ 
 & \quad \quad \quad \quad \quad ({t}{d}^{T}- {E^{\text{upp}}}{d}^{T})^2 - \chi^2_{1-2\alpha} \times {V^{\text{upp}}}{d}^{T} \leq y + Mb^{\text{upp}}, \\ 
 & \quad \quad \quad \quad \quad M(1-b^{\text{low}}) \leq {t}{d}^{T} - {E^{\text{low}}}{d}^{T} \leq Mb^{\text{low}}, \\
 & \quad \quad \quad \quad \quad Mb^{\text{upp}} \leq {t}{d}^{T} - {E^{\text{upp}}}{d}^{T} \leq M(1-b^{\text{upp}}),\\ 
 &\quad \quad \quad \quad \quad \sum_{i=1}^{I}\sum_{j=1}^{n_{i}} \mathbbm{1}\{Z_{ij}>c\} R_{ij} -{t}{d}^{T}= \Delta ,\\
 & \quad \quad \quad \quad \quad \sum_{k=1}^{l_i} d_{ik} = 1, \ d_{ik} \in \{0,1\} \ \text{for all $i, k$}; \ b^{\text{low}},b^{\text{upp}} \in \{0,1\}; \ y \in \mathbb{R},  
\end{aligned}
\end{equation}
where $y$ is an auxiliary variable, ${t}=(t_{11},\dots, t_{Il_{I}})$, $d=(d_{11},\dots, d_{Il_{I}})$, ${E^{\text{low}}}=(E_{11}^{\text{low}}, \dots, E_{Il_{I}}^{\text{low}})$, ${E^{\text{upp}}}=(E_{11}^{\text{upp}}, \dots, E_{Il_{I}}^{\text{upp}})$, ${V^{\text{low}}}=(V_{11}^{\text{low}}, \dots, V_{Il_{I}}^{\text{low}})$, ${V^{\text{upp}}}=(V_{11}^{\text{upp}}, \dots, V_{Il_{I}}^{\text{upp}})$, and the directional penalty $M>0$ is a prespecified sufficiently large constant. Essentially, based on the normal approximation and the Rosenbaum sensitivity analysis model (\ref{eqn: dose assignment after matching}) with some prespecified sensitivity parameter $\Gamma=\exp(\gamma)$, the mixed-integer program (\ref{opt_procedure}) seeks to find an allocation of $r_0^{\prime}$ (corresponding to the decision variables $d_{ik}$) that is ``least likely" to reject the $H_0^{\prime}: {r}_0 = {r}_0'$ at level $\alpha$, among all possible allocations of $r_0'$ that satisfy the constraint $\text{TAE}=\Delta$. Therefore, an optimal value (of program (\ref{opt_procedure})) $y^* < 0$ indicates that in a sensitivity analysis under the fixed $\Gamma$ and level $\alpha$, we fail to reject $\text{TAE}=\Delta$ as there exists some allocation of $r_{0}^{\prime}$ that satisfies $\text{TAE}=\Delta$ but the $H_0^{\prime}: {r}_0 = {r}_0'$ cannot be rejected based on the observed data. In practice, for certain values of $\Delta$ that do not constrain much of the search space of the mixed-integer program, finding the exact optimal value $y^{*}$ can be computationally expensive for large datasets. Fortunately, solving the exact $y^{*}$ is unnecessary for deciding whether to reject $\text{TAE}=\Delta$; simply removing the objective function of $(\ref{opt_procedure})$ and adding the constraint $y < 0$ and checking if the mixed-integer program is feasible is sufficient. Moreover, if computation time remains an issue, we can simply relax the binary constraint on the $d_{ik}$ to allow them to be any real numbers between 0 and 1. This will offer a statistically conservative but computationally efficient testing procedure. Then we can easily construct a valid $100(1-\alpha)\%$ confidence set for the TAE by solving (\ref{opt_procedure}) or its relaxed version for each $\Delta \in \mathcal{D}$ (where $\mathcal{D}$ is restricted by the observed data and/or monotonicity assumptions). Call the optimal value of this problem $y_{\Delta}^*$. Then an $100(1-\alpha)\%$ confidence set is $\{\Delta \in \mathcal{D} \mid y_{\Delta}^* \leq 0 \}$.  In practice, instead of solving the corresponding mixed-integer program for many different values of $\Delta$, we can simply replace the $ ``= \Delta"$ in (\ref{opt_procedure}) with $ ``\leq \Delta"$ or $ ``\geq \Delta"$, which will automatically yield confidence sets that are intervals. Moreover, in a study where each matched set has a binary contribution to the TAE under the monotonicity assumption, a procedure relying on asymptotic separability analogous to that proposed in \citet{rosenbaum_attributable_obs} and \citet{asymp_sep} is sufficient, and solving the mixed-integer program (\ref{opt_procedure}) is not necessary for this particular setting; see Appendix D in the online supplementary material.

%In this setting, the mixed integer program imposing monotonicity is equivalent to the procedure relying on asymptotic separability, but in contrast, the mixed integer program (\ref{opt_procedure}) \emph{can} be initialized with or without monotonicity encoded as an assumption, at the cost of computational speed and potentially statistical conservatism. 
\begin{remark}
Although Assumption~\ref{mon} (i.e., the monotonicity assumption) has been adopted in many previous studies and is plausible in many settings, it is a strong assumption and may not hold in many other settings. When Assumption~\ref{mon} does not hold, TAE may not be a suitable estimand as its scientific interpretation is not as clear as when Assumption~\ref{mon} holds. Specifically, recall that under Assumption~\ref{mon}, TAE is the number of units that experienced an exposure dose above some scientifically meaningful threshold $c$ and experienced the event of interest but would not have had they been completely unexposed. In this case, TAE is a natural generalization of the classic attributable effect in the binary exposure case, which also assumes a corresponding monotonicity assumption (i.e., $r_{ij}(1)\geq r_{ij}(0)$ for all $i, j$) \citep{rosenbaum_attributable, rosenbaum_attributable_obs}. Instead, without Assumption~\ref{mon}, this interpretation of ATE no longer holds since there could be the case that the outcome $R_{ij} = 0$ under some positive exposure dose above the threshold $c$ but the potential outcome under the zero dose $r_{ij}(0) = 1$. It is a meaningful future research direction to develop a more appropriate causal estimand for matched observational studies with continuous exposures when Assumption~\ref{mon} (the monotonicity assumption) does not hold. 
\end{remark}

\section{Computational Hardness of Sensitivity Analysis With Continuous Outcomes}\label{sec: hardness of SA for continuous outcomes}

%Corollary~\ref{stoch_dom} shows that the worst-case $p$-value can be found by taking the unmeasured confounder for each subject as their observed binary outcome variable. Thus, one might hope that the same solution will apply to the continuous treatment continuous outcome case. 
In the previous sections, we proposed two methods for sensitivity analysis for matched observational studies with continuous exposures and binary outcomes: an explicit analytical approach (for the sharp null) and an implicit optimization approach (for threshold attributable effects). In this section, we show that when both the exposure and the outcome are continuous, the sensitivity analysis is generally computationally infeasible, except for some special cases (e.g., pair matching).

We first show that the classic analytical approach for sensitivity analysis with continuous outcomes cannot be generalized from the binary to the continuous exposure case. Specifically, in the binary exposure literature, \citet{rosen_krieger} showed that for any test statistics of the form $T = Zq^{T}=\sum_{i=1}^{I}\sum_{j=1}^{n_{i}}Z_{ij}q_{ij}$ where each $q_{ij} = q_{ij}({R})$ is some arbitrary function of the outcomes, the worst-case $p$-value (\ref{eqn: worst-case p-value}) under Fisher's sharp null $H_{0}$ is taken at some normalized unmeasured confounders ${u}^{*}=(u_{11}^{*},\dots, u_{In_{I}}^{*})\in [0,1]^{N}$ that satisfy the following two conditions: (C1) in each matched set $i$, the $(u_{i1}^{*},\dots, u_{in_{i}}^{*})$ must have the same ranks as those of $(q_{i1},\dots, q_{in_{i}})$; and (C2) the ${u}^{*}$ must lie at the edge of the cube $[0,1]^{N}$, that is, $u_{ij}^{*}\in \{0,1\}$. This seminal work forms the foundation of many analytical solutions to the sensitivity analysis in the binary exposure case (\citealp{rosenbaum_obs}; Chapter 4). One might hope that principles (C1) and (C2) could be extended to the case of continuous exposures and outcomes. Although the principle (C1) remains true with continuous exposures, we found that the principle (C2) may not hold. To see a concrete example, consider the statistic $T=\sum_{n=1}^{5}Z_{n}q_{n}$ with observed dose vector $(Z_{1},\dots, Z_{5})=(0.1,0.44,0.54,0.73,0.8)$ and $(q_{1},\dots, q_{5})=(1.5,1.5,3,4.5,4.5)$. Then one can numerically verify that the worst-case one-sided $p$-value $\text{pr}({T \geq 9.03})$ under Rosenbaum's sensitivity model (\ref{eqn: Rosenbaum exposure dose model}) with $\gamma = 2$ is taken at $(u_{1}^{*},\dots, u_{5}^{*})=(0, 0, 0.9483617, 1, 1)$, in which $u_{3}^{*}\neq 0$ or $1$. Therefore, except for some special cases (e.g., pair matching), an explicit analytical solution to solving the worst-case $p$-value with continuous outcomes is generally infeasible in the continuous exposure case.

Without an explicit analytical solution to the worst-case $p$-value (\ref{eqn: worst-case p-value}) when both the exposure and outcome are continuous, an alternative approach would be to use an optimization approach to numerically solve the worst-case $p$-value to conduct a sensitivity analysis. As before, let $p_{i\pi_{i}}=\text{pr}({Z}_i={z}_{i\pi_{i}}|\mathcal{F},\mathcal{Z}_{i})$ denote the exposure dose assignment probability for ${z}_{i\pi_{i}} \in \mathcal{Z}_{i}=\{{z}_{i\pi_{i}} \mid \pi_{i} \in S_{n_i}\}$ within matched set $i$ given the sensitivity parameter $\Gamma$, as defined in (\ref{eqn: dose assignment after matching}). Consider a test statistic with the form $T=\sum_{i=1}^{I}q_{i}({Z}_{i}, {R})$ for testing Fisher's sharp null $H_{0}$, in which ${Z}_{i}=(Z_{i1},\dots, Z_{in_{i}})$. Given the observed value $t$ of $T$, we define $\zeta_{{u}}
=(t-\mu_{{u}})^2-\chi^2_{1-\alpha}\times \sigma^2_{{u}}$, where $\mu_{{u}}=\sum_{i=1}^I \sum_{\pi_{i}} p_{i\pi_{i}}q_{i}({z}_{i\pi_{i}}, {R})$ denote the expectation of $T$ given ${u}$ and $\sigma^2_{{u}}=\sum_{i=1}^I \sum_{\pi_{i}} p_{i\pi_{i}} q_{i}^2({z}_{i\pi_{i}}, {R})- \sum_{i=1}^I \{\sum_{\pi_{i}} p_{i\pi_{i}} q_{i}({z}_{i\pi_{i}}, {R})\}^2$ the variance of $T$ given ${u}$. Note that all the $q_{i}({z}_{i\pi_{i}}, {R})$ are fixed values under the sharp null $H_{0}$. In a sensitivity analysis, we reject the sharp null $H_{0}$ if and only if the worst-case $p$-value (\ref{eqn: worst-case p-value}) is above the significance level $\alpha$. Invoking the finite-population central limit theorem (\citealp{li_ding_finite}), this is equivalent to checking if the optimal value of the following optimization problem is greater than zero or not:
\begin{equation}\label{opt: original form with continuous outcomes}
    \text{min}_{{u}} \ \zeta_{{u}} \quad  \text{subject to } \\ {u}\in [0,1]^{N},
\end{equation}
where ${u}=({u}_1,\ldots,{u}_I)=(u_{11},\dots, u_{In_{I}})$. An optimal value of the optimization problem (\ref{opt: original form with continuous outcomes}) less than or equal to zero means that we fail to reject $H_{0}$ in a sensitivity analysis given the sensitivity parameter $\Gamma=\exp(\gamma)$. To investigate the computational complexity of solving (\ref{opt: original form with continuous outcomes}), we consider the following reparametrization: for each matched set $i$, letting $z_{i(1)}\leq z_{i(2)}\leq \dots \leq z_{i(n_{i})}$ be the order statistics of the doses, we define $w_{ijk}=\exp(\gamma z_{i(j)}u_{ik})$ for $j,k=1,...,n_i$ and $s_i=\sum_{{z}_{i\pi_{i}}\in \mathcal{Z}_{i}}\exp(\gamma{z}_{i\pi_{i}} {u}_i^{T})$. Since $\zeta_{{u}}$ is a quadratic function of $p_{i\pi_{i}}$, we can also denote $\zeta_{{u}}$ as $\zeta_{{p}}$, where ${p}=\{p_{i\pi_{i}}: i=1\dots, I, \pi_{i} \in S_{n_{i}}\}$. Then, we can rewrite the optimization problem (\ref{opt: original form with continuous outcomes}) as the following optimization problem:
 \begin{equation}
 \label{eqn: signomial optimization problem}
\begin{aligned}
&\min_{p_{i\pi_{i}}, s_{i}, w_{ijk}} \ \ \ \zeta_{{p}} \\
& \textrm{subject to }   \quad p_{i\pi_{i}}\times s_i =\prod_{k=1}^{n_i}w_{i\pi_{i}(k)k}, \ i=1,...,I; \ \pi_{i} \in S_{n_i} \\
& \quad \quad \quad \quad \quad 
s_i=\sum_{\pi_{i} \in S_{n_i}} \prod_{k=1}^{n_i}w_{i\pi_{i}(k)k}, \ i=1,...,I; \ \pi_{i} \in S_{n_i} \\  
& \quad \quad \quad \quad \quad w_{ijk}=w_{i1k}^{z_{i(j)}/z_{i(1)}}, \ i=1,...,I; \ j,k=1,...,n_i  \\
& \quad \quad \quad \quad \quad  1 \leq w_{i1k}\leq \exp(\gamma z_{i(1)}). \ i=1,...,I; \ k=1,...,n_i 
\end{aligned}
\end{equation}
The problem (\ref{eqn: signomial optimization problem}) is a signomial program, which is computationally intractable even for small sample sizes (\citealp{boyd2007tutorial}), suggesting computational infeasibility for conducting an asymptotically exact sensitivity analysis in matched observational studies when both the exposure and the outcome are continuous. Therefore, a meaningful future research direction is to find a statistically conservative but computationally feasible sensitivity analysis method in this case, possibly through deriving a computationally efficient algorithm for finding an informative lower bound for the optimization problem (\ref{eqn: signomial optimization problem}).

\section{Application: A matched cohort study on the effect of early-life lead exposure on the incidence of committing offensive behaviors}

We apply our new methods to a matched cohort study on the effect of early-life lead exposure on later-life juvenile delinquency (offensive behaviors). We look at the binary outcome of whether the study unit (teen) committed a later-life offense worthy of complaint. We match study units with similar measured pre-exposure confounders using the \texttt{nbpfull} package from (\citealp{bo_continuous}), which resulted in 2007 matched sets with 4134 study units (see Appendix E for details). We conduct two-step confounder balance diagnostics to assess the matching quality for the measured confounders. In the first step, we conduct a preliminary balance diagnosis for each unmeasured confounder by looking at the corresponding standardized mean difference between high-dose (above the median dose of the corresponding matched set) and low-dose (below the median dose of the corresponding matched set) groups \citep{bo_continuous} and conducting the corresponding Kolmogorov-Smirnov test for equality of confounder distribution between high-dose and low-dose groups. The results are summarized in Table~\ref{tab: balance} in Appendix E in the online supplementary materials, which shows that for the measured confounders, the absolute standardized mean differences (SMDs) are all less than or equal to $0.01$ (with most of these absolute SMDs being nearly zero) and $p$-values reported by the Kolmogorov-Smirnov test are all greater than 0.8 (with most of these $p$-values being nearly one). In the second step, we further test the uniform exposure dose assignment assumption assuming no unmeasured confounding (i.e., assumption (\ref{eqn: random assignment assumption}) in Section~\ref{subsec: notations for perfect matching}) through generalizing the existing randomization-based balance tests in the binary exposure case \citep{gagnon_cpt, branson2021randomization, biased_randomization} to the continuous exposure case. The detailed procedure is described in Appendix G of the online supplementary material. After adopting this randomization-based balance test to our matched dataset, we failed to reject the assumption (\ref{eqn: random assignment assumption}) based on measured confounder information under the significance level $0.1$. Putting the above results together, the measured confounders are sufficiently balanced in the matched dataset, and a remaining major concern is how unmeasured confounding bias would affect our inference results, which can be addressed using the newly developed sensitivity analysis methods in this work. Specifically, in Table~\ref{tab:outcome}, for various sensitivity parameters $\Gamma = \exp(\gamma)$, we reported: 1) the worst-case $p$-value under Fisher's sharp null $H_{0}$ and 2) the worst-case 90\% confidence interval (CI) for the threshold attributable effect (TAE) with the dose threshold being set at the CDC marked exposure threshold of 3.5 $\mu$g/dl. Since there were 25 cases (i.e., outcomes being one) in the matched dataset with exposure above this CDC threshold, the upper bound for the TAE is 25. The test statistic used for testing the sharp null of no effect $H_{0}$ and conducting inference for the TAE is $\sum_{i=1}^I \sum_{j=1}^{n_i} \mathbbm{1}\{Z_{ij} > 3.5\}R_{ij}$, which is the number of cases of committing a juvenile offense among the study units who were exposed to a lead level higher than the CDC marked level of 3.5 $\mu$g/dl. We find that under the no unmeasured confounding assumption ($\Gamma=1$), the $p$-value under the sharp null $H_{0}$ is $0.021$, and the $90\%$ CI of the threshold attributable effect (TAE) has a lower bound of $5$. As $\Gamma$ increases, we find that: 1) the worst-case $p$-value under $H_{0}$ increases, crossing the $0.05$ significance level at around $\Gamma = 1.25$ and the $0.1$ significance level at around $\Gamma=1.5$; and 2) the lower bound of the $90\%$ CI of the TAE decreases, equaling zero at $\Gamma > 1.5$. Since $\Gamma = 1.5$ corresponds to a non-trivial but not very large strength of unmeasured confounding (\citealp{rosenbaum2009amplification}), the causal conclusion (assuming no unmeasured confounding) that early-life lead exposure increases the risk of committing serious offensive behaviors may not be very insensitive to potential unmeasured confounding and should be further carefully investigated by domain scientists.

\begin{table}[ht]
\centering
\caption{Sensitivity analysis for the sharp null $H_{0}$ and threshold attributable effect (TAE).}
\begin{tabular}{cccccc}
\hline
 & $\Gamma=1.000$ & $\Gamma=1.125$ & $\Gamma=1.250$  & $\Gamma=1.375$ & $\Gamma=1.500$ \\ 
  \hline 
 Worst-case $p$-value for $H_{0}$  & 0.021 & 0.034 & 0.052 & 0.074 & 0.100
 \\
 Worst-case 90\% CI for TAE   & $[5,25]$ &  $[4,25]$ & $[3,25]$  & $[2,25]$  &  $[1,25]$ \\
   \hline
\end{tabular}
\label{tab:outcome} 
\end{table}

\section{Discussion}
In this paper, we developed sensitivity analysis methods for matched observational studies with continuous exposures and binary outcomes. Our results suggest two future research directions. First, since our design sensitivity formula and power simulations suggest that different test statistics may perform very distinctly in a sensitivity analysis, a future study on developing a better adaptive approach to select the appropriate test statistic in a sensitivity analysis would be useful. A possible strategy is to extend the previous adaptive approaches (\citealp{rosenbaum2012testing}; \citealp{heng2021increasing}) to the continuous exposure case. Second, since our results in Section~\ref{sec: hardness of SA for continuous outcomes} suggest that conducting an exact or asymptotically exact sensitivity analysis when both the exposure and outcome are continuous could be computationally infeasible in general settings, it would be helpful to instead find a computationally feasible way to conduct a conservative sensitivity analysis in such settings (i.e., a trade-off between statistical and computational efficiencies).

\section*{Acknowledgement}
The authors would like to thank the participants of ACIC 2023 for their helpful comments. Jeffrey Zhang was supported in part by the National Institutes of Health (NIH) grant R01HD101415. The work of Dylan Small was supported in part by a grant from the NIH (\#5R01AG065276-02). A grant from the New York University Research Catalyst Prize and a New York University School of Global Public Health Research Support Grant partially supported Siyu Heng's work. 

\section*{Supplementary Material}
Online supplementary materials include technical proofs, additional details, examples, simulations, and discussions.

\bibliographystyle{apalike}
\bibliography{references}

\clearpage

%% Here are the title, author names and addresses

\begin{center}
    \large \bf Online Supplementary Materials for ``Sensitivity Analysis for Matched Observational Studies with Continuous Exposures and Binary Outcomes"
\end{center}

\section*{Appendix A: Technical Proofs}

\subsection{Proof of Theorem 1}

Before proving Theorem 1, we state and prove the following two useful lemmas. For any $ {s}\in \Omega$, we define $ {s}^{1}=( {s}^{1}_{1}, \dots,  {s}^{1}_{I})$ and let $\text{sum}( {s}^{1})$ denote the summation of all the elements of $ {s}^{1}$.
\begin{lemma}
\label{lemma: identity}
For any $ {s},  {s}^{*}\in \Omega$, we have
\begin{equation*}
   \textnormal{sum}(( {s} \lor  {s}^{*})^1) + \textnormal{sum}(( {s} \land  {s}^{*})^1) = \textnormal{sum}( {s}^1) + \textnormal{sum}( {s}^{*1}). 
\end{equation*}
\end{lemma}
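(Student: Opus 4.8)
The plan is to reduce the identity to the elementary fact that $\max(a,b) + \min(a,b) = a + b$ for real numbers $a, b$, applied coordinatewise. First I would note that the length of the vector $ {s}^1_i$ equals $m_i$ (the number of units in matched set $i$ with outcome $1$), which is determined by $ {R}$ and hence is the same for every element of $\Omega$; in particular $ {s}^1_i$ and $ {s}^{*1}_i$ are vectors of the same length, so the elementwise $\max$ and $\min$ defining $( {s}\lor {s}^*)^1_i$ and $( {s}\land {s}^*)^1_i$ are well-defined and live in the same index set.

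Next, fixing a matched set $i$ and a coordinate $\ell \in \{1,\dots,m_i\}$, write $a = ( {s}^1_i)_\ell$ and $b = ( {s}^{*1}_i)_\ell$. By the definitions of the join and meet, $(( {s}\lor {s}^*)^1_i)_\ell = \max(a,b)$ and $(( {s}\land {s}^*)^1_i)_\ell = \min(a,b)$, so their sum is $a+b = ( {s}^1_i)_\ell + ( {s}^{*1}_i)_\ell$. Summing this identity over $\ell = 1,\dots,m_i$ and then over $i = 1,\dots,I$ yields
\[
\textnormal{sum}(( {s}\lor {s}^*)^1) + \textnormal{sum}(( {s}\land {s}^*)^1) = \textnormal{sum}( {s}^1) + \textnormal{sum}( {s}^{*1}),
\]
as claimed.

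There is essentially no obstacle here; the only point requiring a moment's care is the bookkeeping that $ {s}^1$ and $ {s}^{*1}$ are concatenations of vectors of matching lengths (so that the coordinatewise pairing makes sense), which follows immediately because $m_i$ depends only on the fixed outcome vector $ {R}$. The result is stated for the $ {s}^1$ components, but since the $ {s}^0_i$ component of any element of $\Omega$ is determined by its $ {s}^1_i$ component (the doses assigned to outcome-$0$ units are the complement of those assigned to outcome-$1$ units within the matched set), an analogous identity for $\textnormal{sum}( {s})$ itself follows as well if needed downstream.
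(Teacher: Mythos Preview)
Your proposal is correct and follows essentially the same approach as the paper: both reduce the identity to the elementary fact $\max(a,b)+\min(a,b)=a+b$ applied coordinatewise, then sum over coordinates and matched sets. Your version adds a small bookkeeping remark about the vectors $ {s}^1_i$ and $ {s}^{*1}_i$ having the same length, which the paper leaves implicit, but the argument is otherwise identical.
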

\begin{proof}
Recall that the $i$-th entry of $( {s} \lor  {s}^{*})^1$ is $\max\{ {s}_i^1,  {s}_i^{*1}\}$ and the $i$-th entry of $( {s} \land  {s}^{*})^1$ is $\min\{ {s}_i^1,  {s}_i^{*1}\}$. Then for each $i$, by the fact that $\max\{a,b\} + \min\{a,b\} = a+b$ for any $(a, b)$, we have $( {s} \lor  {s}^{*})_i^1 + ( {s} \land  {s}^{*})_i^1 =  {s}_i^1 +  {s}_i^{*1}$ for all $i$. Summing these identities over $i$ yields the desired result.
\end{proof}

\begin{lemma}
\label{useful2}
For any $ {s},  {s}^{*}\in \Omega$, we have
\begin{equation*}
    ( {s} \land  {s}^{*})^0 \geq  {s}^0,  {s}^{*0} \textit{ element-wise and } ( {s} \lor  {s}^{*})^0 \leq  {s}^0,  {s}^{*0} \textit{ element-wise.}
\end{equation*}
\end{lemma}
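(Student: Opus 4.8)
The plan is to reduce the statement to a single matched set and then to a purely combinatorial fact about a finite set and its complement. Since every vector in the statement is the concatenation over $i=1,\dots,I$ of its matched-set components, and the partial order, join, and meet all act matched-set-wise, it suffices to prove $(s\land s^*)_i^0\ge s_i^0$, $(s\land s^*)_i^0\ge s_i^{*0}$, $(s\lor s^*)_i^0\le s_i^0$, and $(s\lor s^*)_i^0\le s_i^{*0}$ for a fixed $i$; the full claim follows by stacking. Fixing $i$, I would write $D_i$ for the (assignment-independent) set of $n_i$ distinct doses in matched set $i$ and identify an admissible assignment within that set with the size-$m_i$ subset $A\subseteq D_i$ of doses given to the outcome-$1$ units, so that $s_i^1$ is the increasing enumeration of $A$ and $s_i^0$ that of $D_i\setminus A$. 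I would then invoke the fact, already established in the excerpt where $(s\land s^*)_i$ and $(s\lor s^*)_i$ are shown to be well-defined elements of $\Omega$, that $(s\land s^*)_i$ is the assignment whose outcome-$1$ dose vector equals $\min\{s_i^1,s_i^{*1}\}$ (elementwise), its outcome-$0$ vector being the increasing enumeration of the corresponding complement in $D_i$, and likewise for $(s\lor s^*)_i$ with $\max$.

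The main tool I would use is an ``empirical count'' description of the elementwise order. For a finite set $S\subseteq\mathbb{R}$ of size $r$ and $t\in\mathbb{R}$, put $F_S(t)=|\{x\in S:x\le t\}|$. Then for two sets $S,S'$ of the \emph{same} size, $\mathrm{sort}(S)\le\mathrm{sort}(S')$ elementwise holds if and only if $F_S(t)\ge F_{S'}(t)$ for every $t$ — immediate, since the $j$-th smallest element of $S$ is $\le t$ precisely when $F_S(t)\ge j$. The second ingredient is the conservation identity $F_A(t)+F_{D_i\setminus A}(t)=F_{D_i}(t)$, valid for every assignment $A$ and every $t$, whose right-hand side is fixed (independent of $A$).

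I would then chain these facts. Let $A$ realize $s_i$ and $B$ realize $(s\land s^*)_i$, so $\mathrm{sort}(B)=\min\{s_i^1,s_i^{*1}\}\le s_i^1=\mathrm{sort}(A)$ elementwise; the count description gives $F_B(t)\ge F_A(t)$ for all $t$; the conservation identity then gives $F_{D_i\setminus B}(t)=F_{D_i}(t)-F_B(t)\le F_{D_i}(t)-F_A(t)=F_{D_i\setminus A}(t)$ for all $t$; and the count description, read in reverse, gives $\mathrm{sort}(D_i\setminus B)\ge\mathrm{sort}(D_i\setminus A)$ elementwise, i.e. $(s\land s^*)_i^0\ge s_i^0$. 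Swapping the roles of $s$ and $s^*$ gives $(s\land s^*)_i^0\ge s_i^{*0}$, and running the identical three-step chain with all inequalities reversed (using $\max$ in place of $\min$) yields $(s\lor s^*)_i^0\le s_i^0$ and $(s\lor s^*)_i^0\le s_i^{*0}$. Stacking over $i$ completes the argument.

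The step I expect to be the only real subtlety is the passage between the outcome-$1$ doses and the outcome-$0$ doses — that making the $1$-doses elementwise smaller forces the $0$-doses elementwise larger. This is exactly what the conservation identity $F_A+F_{D_i\setminus A}=F_{D_i}$ captures, but it relies on $\min\{s_i^1,s_i^{*1}\}$ and $\max\{s_i^1,s_i^{*1}\}$ genuinely being outcome-$1$ dose vectors of elements of $\Omega$, so that taking complements within the fixed set $D_i$ is legitimate; fortunately this well-definedness has already been verified in the excerpt, so no additional work is needed there.
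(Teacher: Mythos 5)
Your proof is correct. The lemma reduces, exactly as you observe, to a single per-matched-set combinatorial fact: making the sorted vector of outcome-$1$ doses elementwise smaller forces the sorted vector of the complementary outcome-$0$ doses to become elementwise larger. The paper proves this same fact, but by induction on the number $m_i$ of outcome-$1$ units, with a case analysis on whether the smallest outcome-$1$ doses of $s_i$ and $s_i^*$ coincide and an auxiliary intermediate assignment $\widehat{s}_i$ in the strict case. Your route replaces that induction with the counting-function characterization of the elementwise order, namely that for equal-size finite sets $\mathrm{sort}(S)\le\mathrm{sort}(S')$ elementwise if and only if $F_S(t)\ge F_{S'}(t)$ for all $t$, combined with the conservation identity $F_A(t)+F_{D_i\setminus A}(t)=F_{D_i}(t)$; both the equivalence and the identity are verified by one-line arguments, and the chain you describe ($\min$ lowers $F$-dominance of the $1$-part, conservation transfers it with reversed sign to the $0$-part, and the equivalence converts back) is airtight. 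The two approaches deliver exactly the same conclusion, but yours is shorter, dispenses with the inductive bookkeeping, and makes the mechanism --- complementation within the fixed dose set $D_i$ reverses the dominance order --- completely transparent. The one external dependency you correctly flag, that $\min\{s_i^1,s_i^{*1}\}$ and $\max\{s_i^1,s_i^{*1}\}$ are genuinely the sorted outcome-$1$ vectors of elements of $\Omega$ (so that complementing within $D_i$ is legitimate), is asserted in the main text where the lattice operations are defined, and the paper's own proof carries the same dependency implicitly.
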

\begin{proof}
Note that we only need to prove that for any $ {s},  {s}^{*}\in \Omega$,
if $ {s}^{1} \leq  {s}^{*1}$ element-wise, then $ {s}^{0} \geq  {s}^{*0}$ element-wise (i.e., $ {s}_{i}^{0} \geq  {s}_{i}^{*0}$ element-wise for all $i$). Then the desired result will follow immediately. 

Let $ {d}_{i}=(z_{i(1)}, \dots, z_{i(n_{i})})$ be the ordered exposure doses in the matched set $i$, with $n_{i}$ doses in total. Without loss of generality, we assume no ties with doses and take the doses to be $[n_{i}]=\{1,\ldots,n_{i}\}$. It is clear that $ {s}_{i}$ and $ {s}^{*}_{i}$ are in one-to-one correspondence with sets of indices in $ {d}$, i.e., one group of indices corresponds to doses assigned to subjects with outcome 1 and another group of indices correspond to doses assigned to subjects with outcome 0. Fix some $m_{i} < n_{i}$ to be the number of subjects who experience outcome 1. Let $(s_{i1}^0,\ldots, s_{i(n_{i}-m_{i})}^0)$ and $(s_{i1}^{*0},\ldots, s_{i(n_{i}-m_{i})}^{*0})$ be the indices of subjects who experience outcome 0 in matched set $i$, and $(s_{i(n_{i}-m_{i}+1)}^1,\ldots, s_{in_{i}}^1)$ and $(s_{i(n_{i}-m_{i}+1)}^{*1},\ldots, s_{in_{i}}^{*1})$ the indices of subjects who experience outcome 1 in matched set $i$, ordered from smallest to largest. Suppose that $s_{ij}^1 \leq s_{ij}^{*1}$ for all $j=n_{i}-m_{i}+1,\dots, n_{i}$. We now show that $s_{ij}^0 \geq s_{ij}^{*0}$ for all $j=1,\dots, n_{i}-m_{i}$ by induction.

\textbf{Base cases: } $m_{i} = 0,1$. If $m_{i}=0$, $s_{ij}^0$ and $s_{ij}^{*0}$ are clearly identical for all $j$, so the (in)equality holds. If $m_{i}=1$ and $s_{i1}^1=s_{i1}^{*1}$, then again $s_{ij}^0$ and $s_{ij}^{*0}$ are identical for all $j$, so the (in)equality holds. If $s_{i1}^1<s_{i1}^{*1}$, then $ {s}_{i}^0$ and $ {s}_{i}^{*0}$ differ only in that $s_{i1}^{*1}$ is in but $s_{i1}^1$ is not in $ {s}^0$, and $s_1^{1}$ is in but $s_1^{*1}$ is not in $ {s}^{*0}$. So to change from $ {s}^{*0}$ to $ {s}^{0}$ requires changing one element in $ {s}^{*0}$ with a strictly larger element, so after ordering from smallest to largest, it is clear that $s_{ij}^0 \geq s_{ij}^{*0}$ for all $j$.

\textbf{Inductive Hypothesis:} The statement is true for all fixed $m_{i}=1,\dots, k-1$ (here $m_{i}\leq n_{i}$).

\textbf{Inductive Step:} Suppose that there are $k$ subjects with outcome 1. There are two cases: $s_{i1}^{1} = s_{i1}^{*1} $ (Case 1) and $s_{i1}^{1} < s_{i1}^{*1} $ (Case 2). In Case 1, it is obvious that $s_{ij}^0 = s_{ij}^{*0}$ for $j=1,\dots, s_{i1}^{1}-1$. Next, discarding the elements that are less than or equal to $s_{i1}^{1} = s_{i1}^{*1} $, we have a new stratum with $n-s_{i1}^{1}$ subjects and $k-1$ of them have outcome 1. Invoking the inductive hypothesis, we have $s_{ij}^0 \geq s_{ij}^{*0}$ for the remaining $j$. In Case 2, consider the $ {\wh{s}}_{i}=( {\wh{s}}_{i}^{0},  {\wh{s}}_{i}^{1})$ with $ {\wh{s}}_{i}^1 = (s_{i(n_{i}-k+1)}^1,s_{i(n_{i}-k+2) }^{*1},\ldots,s_{in_{i}}^{*1}) $. Following similar arguments for the base case $m=1$, we have $\wh{s}_{ij}^0 \geq s_{ij}^{*0}$ for all $j$. Since comparing $ {\wh{s}}_i^0$ and $ {s}_i^0$ belongs to Case 1, we have $\wh{s}_{ij}^0 \leq s_{ij}^{0}$ for all $j$. Therefore, we have ${s}_{ij}^{0} \geq s_{ij}^{*0}$ for all $j$.
\end{proof}
Then we are ready to present a proof of Theorem 1. 
\begin{proof}(Theorem 1)
First, since exposure dose assignments are independent across matched sets. We have $\text{pr}( {S}_{ {u}^{+}}= {s}\lor  {s}^{*}) = \prod_{i=1}^I \text{pr}( {S}_{ {u}^{+}i}=( {s}\lor  {s}^{*})_{i})$, $\text{pr}( {S}_{ {u}}=( {s}\land  {s})^{*}) = \prod_{i=1}^I \text{pr}( {S}_{ {u}i}=( {s}\land  {s}^{*})_{i})$, $\text{pr}( {S}_{ {u}^{+}}= {s}) = \prod_{i=1}^I \text{pr}( {S}_{ {u}^{+}i}= {s}_i)$, and $\text{pr}( {S}_{ {u}}= {s}^{*}) = \prod_{i=1}^I \text{pr}( {S}_{ {u}i}= {s}_{i}^{*})$. Thus, it suffices to show the desired inequality for each matched set $i$, i.e., $\text{pr}( {S}_{ {u}^{+}i}=( {s}\lor  {s}^{*})_{i})\times \text{pr}( {S}_{ {u}i}=( {s}\land  {s}^{*})_{i}) \geq \text{pr}( {S}_{ {u}^{+}i}= {s}_i) \times \text{pr}( {S}_{ {u}i}= {s}_i^{*})$. For an unmeasured confounder vector $ {u}_{i}$ corresponding to matched set $i$, let $ {u}_{i,1}$ represent the entries of $ {u}_{i}$ for subjects with outcome 1, and $ {u}_{i, 0}$ represent the entries of $ {u}_{i}$ for subjects with outcome 0. Also, let $ {s}^a_{i,\pi_{i}^a}$ denote the appropriate vector $ {s}^a_{i}$ permuted according to the permutation $\pi^a_{i}$ ($a=0,1$). We can decompose each term as follows:
\begin{equation*}
\begin{aligned}
\text{pr}( {S}_{ {u}^{+}i}=( {s}\lor  {s}^{*})_{i})&= h_{ {u}^{+}i}^{-1}\sum_{\pi_{i}^{0},\pi_{i}^{1}}\exp\{\gamma( {s}\lor  {s}^{*})_{i,\pi_{i}^{0}}^0 ( {u}^{+}_{i,0})^{T}\}\exp\{\gamma( {s}\lor  {s}^{*})_{i,\pi_{i}^{1}}^1 ( {u}^{+}_{i,1})^{T}\}\\
&= h_{ {u}^{+}i}^{-1}\big(m_{i}\big)! \big(n_{i}-m_{i}\big)! \exp\{\gamma\text{sum}(( {s}\lor  {s}^{*})_{i}^1)\}, \\ 
\text{pr}( {S}_{ {u}i}=( {s}\land  {s}^{*})_{i}) &= h_{ {u}i}^{-1}\sum_{\pi_{i}^{0},\pi_{i}^{1}}\exp\{\gamma( {s}\land  {s}^{*})_{i,\pi_{i}^{0}}^0  {u}_{i,0}^{T}\}\exp\{\gamma( {s}\land  {s}^{*})_{i, \pi_{i}^{1}}^1  {u}_{i,1}^{T}\},\\ 
\text{pr}( {S}_{ {u}^{+}i}= {s}_i) &= h_{ {u}^{+}i}^{-1}\sum_{\pi_{i}^{0},\pi_{i}^{1}}\exp\{\gamma {s}_{i, \pi_{i}^{0}}^0 ( {u}^{+}_{i,0})^{T} \} \exp\{\gamma {s}_{i,\pi_{i}^{1}}^1 ( {u}^{+}_{i,1})^{T}\} \\
&=h_{ {u}^{+}i}^{-1}\big(m_{i}\big)! \big(n_{i}-m_{i}\big)!\exp\{\gamma\text{sum}( {s}_{i}^1)\}, \\ 
\text{pr}( {S}_{ {u}i}= {s}_i^{*})&= h_{ {u}i}^{-1}\sum_{\pi_{i}^{0},\pi_{i}^{1}}\exp\{\gamma {s}^{*0}_{i, \pi_{i}^{0}}  {u}_{i,0}^{T}\}\exp\{\gamma {s}^{*1}_{i, \pi_{i}^{1}}  {u}_{i,1}^{T}\},
\end{aligned}
\end{equation*}
where 
\begin{equation*}
h_{ {u}^{+}i}=\sum_{\widetilde{\pi}_{i}\in S_{n_{i}}}\exp\{\gamma( {z}_{i\widetilde{\pi}_{i}} {u}_{i}^{+T})\}, \quad h_{ {u}i}=\sum_{\widetilde{\pi}_{i}\in S_{n_{i}}}\exp\{\gamma( {z}_{i\widetilde{\pi}_{i}} {u}_{i}^{T})\}.
\end{equation*}
Therefore, it suffices to show that for each fixed $\pi_{i}^{0}$, $\pi_{i}^{1}$, we have 
\begin{equation}
\label{eqn:holley_condition}
\begin{aligned}
\exp\{\gamma\text{sum}(( {s}\lor  {s}^{*})_{i}^1)\} &\times \exp\{\gamma( {s}\land  {s}^{*})_{i,\pi_{i}^{0}}^0 {u}_{i,0}^{T}\}\exp\{\gamma( {s}\land  {s}^{*})_{i,\pi_{i}^{1}}^1 {u}_{i,1}^{T}\} \\ & \geq \exp\{\gamma\text{sum}( {s}_{i}^1)\} \times \exp\{\gamma(  {s}^{*})_{i,\pi_{i}^{0}}^0 {u}_{i,0}^{T}\}\exp\{\gamma(  {s}^{*})_{i,\pi_{i}^{1}}^1 {u}_{i,1}^{T}\}.
\end{aligned}
\end{equation}
Since all entries of $ {u}_{i,1}$ are in $[0,1]$, we have $$( {s}^{*})_{i,\pi_{i}^{1}}^1  {u}_{i,1}^{T}-( {s}\land  {s}^{*})_{i,\pi_{i}^{1}}^1 {u}_{i,1}^{T} \leq \text{sum}(( {s}^{*})_{i,\pi_{i}^{1}}^1)-\text{sum}( {s}\land  {s}^{*1}_{i,\pi_{i}^{1}}).$$
Lemma~\ref{lemma: identity} and its proof imply that $$\text{sum}((  {s}^{*})_{i,\pi_{i}^{1}}^1)-\text{sum}(( {s}\land  {s}^{*})_{i,\pi_{i}^{1}}^1)=\text{sum}(( {s}\lor  {s}^{*})_{i,\pi_{i}^{1}}^1)- \text{sum}(( {s})_{i,\pi_{i}^{1}}^1),$$
which in turn implies that 
$$\text{sum}(( {s}\lor  {s}^{*})_{i,\pi_{i}^{1}}^1) + ( {s}\land  {s}^{*})_{i,\pi_{i}^{1}}^1  {u}_{i,1}^{T} \geq \text{sum}( {s}_{i,\pi_{i}^{1}}^1)+ {s}^{*1}_{i,\pi_{i}^{1}}  {u}_{i, 1}^{T}.$$ By Lemma~\ref{useful2}, the element-wise inequality $( {s}\land  {s}^{*})_{i,\pi_{i}^{0}}^0 \geq ( {s}^{*})_{i,\pi_{i}^{0}}^0$ holds. This implies that $( {s}\land  {s}^{*})_{i,\pi_{i}^{0}}^0  {u}_{i,0}^{T} \geq ( {s}^{*})_{i,\pi_{i}^{0}}^0 {u}_{i,0}^{T}$ since the elements of $ {u}_{i, 0}$ are nonnegative. Therefore, we have shown the desired inequality (\ref{eqn:holley_condition}). Applying this argument to each $\pi_{i}^{0}, \pi_{i}^{1}$ and each matched set $i$, we get the desired result.
\end{proof}

\subsection{Proof of Theorem 2}

Our proof can be viewed as a generalization of the corresponding proof in \citet{su2022treatment} to the continuous dose case. We first state a version of the Lindeberg-Feller central limit theorem (see also Lemma A5 in \citet{su2022treatment}):
\begin{lemma}
For each $n$, let $\xi_{n, m}, 1 \leq m \leq n$, be independent random variables with $E \xi_{n, m}=0$. Suppose that
\begin{enumerate}[(i)]
    \item $\sum_{m=1}^n E \xi_{n, m}^2 \rightarrow \sigma^2>0$
    \item  For all $c>0, \lim _{n \rightarrow \infty} \sum_{m=1}^n E\left[\xi_{n, m}^2 \mathbbm{1}\left\{\left|\xi_{n, m}\right|>c\right\}\right]=0$.
\end{enumerate}
Then $\Xi_n=\xi_{n, 1}+\ldots+\xi_{n, n} \stackrel{d}{\longrightarrow} N 
\left(0, \sigma^2\right)$ as $n \rightarrow \infty$.
\end{lemma}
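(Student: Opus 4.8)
The plan is to prove this Lindeberg--Feller central limit theorem by the classical characteristic-function argument; the statement is a textbook fact (it coincides with Lemma A5 of \citet{su2022treatment}), so in the final write-up one could simply cite it, but I outline the self-contained proof. Fix $t\in\mathbb{R}$, write $\varphi_{n,m}(t)=E[e^{it\xi_{n,m}}]$ and $\sigma_{n,m}^2=E\xi_{n,m}^2$. By independence the characteristic function of $\Xi_n$ is $\prod_{m=1}^{n}\varphi_{n,m}(t)$, so by the L\'evy continuity theorem it suffices to prove $\prod_{m=1}^{n}\varphi_{n,m}(t)\to e^{-\sigma^2 t^2/2}$ for each fixed $t$.

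First I would record two elementary inputs: (a) the Taylor remainder estimate $|e^{ix}-1-ix+\tfrac{x^2}{2}|\le\min(\tfrac{|x|^3}{6},\,x^2)$ for all real $x$; and (b) the telescoping inequality $\big|\prod_m z_m-\prod_m w_m\big|\le\sum_m|z_m-w_m|$ whenever $|z_m|,|w_m|\le 1$. A preliminary consequence of the Lindeberg condition (ii) is uniform asymptotic negligibility: for every $c>0$, $\sigma_{n,m}^2\le c^2+\sum_{m'}E[\xi_{n,m'}^2\mathbbm{1}\{|\xi_{n,m'}|>c\}]$, hence $\limsup_{n}\max_{m}\sigma_{n,m}^2\le c^2$, and therefore $\max_{m}\sigma_{n,m}^2\to 0$.

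Next, since $E\xi_{n,m}=0$, estimate (a) and truncation at level $c$ give $\big|\varphi_{n,m}(t)-(1-\tfrac{t^2}{2}\sigma_{n,m}^2)\big|\le E\big[\min(|t\xi_{n,m}|^3,(t\xi_{n,m})^2)\big]\le\tfrac{|t|^3 c}{6}\sigma_{n,m}^2+t^2\,E[\xi_{n,m}^2\mathbbm{1}\{|\xi_{n,m}|>c\}]$. Summing over $m$ and using $\sum_m\sigma_{n,m}^2\to\sigma^2$ together with (ii) yields $\limsup_n\sum_{m=1}^{n}\big|\varphi_{n,m}(t)-(1-\tfrac{t^2}{2}\sigma_{n,m}^2)\big|\le\tfrac{|t|^3 c}{6}\sigma^2$, and letting $c\downarrow 0$ this sum tends to $0$. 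For $n$ large, negligibility makes each $1-\tfrac{t^2}{2}\sigma_{n,m}^2$ lie in $[0,1]$, so (b) gives $\big|\prod_m\varphi_{n,m}(t)-\prod_m(1-\tfrac{t^2}{2}\sigma_{n,m}^2)\big|\to 0$. Finally $\prod_m(1-\tfrac{t^2}{2}\sigma_{n,m}^2)\to e^{-\sigma^2 t^2/2}$, since with $a_{n,m}=\tfrac{t^2}{2}\sigma_{n,m}^2$ one has $\sum_m a_{n,m}\to\tfrac{t^2}{2}\sigma^2$ while $\sum_m a_{n,m}^2\le(\max_m a_{n,m})\sum_m a_{n,m}\to 0$, which forces $\prod_m(1-a_{n,m})$ and $e^{-\sum_m a_{n,m}}$ to share the same limit. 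Combining the last two displays proves the claim, and the L\'evy continuity theorem then delivers $\Xi_n\stackrel{d}{\to}N(0,\sigma^2)$.

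The only delicate point --- hence the ``main obstacle'' --- is the truncation bookkeeping in the third paragraph: the Taylor remainder must be split at the threshold $c$, bounding the ``small-value'' contribution by the cubic term (which carries a factor $c$ sent to zero afterwards) and the ``large-value'' contribution by the quadratic term (killed by the Lindeberg condition), with the limits taken in the correct order ($n\to\infty$ first, then $c\downarrow 0$). Everything else is routine, so in the final write-up I would most likely just cite a standard reference for this lemma rather than reproduce the argument in full.
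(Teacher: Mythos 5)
Your proof is correct: it is the standard characteristic-function proof of the Lindeberg--Feller theorem (uniform negligibility from the Lindeberg condition, the Taylor remainder split at the truncation level $c$ with $n\to\infty$ before $c\downarrow 0$, and the telescoping product bound), and the paper itself gives no proof at all --- it states the lemma as a known classical result with a pointer to Lemma A5 of \citet{su2022treatment}. Your closing remark that one would simply cite a standard reference is exactly the treatment the paper adopts, so there is nothing substantive to reconcile; the only cosmetic slip is that you drop the factor $1/6$ from $\min\bigl(|x|^3/6,\,x^2\bigr)$ in one intermediate display before reinstating it in the final bound, which does not affect the argument.
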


Using notations close to those defined in Section 3 of the main manuscript, we first denote $f_i( {S}_{i})$ as the contribution to the test statistic from matched set $i$ (so the full test statistic is $T = f( {S})=\sum_{i=1}^{I}f_{i}( {S}_{i})$), where $ {S}_{i}$ represents the (random) allocation of the doses to subjects with outcome 1 and that with outcome 0, in matched set $i$. We let $ {s}_{i}$ denote some realization of the random variable $ {S}_{i}$. Let $ {s}_{i}^+$ and $ {s}_{i}^-$ denote the allocations that maximize and minimize $f_i$, respectively, i.e., we have $f_i( {s}_{i}^-) \leq f_i( {s}_{i}) \leq f_i( {s}_{i}^+)$ for any allocation $ {s}_{i}$. Let $W_i = f_i( {s}_{i}^+) - f_i( {s}_{i}^-)$, $\mu_i = E_{\Gamma}[f_i( {S}_{i})]$, and $\nu_i^2 = \text{Var}_{\Gamma}[f_i( {S}_{i})]$ (where $\Gamma=\exp(\gamma)$). Assuming independence across matched sets and $f = \sum_{i=1}^I f_i$, we have $\mu = E_{\Gamma}[f] = \sum_{i=1}^I \mu_i$ and $\sigma^2 = \text{Var}_{\Gamma}[f] = \sum_{i=1}^I \nu_i^2$. Let $m_i$ be the number of subjects with outcome $1$ out of $n_i$ total subjects in matched set $i$. Define $Y_i = (f_i( {S}_{i})-\mu_i)/\sigma$, which has mean zero and $\sum_{i=1}^I E_{\Gamma}(Y_i^2) = 1$. Consider the following condition:
\begin{condition}
\label{normal_reg}
Let $$l_{i}=\frac{1}{1+({n_i \choose m_i}-1)\exp\{\gamma(\sum_{j = \lceil n_i/2\rceil+1}^{n_i}z_{i(j)}-\sum_{j =1}^{\lfloor n_i/2\rfloor}z_{i(j)})\}} .$$ As $I \to \infty$, we have $$\max_{1 \leq i \leq I}\frac{W_i^2}{\sum_{i=1}^I l_{i}^3W_i^2} \to 0.$$
\end{condition}
Note that Condition~\ref{normal_reg} will hold when the size $n_i$ of a matched set is bounded, the test statistic contributed by each matched set is bounded from above and below by some fixed constants, and the number of discordant matched sets (i.e., not all doses are the same and not all the binary outcomes are the same) goes to infinity.
\begin{lemma}\label{lemma: asymptotic normality}
Let $ {r}\in \{0,1\}^{N}$ be a fixed binary potential outcome vector and $f( {S}) = T( {Z}, {r})$. Assume that Condition \ref{normal_reg} holds, then as $I\rightarrow \infty$, we have
$$\frac{f( {S})-E_{\Gamma}[f( {S})]}{\textnormal{Var}_{\Gamma}[f( {S})]} \stackrel{d}{\to} N(0,1).$$
\end{lemma}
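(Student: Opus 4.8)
The plan is to apply the Lindeberg--Feller central limit theorem stated above to the independent, mean-zero random variables $Y_1, \dots, Y_I$ defined above (recall $Y_i = \{f_i(S_i) - \mu_i\}/\sigma$, where $\sigma^2 = \mathrm{Var}_\Gamma[f(S)] = \sum_{i=1}^I \nu_i^2$), regarded as the triangular array for each fixed $I$. Independence across $i$ is exactly the assumption that matched sets are independent, and $\sum_{i=1}^I E_\Gamma(Y_i^2) = 1$ for every $I$, so condition~(i) of that lemma holds with limiting variance $1$ (here Condition~\ref{normal_reg} presupposes $\sum_{i=1}^I l_i^3 W_i^2 > 0$, hence $\sigma^2 > 0$). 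The whole task is therefore to verify the Lindeberg condition~(ii); I will in fact establish the stronger statement that $\max_{1\le i\le I}|Y_i| \to 0$ deterministically, from which~(ii) is immediate, since for any fixed $c > 0$ eventually every indicator $\mathbbm{1}\{|Y_i| > c\}$ vanishes.

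Two ingredients drive $\max_i |Y_i| \to 0$. The first is a uniform lower bound on the allocation probabilities: I claim $\mathrm{pr}(S_i = s_i) \ge l_i$ for every allocation $s_i$ in the support of $S_i$, uniformly over the unmeasured-confounder vector $u_i \in [0,1]^{n_i}$, with $l_i$ as in Condition~\ref{normal_reg}. Writing the unnormalized mass of an allocation as $N(s_i) = \sum_{\pi_i \in \Pi(s_i)} \exp\{\gamma z_{i\pi_i} u_i^T\}$, where $\Pi(s_i)$ collects the $m_i!(n_i-m_i)!$ permutations consistent with $s_i$, the key estimate is that for any two permutations $\pi_i, \widetilde{\pi}_i$, $z_{i\pi_i} u_i^T - z_{i\widetilde{\pi}_i} u_i^T = \sum_k (z_{i\pi_i(k)} - z_{i\widetilde{\pi}_i(k)}) u_{ik} \le \sum_k (z_{i\pi_i(k)} - z_{i\widetilde{\pi}_i(k)})^+ \le \Delta_i$, where $\Delta_i = \sum_{j=\lceil n_i/2\rceil+1}^{n_i} z_{i(j)} - \sum_{j=1}^{\lfloor n_i/2\rfloor} z_{i(j)}$; the first inequality uses $u_{ik} \in [0,1]$, and the second uses that $z_{i\pi_i}$ and $z_{i\widetilde{\pi}_i}$ are rearrangements of the same dose multiset, so the differences sum to zero and $\sum_k(z_{i\pi_i(k)}-z_{i\widetilde{\pi}_i(k)})^+ = \tfrac12\sum_k |z_{i\pi_i(k)}-z_{i\widetilde{\pi}_i(k)}|$, which by the rearrangement inequality is maximized over $\pi_i, \widetilde{\pi}_i$ by opposite orderings and equals $\Delta_i$ (the median dose dropping out when $n_i$ is odd). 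A term-by-term comparison then gives $N(s_i') \le \exp(\gamma\Delta_i)\, N(s_i)$ for every allocation $s_i'$, and summing over the $\binom{n_i}{m_i}$ allocations and normalizing yields $\mathrm{pr}(S_i = s_i) = N(s_i)/\sum_{s_i'} N(s_i') \ge \{1 + (\binom{n_i}{m_i}-1)\exp(\gamma\Delta_i)\}^{-1} = l_i$.

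The second ingredient is a variance lower bound. With $s_i^+, s_i^-$ the allocations maximizing and minimizing $f_i$ and $W_i = f_i(s_i^+) - f_i(s_i^-)$, writing $\nu_i^2 = \tfrac12 E_\Gamma[\{f_i(S_i) - f_i(S_i')\}^2]$ for an independent copy $S_i'$ and keeping only the two terms corresponding to $(s_i^-, s_i^+)$ and $(s_i^+, s_i^-)$ gives $\nu_i^2 \ge \mathrm{pr}(S_i = s_i^-)\,\mathrm{pr}(S_i = s_i^+)\,W_i^2 \ge l_i^2 W_i^2 \ge l_i^3 W_i^2$, the last step because $l_i \in (0,1]$; hence $\sigma^2 = \sum_{i=1}^I \nu_i^2 \ge \sum_{i=1}^I l_i^3 W_i^2$. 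Moreover $f_i(S_i)$ and $\mu_i$ both lie in $[f_i(s_i^-), f_i(s_i^+)]$, so $|f_i(S_i) - \mu_i| \le W_i$, and therefore $|Y_i| \le W_i/\sigma$. Combining the two bounds, $\max_{1\le i\le I}|Y_i| \le \bigl(\max_{1\le i\le I} W_i^2 \,/\, \sum_{j=1}^I l_j^3 W_j^2\bigr)^{1/2} \to 0$ by Condition~\ref{normal_reg}. This establishes~(ii), and the Lindeberg--Feller lemma then yields $\{f(S) - E_\Gamma[f(S)]\}/\{\mathrm{Var}_\Gamma[f(S)]\}^{1/2} \stackrel{d}{\to} N(0,1)$, which is the claim of Lemma~\ref{lemma: asymptotic normality}.

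The step I expect to be the main obstacle is the uniform lower bound $\mathrm{pr}(S_i = s_i) \ge l_i$, and in particular pinning down the sharp ``dose spread'' $\Delta_i$: one must show that the log of the mass ratio between any two allocations, maximized jointly over permutations and over $u_i \in [0,1]^{n_i}$, equals $\gamma$ times the difference between the sum of the largest $\lfloor n_i/2\rfloor$ doses and the sum of the smallest $\lfloor n_i/2\rfloor$ doses. This couples a corner-of-the-cube optimization over $u_i$ (each coordinate driven to $0$ or $1$) with the rearrangement inequality over permutations, and requires mild care when $n_i$ is odd. Once this bound and the elementary variance estimate $\nu_i^2 \ge l_i^2 W_i^2$ are in hand, the remaining pieces --- the deterministic bound $|f_i(S_i) - \mu_i| \le W_i$, the reduction to $\max_i W_i/\sigma \to 0$, and invoking the Lindeberg--Feller lemma --- are routine.
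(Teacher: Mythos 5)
Your proposal is correct and follows essentially the same route as the paper's proof: Lindeberg--Feller applied to $Y_i=\{f_i(S_i)-\mu_i\}/\sigma$, the uniform lower bound $\mathrm{pr}(S_i=s_i)\ge l_i$ obtained from the dose-spread quantity $\Delta_i=\sum_{j=\lceil n_i/2\rceil+1}^{n_i}z_{i(j)}-\sum_{j=1}^{\lfloor n_i/2\rfloor}z_{i(j)}$, the deterministic bound $|f_i(S_i)-\mu_i|\le W_i$, and Condition~\ref{normal_reg} to kill $\max_i Y_i^2$. The only (immaterial) difference is the variance lower bound, where you use the independent-copy identity to get $\nu_i^2\ge l_i^2W_i^2\ge l_i^3W_i^2$ while the paper derives $\nu_i^2\ge p_i^-p_i^+(p_i^-+p_i^+)W_i^2\ge 2l_i^3W_i^2$; both suffice.
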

\begin{proof}
We have $f_i( {s}_{i}^-) \leq \mu_i \leq f_i( {s}_{i}^+)$, which implies $|f_i( {s}_{i}) -\mu_i| \leq f_i( {s}_{i}^+) - f_i( {s}_{i}^-) = W_i$ for any $ {s}\in \Omega$. Let $P_{ {u}, \Gamma}(\cdot)$ be the probability distribution for the dose assignment under the Rosenbaum sensitivity analysis model with given $\Gamma$. It follows that 
\begin{align*}
     \nu_i^2 &= E_{ {u}, \Gamma}[(f_i( {S}_{i})-\mu_i)^2]\\
     &=\sum_{ {s}_{i}} P_{ {u}, \Gamma}( {S}_{i} =  {s}_{i})\{f_i( {s}_{i})-\mu_i\}^2 \\
     &\geq p_i^-\{f_i( {s}_{i}^-)-\mu_i\}^2+p_i^+\{f_i( {s}_{i}^+)-\mu_i\}^2,
\end{align*}
where $p_i^- = P_{ {u}, \Gamma}( {S}_{i} =  {s}_{i}^-)$ and $p_i^+ = P_{ {u}, \Gamma}( {S}_{i} =  {s}_{i}^+)$. Moreover, we have
\begin{equation*}
    p_i^+ f_i( {s}_{i}^+) + (1-p_i^+)f_i( {s}_{i}^-) \leq \mu_i \leq  p_i^- f_i( {s}_{i}^-) + (1-p_i^-)f_i( {s}_{i}^+),
\end{equation*}
which implies that
\begin{equation*}
    \mu_i-f_i( {s}_{i}^-) \geq p_i^+\left\{f_i( {s}_{i}^+)-f_i( {s}_{i}^-)\right\}=p_i^+ W_i,
\end{equation*}
and
\begin{equation*}
     f_i( {s}_{i}^+)-\mu_i \geq p_i^-\left\{f_i( {s}_{i}^+)-f_i( {s}_{i}^-)\right\}=p_i^- W_i .
\end{equation*}
Therefore, we have
\begin{align*}
    \nu_i^2 &\geq p_i^-\{f_i( {s}_{i}^-)-\mu_i\}^2+p_i^+\{f_i( {s}_{i}^+)-\mu_i\}^2 \\
    &\geq p_i^- p_i^{+2} W_i^2+p_{i}^+ p_{i}^{-2} W_i^2\\
    &=p_i^- p_i^{+}\left(p_i^- +p_i^{+}\right) W_i^2.
\end{align*}
Since $u_{ij}\in [0,1]$, we have 
 \begin{align*}
     \max_{ {s}_{i},  {s}_{i}^{\prime},  {u}}\frac{\text{pr}_{ {u}, \Gamma}( {S}_{i} =  {s}_{i})}{\text{pr}_{ {u}, \Gamma}( {S}_{i}=  {s}_{i}^{\prime})}&= \frac{\sum_{\pi_{i}^{0},\pi_{i}^{1}}\exp\{\gamma {s}^{0}_{i, \pi_{i}^{0}}  {u}_{i,0}^{T}\}\exp\{\gamma {s}^{1}_{i, \pi_{i}^{1}}  {u}_{i,1}^{T}\}}{\sum_{\pi_{i}^{0},\pi_{i}^{1}}\exp\{\gamma {s}^{\prime 0}_{i, \pi_{i}^{0}}  {u}_{i,0}^{T}\}\exp\{\gamma {s}^{\prime 1}_{i, \pi_{i}^{1}}  {u}_{i,1}^{T}\}} \\
     &\leq \max_{\pi_{i} \pi'_{i},  {u} } \exp\{\gamma \sum_{j} (z_{i\pi_{i}(j)}-z_{i\pi'_{i}(j)})u_{ij}\} \\
     &\leq  \exp\Big\{\gamma\Big(\sum_{j = \lceil n_i/2\rceil+1}^{n_i}z_{i(j)}-\sum_{j =1}^{\lfloor n_i/2\rfloor}z_{i(j)}\Big)\Big\}.
 \end{align*}Therefore, we have  
\begin{equation*}
    p_i^+, \ p_i^- \geq \frac{1}{1+({n_i \choose m_i}-1)\exp\{\gamma(\sum_{j = \lceil n_i/2\rceil+1}^{n_i}z_{i(j)}-\sum_{j =1}^{\lfloor n_i/2\rfloor}z_{i(j)})\}} = l_{i},
\end{equation*}
where $z_{i(j)}$ are the order statistics of the doses in matched set $i$ in increasing order. This immediately implies that 
$\nu_i^2 \geq 2l_{i}^3W_i^2.$ Therefore, we have $$Y_i^2 =\frac{(f_i( {S}_{i})-\mu_i)^2}{\sigma^2}= \frac{(f_i( {S}_{i})-\mu_i)^2}{\sum_{i=1}^I\nu_i^2} \leq \frac{W_i^2}{\sum_{i=1}^I2l_{i}^3W_i^2}.$$
Then under Condition \ref{normal_reg}, as $I \to \infty$, we have
\begin{equation*}
    \max_{1 \leq i \leq I} Y_i^2 \to 0.
\end{equation*}
This implies that, for any $\epsilon>0$, we have $\sum_{i=1}^IE_{\Gamma}\left[Y_i^2 \mathbbm{1}\left\{\left|Y_i\right|>\epsilon\right\}\right] \to 0$. Invoking the Lindeberg-Feller central limit theorem, we proved the desired result.
\end{proof}

Then Theorem 2 can be easily proved as follows:
\begin{proof}
(Proof of Theorem 2) By Theorem 1, we have 
\begin{equation*}
    \max_{ {u}\in [0,1]^{N} }\text{pr}(T\geq t \mid \mathcal{F}, \mathcal{Z})=\text{pr}_{ {u}^{+}}(T\geq t \mid \mathcal{F}, \mathcal{Z}),
\end{equation*}
in which $ {u}^{+}= {R}$ (note that the observed outcome vector $ {R}$ is fixed under Fisher's sharp null $H_{0}$). Assuming that Condition~\ref{normal_reg} holds, by Lemma~\ref{lemma: asymptotic normality} (which holds for any $ {u}$ under Condition~\ref{normal_reg}), we have 
\begin{equation*}
    \text{pr}_{ {u}^{+}}(T\geq t \mid \mathcal{F}, \mathcal{Z})\doteq 1-\Phi\big[\{t-E_{ {u}^+}(T)\} / \sqrt{\text{var}_{ {u}^+}(T)}\big] \ \text{as $I\rightarrow \infty$}.
\end{equation*}

\end{proof}

\subsection{More details on the validity of asymptotic $p$-values }
In this section, we further verify the validity of the $p$-value obtained by the normal approximation in Theorem 2, which can also be viewed as an extension of the arguments in \citet{su2022treatment} to the continuous exposure case. Let $\Tilde{\mu}_i$ denote the expectation of $f_i( {S}_{i})$ under the Rosenbaum exposure dose model under some fixed $\gamma$ and $ {u}$ equal to the potential outcome vector $ {r}$ (which equals the observed outcome vector $ {R}$ under Fisher's sharp null $H_{0}$), and $\Tilde{\nu}_{i}$ be the corresponding variance of $f_i( {S}_{i})$ under this setting. Accordingly, let $\Tilde{\mu} = \sum_{i=1}^I \Tilde{\mu}_i$ and $\Tilde{\sigma}^2 = \sum_{i=1}^I \Tilde{\nu}_{i}^2$. Let $\mathcal{A} = \{i: \nu_i^2 > \Tilde{\nu}_{i}^2, 1 \leq i \leq I\}$. Next, we let $\Delta_{\Gamma}(\mu) = |\mathcal{A}|^{-1} \sum_{i \in \mathcal{A}}(\Tilde{\mu}_i-\mu_i)$ and  $\Delta_{\Gamma}(\sigma^{2}) = |\mathcal{A}|^{-1} \sum_{i \in \mathcal{A}}(\nu_i^2-\Tilde{\nu}_{i}^2)$. 
\begin{condition}
\label{norm_reg2}
As $I \to \infty$, we have $\frac{\Delta_{\Gamma}(\mu)}{\Delta_{\Gamma}(\sigma^{2})}\sqrt{2\sum_{i=1}^I l_{i}^3 W_i^2} \to \infty$.
\end{condition}
\begin{lemma}\label{lemma: asymptotic seperability}
Under Condition \ref{norm_reg2}, for any $k \geq 0$, there exists some $N(k)$ such that for any $I > N(k)$, we have $\Tilde{\mu}+k\Tilde{\sigma} \geq \mu+k\sigma$. 
\end{lemma}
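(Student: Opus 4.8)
The plan is to reduce the claim to the single inequality $\tilde{\mu}-\mu \geq k(\sigma-\tilde{\sigma})$ and then dispose of it by a case split on the sign of $\sigma-\tilde{\sigma}$, invoking Condition~\ref{norm_reg2} only when $\sigma>\tilde{\sigma}$. Since the test statistic here is stratum-wise isotonic, each $f_i$ is isotonic in $S_i$, so the per-matched-set Holley condition verified inside the proof of Theorem~\ref{holley_condition} (one factor of the product displayed there is exactly the Holley condition on the single-stratum distributive lattice $\Omega_i$), together with Lemma~\ref{useful1} applied with $A=S_{u^+,i}$ and $B=S_{u,i}$, gives that the mean of $f_i(S_i)$ under $u^+$ is at least its mean under the competing $u$; that is, $\tilde{\mu}_i\geq\mu_i$ for every $i$, and hence $\tilde{\mu}\geq\mu$. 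Therefore, if $\sigma\leq\tilde{\sigma}$, the desired inequality holds for every $I$ because $k\geq 0$, and it remains only to treat the case $\sigma>\tilde{\sigma}$.

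Assume $\sigma>\tilde{\sigma}$. The idea is to bound the two sides by averages over $\mathcal{A}=\{i:\nu_i^2>\tilde{\nu}_i^2\}$. For the left-hand side, dropping the nonnegative summands with $i\notin\mathcal{A}$ gives $\tilde{\mu}-\mu=\sum_{i=1}^I(\tilde{\mu}_i-\mu_i)\geq\sum_{i\in\mathcal{A}}(\tilde{\mu}_i-\mu_i)=|\mathcal{A}|\,\Delta_{\Gamma}(\mu)$. For the right-hand side, write $\sigma-\tilde{\sigma}=(\sigma^2-\tilde{\sigma}^2)/(\sigma+\tilde{\sigma})$; dropping the nonpositive summands with $i\notin\mathcal{A}$ gives $\sigma^2-\tilde{\sigma}^2=\sum_{i=1}^I(\nu_i^2-\tilde{\nu}_i^2)\leq\sum_{i\in\mathcal{A}}(\nu_i^2-\tilde{\nu}_i^2)=|\mathcal{A}|\,\Delta_{\Gamma}(\sigma^2)$. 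Here $\mathcal{A}\neq\emptyset$ (since $\sigma>\tilde{\sigma}\geq 0$ forces $\sum_i(\nu_i^2-\tilde{\nu}_i^2)>0$) so $\Delta_{\Gamma}(\sigma^2)>0$, and it therefore suffices to prove $\frac{\Delta_{\Gamma}(\mu)}{\Delta_{\Gamma}(\sigma^2)}\,(\sigma+\tilde{\sigma})\geq k$.

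The final step is to lower-bound $\sigma+\tilde{\sigma}$ by the quantity appearing in Condition~\ref{norm_reg2}. I would reuse the variance lower bound proved inside Lemma~\ref{lemma: asymptotic normality}, namely $\nu_i^2\geq p_i^- p_i^+(p_i^-+p_i^+)W_i^2\geq 2 l_i^3 W_i^2$, which is uniform over $u\in[0,1]^N$ and hence applies both to $\nu_i^2$ and to $\tilde{\nu}_i^2$; summing over $i$, $\sigma\geq\big(2\sum_{i=1}^I l_i^3 W_i^2\big)^{1/2}$ and $\tilde{\sigma}\geq\big(2\sum_{i=1}^I l_i^3 W_i^2\big)^{1/2}$, so $\sigma+\tilde{\sigma}\geq\big(2\sum_{i=1}^I l_i^3 W_i^2\big)^{1/2}$. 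Hence $\frac{\Delta_{\Gamma}(\mu)}{\Delta_{\Gamma}(\sigma^2)}(\sigma+\tilde{\sigma})\geq\frac{\Delta_{\Gamma}(\mu)}{\Delta_{\Gamma}(\sigma^2)}\sqrt{2\sum_{i=1}^I l_i^3 W_i^2}$, and by Condition~\ref{norm_reg2} the right-hand side tends to $\infty$ as $I\to\infty$, so it exceeds the fixed $k$ for all $I$ larger than some $N(k)$; this closes the case $\sigma>\tilde{\sigma}$ and hence proves the lemma. (In this case $\sum_i l_i^3 W_i^2>0$ automatically, since $\sum_i l_i^3 W_i^2=0$ would force $W_i=0$ for all $i$, hence $\sigma=\tilde{\sigma}=0$, contradicting $\sigma>\tilde{\sigma}$.)

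I expect the main obstacle to be careful bookkeeping rather than any deep difficulty: one must check that the terms outside $\mathcal{A}$ genuinely have the right sign on each side (nonnegative $\tilde{\mu}_i-\mu_i$ on the left, nonpositive $\nu_i^2-\tilde{\nu}_i^2$ on the right), that $\Delta_{\Gamma}(\sigma^2)>0$ so the displayed ratio is legitimate, and that the variance bound being recycled is precisely the version proved for arbitrary $u$ so that it also bounds $\tilde{\sigma}$ from below. It is important to isolate the case $\sigma\leq\tilde{\sigma}$ at the start, since there the ratio $\Delta_{\Gamma}(\mu)/\Delta_{\Gamma}(\sigma^2)$ can be of the form $0/0$ and Condition~\ref{norm_reg2} carries no information.
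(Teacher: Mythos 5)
Your proposal is correct and follows essentially the same route as the paper's proof: establish $\tilde{\mu}_i\geq\mu_i$ from the stochastic-dominance result, reduce to the sets in $\mathcal{A}$, bound $\sigma-\tilde{\sigma}$ by a difference-of-square-roots estimate (your factorization $(\sigma^2-\tilde{\sigma}^2)/(\sigma+\tilde{\sigma})$ in place of the paper's appeal to Lemma 1 of \citet{asymp_sep} is an equivalent, if anything slightly tighter, step), and close with the uniform-in-$u$ variance bound $2\sum_i l_i^3 W_i^2$ and Condition~\ref{norm_reg2}. Your explicit isolation of the case $\sigma\leq\tilde{\sigma}$ (where $\mathcal{A}$ may be empty and the ratio in Condition~\ref{norm_reg2} is not informative) is a small but genuine tidying-up of a point the paper leaves implicit.
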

\begin{proof}
First, note that we have $\Tilde{\mu}_i \geq \mu_i$ for all $i$ by the stochastic dominating property (shown in Section 3.1 of the main text) of taking $ {u}$ to be the potential outcome vector $ {r}$ (i.e., the observed outcome vector $ {R}$ under Fisher's sharp null $H_{0}$). We have

\begin{equation*}
\begin{aligned}
\Tilde{\mu}-\mu &= \sum_{i=1}^I (\Tilde{\mu}_i - \mu_i) \geq \sum_{i \in \mathcal{A}} (\Tilde{\mu}_i - \mu_i), \\ 
\sigma - \Tilde{\sigma} &= \sqrt{\sum_{i=1}^I\nu_i^2} - \sqrt{\sum_{i=1}^I\Tilde{\nu}_{i}^2} \\
&= \sqrt{\sum_{i \in \mathcal{A}}\nu_i^2 + \sum_{i \notin \mathcal{A}}\nu_i^2} - \sqrt{\sum_{i\in \mathcal{A}}\Tilde{\nu}_{i}^2+ \sum_{i \notin \mathcal{A}}\Tilde{\nu}_{i}^2} \\ &\leq \sqrt{\sum_{i \in \mathcal{A}}\nu_i^2 + \sum_{i \notin \mathcal{A}}\Tilde{\nu}_{i}^2} - \sqrt{\sum_{i\in \mathcal{A}}\Tilde{\nu}_{i}^2+ \sum_{i \notin \mathcal{A}}\Tilde{\nu}_{i}^2} \\ &\leq \frac{1}{2\Tilde{\sigma}}\sum_{i \in \mathcal{A}}(\nu_i^2 - \Tilde{\nu}_{i}^2)\\
&=|\mathcal{A}|\frac{\Delta_{\Gamma}(\sigma^{2})}{2\Tilde{\sigma}},
\end{aligned}
\end{equation*}
where the last inequality can be obtained by taking $a = 1$, $x' = \sum_{i\in \mathcal{A}}\nu_{i}^2$, $x = \sum_{i\in \mathcal{A}}\Tilde{\nu}_{i}^2$, and $b = \sum_{i \notin \mathcal{A}}\Tilde{\nu}_{i}^2$ in Lemma 1 in \citet{asymp_sep}. So for any $k \geq 0$, we have 
\begin{equation*}
\begin{aligned}
(\Tilde{\mu}+k\Tilde{\sigma}) - (\mu+k\sigma) &= (\Tilde{\mu}-\mu) -k(\sigma-\Tilde{\sigma}) \\ & \geq \sum_{i \in \mathcal{A}} (\Tilde{\mu}-\mu) - \frac{k}{2\Tilde{\sigma}} \sum_{i \in \mathcal{A}}(\nu_i^2 - \Tilde{\nu}_{i}^2) \\ &= |\mathcal{A}|\frac{\Delta_{\Gamma}(\sigma^{2})}{\Tilde{\sigma}}\left(\frac{\Delta_{\Gamma}(\mu)}{\Delta_{\Gamma}(\sigma^{2})}\Tilde{\sigma}- \frac{k}{2} \right).
\end{aligned}
\end{equation*}
From the proof of Lemma \ref{normal_reg}, we know that $\tilde{\sigma}^2 \geq 2\sum_{i=1}^I l_{i}^3 W_i^2 $, so we have 
\begin{equation}\label{eqn: lemma 5}
    (\Tilde{\mu}+k\Tilde{\sigma}) - (\mu+k\sigma) \geq |\mathcal{A}|\frac{\Delta_{\Gamma}(\sigma^{2})}{\Tilde{\sigma}}\left(\frac{\Delta_{\Gamma}(\mu)}{\Delta_{\Gamma}(\sigma^{2})}\sqrt{2\sum_{i=1}^I l_{i}^3 W_i^2}- \frac{k}{2} \right).
\end{equation}
Condition \ref{norm_reg2} ensures that the right-hand side of inequality (\ref{eqn: lemma 5}) is nonnegative when $I > N(k)$ for some sufficiently large $N(k)$, so the desired conclusion follows.
\end{proof}

We now verify that the normal approximation in Theorem 2 leads to asymptotically valid $p$-values under the proposed regularity conditions.
\begin{proposition}
Let $T=T( {Z}, {R})$ be a conditionally isotonic test statistic, where $ {Z}\in \mathbbm{R}^{N}$ and $ {R}\in \{0,1\}^{N}$. Let $\tilde{p} = 1-\Phi((T-\tilde{\mu})/\tilde{\sigma})$. Under Conditions \ref{normal_reg} and \ref{norm_reg2} and the sharp null $H_{0}$, we have
\begin{equation*}
    \limsup_{I \to \infty} \text{pr}(\tilde{p} \leq \alpha) \leq \alpha.
\end{equation*}
\end{proposition}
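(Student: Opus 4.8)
The plan is to reduce the rejection event $\{\tilde p\le\alpha\}$ to a tail event for the test statistic $T$ under its true (data‑generating) dose‑assignment distribution, and then bound that tail event by combining the asymptotic separability inequality of Lemma~\ref{lemma: asymptotic seperability} with the central limit theorem of Lemma~\ref{lemma: asymptotic normality}. Fix the true unmeasured confounder vector $u\in[0,1]^{N}$ driving the dose assignments, and write $\mu=E_{u,\Gamma}(T)$ and $\sigma^{2}=\text{var}_{u,\Gamma}(T)$ for the true mean and variance of $T$, while $\tilde\mu$ and $\tilde\sigma^{2}$ are those computed at $u=r=R$ as in the statement. Throughout I take $\alpha\le 1/2$ (the only case of practical interest) and set $k=\Phi^{-1}(1-\alpha)\ge 0$; note that under Fisher's sharp null the observed outcomes $R$ are fixed, so $\tilde\mu$ and $\tilde\sigma$ are nonrandom.

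First I would rewrite the event: since $\Phi$ is strictly increasing, $\tilde p=1-\Phi\{(T-\tilde\mu)/\tilde\sigma\}\le\alpha$ is equivalent to $T\ge\tilde\mu+k\tilde\sigma$. Second, I would apply Lemma~\ref{lemma: asymptotic seperability} with this $k\ge 0$: under Condition~\ref{norm_reg2} there exists $N(k)$ such that $\tilde\mu+k\tilde\sigma\ge\mu+k\sigma$ for all $I>N(k)$. Hence, for every $I>N(k)$,
\begin{equation*}
\text{pr}(\tilde p\le\alpha)=\text{pr}(T\ge\tilde\mu+k\tilde\sigma)\le\text{pr}(T\ge\mu+k\sigma)=\text{pr}\!\left(\frac{T-\mu}{\sigma}\ge k\right).
\end{equation*}
Third, I would invoke Lemma~\ref{lemma: asymptotic normality}, which applies to the true $u$ under Condition~\ref{normal_reg} (the bound $\nu_i^{2}\ge 2l_i^{3}W_i^{2}$ used in its proof, together with Condition~\ref{normal_reg}, keeps $\sigma^{2}$ from degenerating, so the standardization is well posed): $(T-\mu)/\sigma\stackrel{d}{\to}N(0,1)$ as $I\to\infty$. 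Because $\Phi$ is continuous, $\text{pr}\{(T-\mu)/\sigma\ge k\}\to 1-\Phi(k)=\alpha$, and taking $\limsup_{I\to\infty}$ in the displayed chain yields $\limsup_{I\to\infty}\text{pr}(\tilde p\le\alpha)\le\alpha$.

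The hard part is the second step — showing that the worst‑case normalizing constants dominate the true ones in the combined sense $\tilde\mu+k\tilde\sigma\ge\mu+k\sigma$ — but this is precisely Lemma~\ref{lemma: asymptotic seperability}. That lemma in turn rests on two ingredients already available: the stochastic‑dominance fact from Theorem~\ref{holley_condition} that the worst‑case allocation $u^{+}=R$ gives $\tilde\mu_i\ge\mu_i$ in every matched set, and a variance‑comparison bound (in the style of \citet{asymp_sep}) controlling $\sigma-\tilde\sigma$; Condition~\ref{norm_reg2} is exactly the regularity condition that forces the residual terms $|\mathcal{A}|\,\Delta_{\Gamma}(\mu)$ and $|\mathcal{A}|\,\Delta_{\Gamma}(\sigma^{2})$ into the right relative order via the factor $\sqrt{\sum_i l_i^{3}W_i^{2}}$. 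Once Lemmas~\ref{lemma: asymptotic normality} and~\ref{lemma: asymptotic seperability} are in hand, the only remaining subtleties are the elementary rewriting of the rejection event and the use of continuity of $\Phi$ to pass from convergence in distribution to convergence of the relevant tail probability, both of which are routine.
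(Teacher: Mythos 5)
Your proposal is correct and follows essentially the same route as the paper's proof: rewrite $\{\tilde p\le\alpha\}$ as $\{T\ge\tilde\mu+k\tilde\sigma\}$ with $k=\Phi^{-1}(1-\alpha)$, dominate this by $\{T\ge\mu+k\sigma\}$ via Lemma~\ref{lemma: asymptotic seperability} under Condition~\ref{norm_reg2}, and conclude with the Lindeberg--Feller normality of $(T-\mu)/\sigma$ from Lemma~\ref{lemma: asymptotic normality}. Your explicit remark that $\alpha\le 1/2$ is needed so that $k\ge 0$ (as required by Lemma~\ref{lemma: asymptotic seperability}) is a point the paper leaves implicit, but otherwise the arguments coincide.
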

\begin{proof}
 Let $T=f( {S})$ be a conditionally isotonic test statistic, and let $\mu$ and $\sigma$ be the expectation and standard deviation of $T$ under $H_{0}$, respectively. By Lemma~\ref{lemma: asymptotic seperability}, we know that any $k \geq 0$, there is some $N(k)$ such that for $I > N(k)$, we have $\Tilde{\mu}+k\Tilde{\sigma} \geq \mu+k\sigma$. Then $\text{pr}(T \geq \Tilde{\mu}+k\Tilde{\sigma}) \leq \text{pr}(T \geq \mu+k\sigma)$ for $I > N(k)$, which implies $\limsup_{I \to \infty}\text{pr}(T \geq \Tilde{\mu}+k\Tilde{\sigma}) \leq \limsup_{I \to \infty}\text{pr}(T \geq \mu+k\sigma)$. By Lemma~\ref{lemma: asymptotic normality}, we have $(T-\mu)/\sigma \stackrel{d}{\to} N(0,1)$. Therefore, we have
    \begin{equation*}
    \begin{aligned}
        \limsup_{I \to \infty} \text{pr}(\tilde{p} \leq \alpha) &= \limsup_{I \to \infty} \text{pr}(1-\Phi((T-\Tilde{\mu})/\Tilde{\sigma}) \leq \alpha) \\
        &= \limsup_{I \to \infty} \text{pr}((T-\widetilde{\mu})/\widetilde{\sigma} \geq \Phi^{-1}(1-\alpha) ) \\ 
        &\leq  \limsup_{I \to \infty} \text{pr}((T-\mu)/\sigma \geq \Phi^{-1}(1-\alpha) ) \\ 
       &=1-\Phi(\Phi^{-1}(1-\alpha)) = \alpha.
    \end{aligned}
    \end{equation*}
\end{proof}

\subsection{Proof of Theorem 3}
\begin{lemma}\label{lemma: double inequalities}
Consider a general class of stratum-wise isotonic test statistics with the form $T=\sum_{i=1}^{I}q_{i}( {Z}_i, {R}_i) = \sum_{i=1}^{I}\sum_{j=1}^{n_i} m(Z_{ij}) \times R_{ij}$, where $m$ is some arbitrary bounded and monotonically non-decreasing function. Suppose that $(Z_{i1}, \ldots,  Z_{in_i}, R_{i1}, \ldots,  R_{in_i}, n_i)$ are independent and identically distributed realizations from some multivariate distribution $F$, where $2 \leq n_i \leq M$ is an integer random variable, $Z_{ij}\in \mathbbm{R}$, and $R_{ij} \in \{0,1\}$. Suppose that, conditional on $n_i$, the $(Z_{i1},R_{i1}), \ldots, (Z_{in_i},R_{i_{n_i}})$ are pairwise independent and identically distributed, and $0 < \text{cor}(m(Z_{ij}),R_{ij}) < 1$. Then we have 
\begin{equation}\label{inequality: double less}
    E\left[\max_{\pi_{i} }  q_{i}( {Z}_{i\pi_{i} }, {R}_{i})\right]>E\left[ q_{i}( {Z}_{i}, {R}_{i})\right] > E\left[\sum_{\pi_{i} } \frac{1}{n_i!}\times q_{i}( {Z}_{i\pi_{i} }, {R}_{i})\right].
\end{equation}
\end{lemma}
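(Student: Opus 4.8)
The plan is to reduce the statement to an identity inside a single matched set, exploiting that $q_{i}({Z}_{i\pi_{i}},{R}_{i})=\sum_{j:R_{ij}=1}m(Z_{i\pi_{i}(j)})$ depends on the permutation $\pi_{i}$ only through which $m_{i}:=\sum_{j}R_{ij}$ of the realized dose values get attached to the outcome-$1$ units. After conditioning on $({Z}_{i},{R}_{i})$ I would first record three elementary facts: (a) the identity permutation gives the observed value $q_{i}({Z}_{i},{R}_{i})=\sum_{k:R_{ik}=1}m(Z_{ik})$; (b) $\max_{\pi_{i}}q_{i}({Z}_{i\pi_{i}},{R}_{i})$ equals the sum of the $m_{i}$ largest values among $m(Z_{i1}),\dots,m(Z_{in_{i}})$, since $m$ is non-decreasing and the optimal assignment attaches the $m_{i}$ largest doses to the outcome-$1$ units; and (c) $\frac{1}{n_{i}!}\sum_{\pi_{i}}q_{i}({Z}_{i\pi_{i}},{R}_{i})=\frac{m_{i}}{n_{i}}\sum_{k=1}^{n_{i}}m(Z_{ik})$, because under a uniformly random permutation the image of any fixed position is uniform on $\{1,\dots,n_{i}\}$. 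From (a)--(c) we get $\max_{\pi_{i}}q_{i}({Z}_{i\pi_{i}},{R}_{i})\ge q_{i}({Z}_{i},{R}_{i})$ pointwise, so what remains is to upgrade both comparisons in (\ref{inequality: double less}) to strict inequalities after taking expectations.

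For the right-hand inequality I would condition on $n_{i}=n$ and use that, conditionally on $n_{i}$, the pairs $(Z_{i1},R_{i1}),\dots,(Z_{in},R_{in})$ are pairwise i.i.d. This gives $E[q_{i}({Z}_{i},{R}_{i})\mid n_{i}=n]=n\,E[m(Z_{i1})R_{i1}\mid n_{i}=n]$, while from (c), after writing $\frac{m_{i}}{n}\sum_{k}m(Z_{ik})=\frac{1}{n}\sum_{k,l}m(Z_{ik})R_{il}$ and splitting the double sum into the $n$ diagonal terms and the $n(n-1)$ off-diagonal terms (the latter factoring by pairwise independence), one obtains $E\big[\tfrac{1}{n!}\sum_{\pi_{i}}q_{i}({Z}_{i\pi_{i}},{R}_{i})\mid n_{i}=n\big]=E[m(Z_{i1})R_{i1}\mid n_{i}=n]+(n-1)\,E[m(Z_{i1})\mid n_{i}=n]\,E[R_{i1}\mid n_{i}=n]$. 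Subtracting, the gap equals $(n-1)\,\text{Cov}(m(Z_{i1}),R_{i1}\mid n_{i}=n)$, which is strictly positive because $n\ge 2$ and $\text{cor}(m(Z_{ij}),R_{ij})>0$ (the correlation being well defined also guarantees both conditional variances are positive). Averaging over the distribution of $n_{i}$ then yields $E[q_{i}({Z}_{i},{R}_{i})]>E[\tfrac{1}{n_{i}!}\sum_{\pi_{i}}q_{i}({Z}_{i\pi_{i}},{R}_{i})]$.

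For the left-hand inequality, since $\max_{\pi_{i}}q_{i}({Z}_{i\pi_{i}},{R}_{i})-q_{i}({Z}_{i},{R}_{i})\ge 0$ almost surely, it suffices to show this difference is strictly positive on an event of positive probability. I would argue by contradiction: if it vanished almost surely, then almost surely the realized dose--outcome configuration is already ``sorted'', i.e.\ there is no pair $j,k$ with $R_{ij}=1$, $R_{ik}=0$ and $m(Z_{ij})<m(Z_{ik})$ (otherwise swapping the two doses strictly increases $q_{i}$). Conditioning on a value $n$ of $n_{i}$ with positive probability and invoking pairwise independence of units $1$ and $2$, this forces, for every threshold $t$, that either outcome-$1$ units almost never have $m(Z_{i1})<t$ or outcome-$0$ units almost never have $m(Z_{i1})\ge t$; hence, conditionally on $n_{i}$, $R_{i1}$ is an almost-sure non-decreasing deterministic function of $m(Z_{i1})$. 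I would then exclude this perfectly-aligned regime using the hypothesis $\text{cor}(m(Z_{ij}),R_{ij})<1$, which is exactly what is meant to prevent the outcome from being pinned to a deterministic monotone step of the dose.

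The step I expect to be the main obstacle is this last one --- turning ``$\text{cor}<1$'' into ``with positive probability the observed assignment admits a strictly improving re-assignment.'' The bound ``$\ge$'' is free and the right half of (\ref{inequality: double less}) is a routine covariance computation, but a careful treatment of strictness on the left must accommodate ties and atoms in the law of $m(Z_{ij})$, handle the fact that $m$ is only non-decreasing, and pin down exactly how $\text{cor}<1$ forbids the outcome from being (almost surely) a deterministic monotone step of the dose. Accordingly I would isolate and prove first a short auxiliary lemma of the form: ``if $0<\text{cor}(m(Z_{i1}),R_{i1}\mid n_{i})<1$ for some value of $n_{i}$ of positive probability, then $\text{pr}\{\max_{\pi_{i}}q_{i}({Z}_{i\pi_{i}},{R}_{i})>q_{i}({Z}_{i},{R}_{i})\}>0$'', after which the remainder of the argument is bookkeeping.
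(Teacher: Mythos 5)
Your treatment of the right-hand inequality is correct and essentially equivalent to the paper's: the paper compares the identity permutation to each non-identity permutation term by term, using pairwise independence to factor $E[m(Z_{i\pi_i(j)})R_{ij}]$ and positive correlation to get strictness, then averages over $\pi_i$; you instead compute the uniform-permutation average in closed form and read off the gap as $(n_i-1)\,\mathrm{Cov}(m(Z_{i1}),R_{i1}\mid n_i)>0$. Both arguments rest on the same two hypotheses and your version is, if anything, more explicit.

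The genuine gap is exactly where you predicted it, and your proposed fix does not close it. You reduce the left-hand strict inequality to showing that the event ``a strictly improving swap exists'' has positive probability, which is also the paper's reduction, and you correctly identify the complementary degenerate regime: conditionally on $n_i$, the configuration is a.s.\ sorted iff $R_{i1}$ is (up to randomization at a single atom) an a.s.\ non-decreasing step function of $m(Z_{i1})$, say $R_{i1}=\mathbbm{1}\{m(Z_{i1})\ge t^{*}\}$. But your final step --- excluding this regime via $\mathrm{cor}(m(Z_{ij}),R_{ij})<1$ --- fails: correlation equals one only under an \emph{affine} a.s.\ relationship, and a binary $R$ that is a deterministic monotone step of an $m(Z)$ taking three or more values has correlation strictly less than one. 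Concretely, if $m(Z_{ij})$ is uniform on $\{1,2,3\}$ and $R_{ij}=\mathbbm{1}\{m(Z_{ij})\ge 2\}$, then $0<\mathrm{cor}(m(Z_{ij}),R_{ij})=\sqrt{3}/2<1$, yet no improving swap ever exists and $\max_{\pi_i}q_i(Z_{i\pi_i},R_i)=q_i(Z_i,R_i)$ almost surely, so the left inequality of (\ref{inequality: double less}) holds only with equality. (You should be aware that the paper's own proof asserts in one line that $\mathrm{cor}<1$ ``implies a positive probability that the ranks do not align,'' which is the same unjustified leap and is subject to the same counterexample; so you have located a real weakness in the published argument rather than introduced a new one.) To make the left inequality strict one needs an assumption that genuinely rules out $R_{ij}$ being a.s.\ a non-decreasing deterministic function of $m(Z_{ij})$ --- for instance that $\mathrm{pr}\{R_{ij}=1\mid m(Z_{ij})=v\}\in(0,1)$ on a set of $v$ of positive probability containing at least two distinct values of $m$ --- and then your contradiction argument goes through with that hypothesis in place of $\mathrm{cor}<1$.
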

\begin{proof}
We first show the second inequality in (\ref{inequality: double less}). We will show that for any permutation $\pi_{i}$ that is not the identity permutation, we have $$E\left[ q_{i}( {Z}_{i}, {R}_{i})\right] > E\left[q_{i}( {Z}_{i\pi_{i} }, {R}_{i})\right].$$ Then by linearity of expectation, $E\left[\sum_{\pi_{i} } \frac{1}{n_i!}\times q_{i}( {Z}_{i\pi_{i}}, {R}_{i})\right] = \sum_{\pi_{i}} \frac{1}{n_i!}\times E\left[q_{i}( {Z}_{i\pi_{i}}, {R}_{i})\right]$, which is a weighted sum of $E\left[ q_{i}( {Z}_{i}, {R}_{i})\right]$ and elements smaller than it, immediately proving the second inequality. The justification is as follows: for a permutation that is not identity, there exist multiple indices $j$ such that $\pi_{i}(j) \neq j$. We next observe that for all such $j$, $E[m(Z_{i\pi_{i}(j)})R_{ij}]=E[m(Z_{i\pi_{i}(j)})]E[R_{ij}]$ by the pairwise independence assumption. Meanwhile, by the strictly positive correlation condition, we know that $E[m(Z_{ij})R_{ij}]>E[m(Z_{ij})]E[R_{ij}]$ and $E[m(Z_{ij})]E[R_{ij}]=E[m(Z_{i\pi_{i}(j)})]E[R_{ij}]$ by the identical distribution assumption. Thus, we have $E[m(Z_{ij})R_{ij}]>E[m(Z_{i\pi_{i}(j)})R_{ij}]$ for $\pi_{i}(j)\neq j$. By the linearity of expectation and the $q_{i}=\sum_{j=1}^{n_{i}}m(Z_{ij})R_{ij}$, it immediately follows that $E\left[ q_{i}( {Z}_{i}, {R}_{i})\right] > E\left[q_{i}( {Z}_{i\pi_{i}}, {R}_{i})\right].$ For the first inequality in (\ref{inequality: double less}), the permutation that maximizes $\max_{\pi_{i}}  q_{i}( {Z}_{i\pi_{i}}, {R}_{i})$ is the permutation that aligns the largest $Z_{ij}$ with the $R_{ij}$. A lack of perfect correlation $\text{cor}(m(Z_{ij}),R_{ij}) < 1$ implies a positive probability that the ranks of $Z_{ij}$ do not align with the ranks of $R_{ij}$. In other words, for a realization of $(Z_{i1}, \ldots,  Z_{in_i}, R_{i1}, \ldots,  R_{in_i}, n_i)$, the probability of the event $\{\exists j \neq k \mid m(Z_{ij}) < m(Z_{ik}) \text{ and } R_{ij} = 1, R_{ik} = 0\}$ is strictly greater than 0. Under such an event, it is clear that the statistic will be strictly increased by permuting $Z_{ij}$ and $Z_{ik}$, i.e., we have $\max_{\pi_{i} }  q_{i}( {Z}_{i\pi_{i} }, {R}_{i}) > q_{i}( {Z}_{i }, {R}_{i})$. Since the inequality $\max_{\pi_{i} }  q_{i}( {Z}_{i\pi_{i} }, {R}_{i}) \geq q_{i}( {Z}_{i }, {R}_{i})$ always holds and $P(\max_{\pi_{i} }  q_{i}( {Z}_{i\pi_{i} }, {R}_{i}) > q_{i}( {Z}_{i }, {R}_{i}))>0$, we have $E\left[\max_{\pi_{i} }  q_{i}( {Z}_{i\pi_{i} }, {R}_{i})\right]>E\left[ q_{i}( {Z}_{i}, {R}_{i})\right]$.
\end{proof}
\begin{lemma}
\label{lem:unique}
 Define 
 \begin{equation*}
     \phi(\gamma) = E\left[\sum_{\pi_{i}\in S_{n_{i}} }\Big \{ \frac{\exp(\gamma ( {Z}_{i\pi_{i}} {R}_i^{T}))}{\sum_{\widetilde{\pi}_{i}\in S_{n_{i}} }\exp(\gamma ( {Z}_{i\widetilde{\pi}_{i}} {R}_i^{T}))}\times q_{i}( {Z}_{i\pi_{i}},  {R}_{i}) \Big\} \right]. 
 \end{equation*}
 Under the assumptions in Lemma~\ref{lemma: double inequalities}, we have
 \begin{enumerate}[(i)]
     \item $\phi(\gamma)$ is monotonically increasing for $\gamma \geq 0$. 
     \item $\lim_{\gamma \to 0^+} \phi(\gamma) < E\left[ q_{i}( {Z}_{i}, {R}_{i})\right]$.
     \item  $\lim_{\gamma \to \infty} \phi(\gamma) > E\left[ q_{i}( {Z}_{i}, {R}_{i})\right]$.
 \end{enumerate}
\end{lemma}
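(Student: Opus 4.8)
The plan is to rewrite the bracketed quantity inside $\phi(\gamma)$ as the expectation of the test-statistic contribution under an exponentially tilted (``Gibbs'') law on dose allocations, and to study this law as $\gamma$ varies, matched set by matched set. Fix a realization of $(Z_{i1},\dots,Z_{in_i},R_{i1},\dots,R_{in_i},n_i)$; the doses are distinct, as assumed throughout the paper, and write $m_i=\sum_j R_{ij}$. Both the exponent $a_{\pi_i}:={Z}_{i\pi_i}{R}_i^T=\sum_j Z_{i\pi_i(j)}R_{ij}$ and the summand $q_i({Z}_{i\pi_i},{R}_i)=\sum_j m(Z_{i\pi_i(j)})R_{ij}$ depend on $\pi_i$ only through the $m_i$-element set ${s}$ of doses assigned to the outcome-$1$ units; write $a({s})=\sum_{z\in {s}}z$ and $b({s})=\sum_{z\in {s}}m(z)$. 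After cancelling the multiplicity $m_i!(n_i-m_i)!$, the bracket equals $E_{P_\gamma}[b]$, where $P_\gamma({s})\propto e^{\gamma a({s})}$ is a probability measure on the collection $L$ of $m_i$-subsets of the doses; $(L,\wedge,\vee)$, with the dominance order and elementwise $\max$/$\min$ joins and meets, is the single-matched-set instance of the finite distributive lattice $\Omega$ from Section~\ref{subsec: Finite-population-exact sensitivity analysis via probabilistic lattice theory}, and $P_\gamma$ is the law of ${S}_{{u}^+}$ restricted to stratum $i$.

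For part (i), differentiation gives $\frac{d}{d\gamma}E_{P_\gamma}[b]=\mathrm{Cov}_{P_\gamma}(a,b)$, and I would show this covariance is nonnegative by the FKG correlation inequality, which follows from the Holley inequality (Lemma~\ref{useful1}). The sum-preservation identity of Lemma~\ref{lemma: identity}, read within one stratum, says $a({s}\vee {s}^*)+a({s}\wedge {s}^*)=a({s})+a({s}^*)$, so $P_\gamma({s}\vee {s}^*)P_\gamma({s}\wedge {s}^*)=P_\gamma({s})P_\gamma({s}^*)$ and $P_\gamma$ satisfies the FKG lattice condition (with equality); moreover $a$ is plainly isotonic on $L$, and $b$ is isotonic on $L$ because $m$ is non-decreasing (the stratum-wise isotonicity discussed after Definition~\ref{def:stratum-wise}). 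Hence $\mathrm{Cov}_{P_\gamma}(a,b)\ge 0$, and since $n_i\le M$ and $m$ is bounded we may differentiate under the expectation over $F$, giving $\phi'(\gamma)=E_F[\mathrm{Cov}_{P_\gamma}(a,b)]\ge 0$. For the strict monotonicity that Theorem~\ref{thm: design sensitivity} needs for uniqueness, I would argue that $0<\mathrm{cor}(m(Z_{ij}),R_{ij})<1$ forces $R$ to be non-degenerate and $m$ non-constant on the support of $Z$, so a stratum with $1\le m_i\le n_i-1$ and $m$ non-constant on its doses occurs with positive $F$-probability, and on that event $\mathrm{Cov}_{P_\gamma}(a,b)>0$ for every $\gamma\ge 0$.

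For part (ii), $\phi$ is continuous at $0$ (the integrand is bounded by $M\sup|m|$) and $P_0$ is the uniform law on $L$, so $\lim_{\gamma\to0^+}\phi(\gamma)=\phi(0)=E_F\big[\sum_{\pi_i}\tfrac{1}{n_i!}q_i({Z}_{i\pi_i},{R}_i)\big]$, which is $<E_F[q_i({Z}_i,{R}_i)]$ by Lemma~\ref{lemma: double inequalities}. For part (iii), by part (i) the map $\gamma\mapsto E_{P_\gamma}[b]$ is non-decreasing and bounded above by $M\sup|m|$; as $\gamma\to\infty$ the measure $P_\gamma$ concentrates on the $a$-maximizing element ${s}^+\in L$, i.e.\ the set of the $m_i$ largest doses, and since $m$ is non-decreasing $b({s}^+)=\max_{{s}\in L}b({s})=\max_{\pi_i}q_i({Z}_{i\pi_i},{R}_i)$; monotone convergence then gives $\lim_{\gamma\to\infty}\phi(\gamma)=E_F[\max_{\pi_i}q_i({Z}_{i\pi_i},{R}_i)]>E_F[q_i({Z}_i,{R}_i)]$ by Lemma~\ref{lemma: double inequalities}.

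The main obstacle is part (i): recognizing the $\gamma$-derivative as a covariance and pinning down its sign. What makes it tractable is that Lemma~\ref{lemma: identity} is precisely the FKG lattice condition for the weights $e^{\gamma a({s})}$, so Holley's inequality — already available in the paper — does the work, once one notes that both the exponent and the test statistic are isotonic on the dominance lattice, which is exactly where the assumed monotonicity of $m$ is used. Parts (ii) and (iii) are then routine limits combined with Lemma~\ref{lemma: double inequalities}, needing only the mild care of exchanging limits with the expectation over $F$ (justified by $n_i\le M$ and boundedness of $m$); the lone secondary technical point is verifying the strict positivity of the per-stratum covariance required for uniqueness.
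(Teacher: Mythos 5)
Your proof is correct. Parts (ii) and (iii) follow the paper's own route essentially verbatim: collapse the permutation sum onto the quotient lattice, pass the limit inside the outer expectation by bounded convergence, identify the $\gamma\to 0^{+}$ limit with the uniform-permutation expectation and the $\gamma\to\infty$ limit with $E[\max_{\pi_i}q_i(Z_{i\pi_i},R_i)]$ (the tilted law concentrating on the maximal lattice element), and invoke Lemma~\ref{lemma: double inequalities}. Part (i) is where you genuinely diverge. The paper fixes $\gamma>\gamma'$, verifies the two-measure Holley condition for the pair of tilted laws directly --- reducing it via Lemma~\ref{lemma: identity} to $\exp\{\gamma\,\mathrm{sum}((s\vee s^{*})_{i}^{1})\}\exp\{\gamma'\,\mathrm{sum}((s\wedge s^{*})_{i}^{1})\}\ge \exp\{\gamma\,\mathrm{sum}(s_{i}^{1})\}\exp\{\gamma'\,\mathrm{sum}(s_{i}^{*1})\}$ --- and then applies Lemma~\ref{useful1} to conclude $\phi(\gamma)\ge\phi(\gamma')$; you instead differentiate, write $\phi'(\gamma)=E_F[\mathrm{Cov}_{P_\gamma}(a,b)]$, and obtain nonnegativity from FKG. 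The two arguments are close cousins, both hinging on Lemma~\ref{lemma: identity} being exactly the modularity of the exponent and on isotonicity of $a$ and $b$, but note that FKG for two isotonic functions is not literally Lemma~\ref{useful1} as stated: you need the standard reduction (apply Holley with $A$ distributed as $d\nu\propto (b-\min b)\,d\mu$ against $B\sim\mu$), so that half-line should be written out. What your route buys is a handle on strictness, which the paper never addresses: the direct Holley comparison only yields $\phi$ non-decreasing, yet Theorem~\ref{thm: design sensitivity} needs the root of $\phi(\gamma)=E[q_i(Z_i,R_i)]$ to be unique, and your observation that $\mathrm{Cov}_{P_\gamma}(a,b)>0$ on an event of positive $F$-probability (a stratum with $1\le m_i\le n_i-1$ and $m$ non-constant on its observed doses, guaranteed by $0<\mathrm{cor}(m(Z_{ij}),R_{ij})<1$) actually supplies it --- though you should verify that strict-covariance claim carefully, since two non-constant isotonic functions on a partially ordered lattice are not automatically strictly correlated. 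What the paper's route buys in exchange is that it avoids both the interchange of $d/d\gamma$ with $E_F$ (which you correctly justify via $n_i\le M$ and boundedness of $m$) and any appeal to FKG.
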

\begin{proof}
(i) It suffices to show that $E\left[\sum_{\pi_{i} } \frac{\exp(\gamma ( {Z}_{i\pi_{i}} {R}_i^{T}))}{\sum_{\Tilde{\pi}_{i} }\exp(\gamma ( {Z}_{i\Tilde{\pi}_{i} } {R}_i^{T}))}\times T( {Z}_{i\pi_{i}}, {R}_{i}) \mid  {R}_{i} =  {r}_{i},\mathcal{Z}_{i} \right]=\sum_{\pi_{i}} \frac{\exp(\gamma ( {z}_{i\pi_{i}} {r}_i^{T}))}{\sum_{\Tilde{\pi}_{i} }\exp(\gamma ( {z}_{i\Tilde{\pi}_{i}} {r}_i^{T}))}\times T( {z}_{i\pi_{i}}, {r}_{i})$ is monotonically increasing in $\gamma\geq 0$, for any $ {r}_{i} \in \{0,1\}^{n_i}$. This quantity can be viewed as the expectation of a random variable $f( {S})$ where $ {S}$ takes values in $\Omega$ as described previously in Section 3. Next, fix $\gamma > \gamma'$. Using the same notation as before, let $ {S}$ be distributed according to randomization distribution given by $p_{i\pi_{i}} =  \frac{\exp(\gamma ( {z}_{i\pi_{i}} {r}_i^{T}))}{\sum_{\Tilde{\pi}_{i} }\exp(\gamma ( {z}_{i\Tilde{\pi}_{i} } {r}_i^{T}))}$ and let $\widetilde{ {S}}$ be distributed according to randomization distribution given by $\tilde{p}_{i\pi_{i}} =  \frac{\exp(\gamma' ( {z}_{i\pi_{i}} {r}_i^{T}))}{\sum_{\Tilde{\pi}_{i}}\exp(\gamma' ( {z}_{i\Tilde{\pi}_{i}} {r}_i^{T}))}$, $i=1,\dots, I$. We will again apply the Holley inequality to show that for any isotonic function $f$, $E\{f( {S})\} \geq E\{f(\widetilde{ {S}})\}$. As before, we can decompose each term as follows:
\begin{align*}
    \text{pr}( {S}= {s}\lor  {s}^*)&=\prod_{i=1}^{I}\text{pr}( {S}_{i}=( {s}\lor  {s}^*)_{i}) \\
    &=\prod_{i=1}^{I}\Big[h_{ {r}_{i}}^{-1} \sum_{\pi_{i}^0,\pi_{i}^1}\exp(\gamma( {s}\lor  {s}^*)_{i, \pi_{i}^0}^0  {0}^{T})\exp(\gamma( {s}\lor  {s}^*)_{i,\pi_{i}^1}^1  {1}^{T})\Big]\\
&= \prod_{i=1}^{I}\Big[h_{ {r}_{i}}^{-1}\big(\sum_{j=1}^{n_{i}}r_{ij}\big)!\big(n_{i}-\sum_{j=1}^{n_{i}}r_{ij}\big)! \exp\{\gamma\text{sum}(( {s}\lor  {s}^*)_{i}^1)\}\Big],
\end{align*}
\begin{align*}
    \text{pr}(\Tilde{ {S}}= {s}\land  {s}^*)&=\prod_{i=1}^{I}\text{pr}(\Tilde{ {S}}_{i}=( {s}\land  {s}^*)_{i}) \\
    &=\prod_{i=1}^{I} \Big[h_{ {r}_{i}}^{\prime -1} \sum_{\pi_{i}^0,\pi_{i}^1}\exp(\gamma^{\prime}( {s}\land  {s}^*)_{i, \pi_{i}^0}^0  {0}^{T})\exp(\gamma^{\prime}( {s}\land  {s}^*)_{i,\pi_{i}^1}^1  {1}^{T})\Big]\\
&= \prod_{i=1}^{I}\Big[h_{ {r}_{i}}^{\prime-1}\big(\sum_{j=1}^{n_{i}}r_{ij}\big)!\big(n_{i}-\sum_{j=1}^{n_{i}}r_{ij}\big)! \exp\{\gamma^{\prime}\text{sum}(( {s}\land  {s}^*)_{i}^1)\}\Big],
\end{align*}
\begin{align*}
    \text{pr}( {S}= {s})&=\prod_{i=1}^{I}\text{pr}( {S}_{i}= {s}_{i}) \\
    &= \prod_{i=1}^{I}\Big[h_{ {r}_{i}}^{-1}\sum_{\pi_{i}^0,\pi_{i}^1}\exp(\gamma {s}_{i,\pi_{i}^0}^0  {0}^{T})\exp(\gamma {s}_{i,\pi_{i}^1}^1  {1}^{T})\Big]\\
    &= \prod_{i=1}^{I}\Big[h_{ {r}_{i}}^{-1}\big(\sum_{j=1}^{n_{i}}r_{ij}\big)!\big(n_{i}-\sum_{j=1}^{n_{i}}r_{ij}\big)! \exp\{\gamma\text{sum}( {s}_{i}^1)\}\Big],
\end{align*}
\begin{align*}
    \text{pr}(\tilde{ {S}}= {s}^*)&=\prod_{i=1}^{I}\text{pr}(\tilde{ {S}}_{i}= {s}^*_{i})\\
    &= \prod_{i=1}^{I} \Big[ h_{ {r}_{i}}^{\prime -1}\sum_{\pi_{i}^0,\pi_{i}^1}\exp(\gamma' {s}^{*0}_{i,\pi_{i}^0} {0}^{T})\exp(\gamma' {s}^{*1}_{i,\pi_{i}^1}  {1}^{T})\Big] \\
    &= \prod_{i=1}^{I} \Big[h_{ {r}_{i}}^{\prime -1}\big(\sum_{j=1}^{n_{i}}r_{ij}\big)!\big(n_{i}-\sum_{j=1}^{n_{i}}r_{ij}\big)!  \exp\{\gamma'\text{sum}( {s}_{i}^{*1})\}\Big],
\end{align*}
where we have 
\begin{equation*}
   h_{ {r}_{i}}= \sum_{\Tilde{\pi}_{i}}\exp(\gamma ( {z}_{i\Tilde{\pi}_{i}} {r}_i^{T})), \quad h_{ {r}_{i}}^{\prime}= \sum_{\Tilde{\pi}_{i}}\exp(\gamma' ( {z}_{i\Tilde{\pi}_{i}} {r}_i^{T})).
\end{equation*}
Therefore, to show that $\text{pr}( {S}= {s}\lor  {s}^*)\text{pr}(\Tilde{ {S}}= {s}\land  {s}^*)\geq \text{pr}( {S}= {s})\text{pr}(\tilde{ {S}}= {s}^*)$, it suffices to show that 
\begin{equation*}
    \exp\{\gamma \text{sum}(( {s}\lor  {s}^*)_{i}^1)\} \times  \exp\{\gamma'\text{sum}(( {s}\land  {s}^*)_{i}^1)\} \geq \exp\{\gamma\text{sum}( {s}^1_{i})\} \times \exp\{\gamma'\text{sum}( {s}^{*1}_{i})\}.
\end{equation*}
Using Lemma~\ref{lemma: identity} and the facts that $\gamma > \gamma'$ and $\text{sum}(( {s}\lor  {s}^*)_{i}^1) \geq \text{sum}( {s}_{i}^1)$, we have
\begin{equation*}
\begin{aligned}
&\quad \ \exp\{\gamma \text{sum}(( {s}\lor  {s}^*)_{i}^1)\} \times  \exp\{\gamma'\text{sum}(( {s}\land  {s}^*)_{i}^1)\} \\
&= \exp\{(\gamma-\gamma')\text{sum}(( {s}\lor  {s}^*)_{i}^1)\}\exp\{\gamma'\text{sum}( {s}_{i}^1)+\gamma'\text{sum}( {s}^{*1}_{i})\} \\ &\geq \exp\{(\gamma-\gamma')(\text{sum}( {s}^1_{i}))\}\exp\{\gamma'\text{sum}( {s}^1_{i})+\gamma'\text{sum}( {s}^{*1}_{i})\} \\ &= \exp\{\gamma\text{sum}( {s}^1_{i})\} \times \exp\{\gamma'\text{sum}( {s}^{*1}_{i})\}.
\end{aligned} 
\end{equation*}
Therefore, we have shown the desired condition in the Holley inequality.

(ii) Since $\sum_{\pi_{i}}\Big\{ \frac{\exp(\gamma ( {Z}_{i\pi_{i}} {R}_i^{T}))}{\sum_{\Tilde{\pi}_{i}}\exp(\gamma ( {Z}_{i\Tilde{\pi}_{i}} {R}_i^{T}))}\times q_{i}( {Z}_{i\pi_{i}}, {R}_{i})\Big\}$ is bounded, by the bounded convergence theorem and Lemma~\ref{lemma: double inequalities}, we have
\begin{equation*}
\begin{aligned}
    \lim_{\gamma \to 0^+} \phi(\gamma) &= \lim_{\gamma \to 0^+} E\left[ \sum_{\pi_{i}}\Big\{ \frac{\exp(\gamma ( {Z}_{i\pi_{i}} {R}_i^{T}))}{\sum_{\Tilde{\pi}_{i}}\exp(\gamma ( {Z}_{i\Tilde{\pi}_{i}} {R}_i^{T}))}\times q_{i}( {Z}_{i\pi_{i}}, {R}_{i})\Big\} \right] \\ &=  E\left[\lim_{\gamma \to 0^+} \sum_{\pi_{i}}\Big\{ \frac{\exp(\gamma ( {Z}_{i\pi_{i}} {R}_i^{T}))}{\sum_{\Tilde{\pi}_{i}}\exp(\gamma ( {Z}_{i\Tilde{\pi}_{i}} {R}_i^{T}))}\times q_{i}( {Z}_{i\pi_{i}}, {R}_{i})\Big\}\right] \\  &=  E\left[\sum_{\pi_{i}} \frac{1}{n_i!}\times q_{i}( {Z}_{i\pi_{i}}, {R}_{i})\right] \\ &<E\left[ q_{i}( {Z}_{i}, {R}_{i})\right],
\end{aligned}
\end{equation*}
where the last line follows by Lemma~\ref{lemma: double inequalities}.

(iii) Again, by the bounded convergence theorem, we have
\begin{equation*}
\begin{aligned}
    \lim_{\gamma \to \infty} \phi(\gamma) &= \lim_{\gamma \to \infty} E\left[ \sum_{\pi_{i}}\Big\{ \frac{\exp(\gamma ( {Z}_{i\pi_{i}} {R}_i^{T}))}{\sum_{\Tilde{\pi}_{i}}\exp(\gamma ( {Z}_{i\Tilde{\pi}_{i}} {R}_i^{T}))}\times q_{i}( {Z}_{i\pi_{i}}, {R}_{i})\Big\}\right] \\ &=  E\left[\lim_{\gamma \to \infty} \sum_{\pi_{i}}\Big\{ \frac{\exp(\gamma ( {Z}_{i\pi_{i}} {R}_i^{T}))}{\sum_{\Tilde{\pi}_{i}}\exp(\gamma ( {Z}_{i\Tilde{\pi}_{i}} {R}_i^{T}))}\times q_{i}( {Z}_{i\pi_{i}}, {R}_{i})\Big\}\right] \\ &=  E\left[ E\left[\lim_{\gamma \to \infty} \sum_{\pi_{i}}\Big\{ \frac{\exp(\gamma ( {Z}_{i\pi_{i}} {R}_i^{T}))}{\sum_{\Tilde{\pi}_{i}}\exp(\gamma ( {Z}_{i\Tilde{\pi}_{i}} {R}_i^{T}))}\times q_{i}( {Z}_{i\pi_{i}}, {R}_{i})\Big\}\mid  {R}_i= {r}_{i}, \mathcal{Z}_i \right]\right] \\  &=  E\left[\max_{\pi_{i}} q_{i}( {Z}_{i\pi_{i}}, {R}_{i})\right] \\ &> E\left[ q_{i}( {Z}_{i}, {R}_{i})\right],
\end{aligned}
\end{equation*}
where the last line follows by Lemma~\ref{lemma: double inequalities}. The fourth line follows from the following argument: we can again view $\sum_{\pi_{i}}\Big\{ \frac{\exp(\gamma ( {Z}_{i\pi_{i}} {R}_i^{T}))}{\sum_{\Tilde{\pi}_{i}}\exp(\gamma ( {Z}_{i\Tilde{\pi}_{i}} {R}_i^{T}))}\times q_{i}( {Z}_{i\pi_{i}}, {R}_{i})\Big\}$ as the expectation of a random variable $f( {S})$ where $ {S}$ takes values in $\Omega$. Let $ {S}$ be distributed according to $p_{i\pi_{i}} = \frac{\exp(\gamma ( {z}_{i\pi_{i}} {r}_i^{T}))}{\sum_{\Tilde{\pi}_{i}}\exp(\gamma ( {z}_{i\Tilde{\pi}_{i}} {r}_i^{T}))}$. For the maximal element $ {s}_{\max}$ in the lattice, we have
\begin{align*}
    \text{pr}( {S}= {s}_{\max})&=\prod_{i=1}^{I}\text{pr}( {S}_{i}= {s}_{\max,i})\\
    &=\prod_{i=1}^{I}\sum_{\pi_{i}^0,\pi_{i}^1}\exp(\gamma {s}_{\max i, \pi_{i}^0}^0  {0}^{T})\exp(\gamma {s}_{\max i, \pi_{i}^1}^1 {1}^{T})\\
    &= \prod_{i=1}^{I} \Big[\big(\sum_{j=1}^{n_{i}}r_{ij}\big)!\big(n_{i}-\sum_{j=1}^{n_{i}}r_{ij}\big)! \exp\{\gamma\text{sum}( {s}_{\max i}^{1})\}\Big]. 
\end{align*}
The distributive probabilistic lattice $\Omega$ is unique when the exposures are continuous and distinct (for handling ties of doses, see Section 2.10.3 in \citet{rosenbaum_obs}), and for any other $ {s}$ that is not $ {s}_{\max}$, we have $\text{pr}( {S}= {s}_{\max})/\text{pr}( {S}= {s}) = \prod_{i=1}^{I}\big(\sum_{j=1}^{n_{i}}r_{ij}\big)!\big(n_{i}-\sum_{j=1}^{n_{i}}r_{ij}\big)!\exp\{\gamma(\text{sum}( {s}_{\max, i}^{1})-\text{sum}( {s}^{1}_{i}))\} > 1$ and goes to infinity as $\gamma \to \infty$. Thus, as $\gamma \to \infty$, all of the probability mass concentrates on the maximal element $ {s}_{\max}$ in the lattice, which also corresponds to the largest value of the test statistic $f( {S})$ by the isotonic property of $f$.
\end{proof}

%\begin{theorem}
%Suppose that $(Z_{i1}, \ldots,  Z_{in_i}, R_{i1}, \ldots,  R_{in_i}, n_i)$ are i.i.d. realizations from some multivariate distribution $F$, where $2 \leq n_i \leq M$ is an integer random variable and $R_{ij} \in \{0,1\}$. Suppressing conditioning on $n_i$, suppose that larger dose $Z$ induces a larger outcome $R$ for each subject in the sense that $E\left[ T( {Z}_{i}, {R}_{i})\right] > E\left[\sum_{\pi_{i}} \frac{1}{n_i!}\times T( {Z}_{i\pi_{i}}, {R}_{i})\right]$. If $t$ is isotonic as defined in \ref{def:isotonic} as well as bounded, then the equation
%\begin{equation*}
%E\left[ T( {Z}_{i}, {R}_{i})\right] = E\left[\sum_{\pi_{i}} \frac{\exp(\gamma ( {Z}_{i\pi_{i}}' {R}_i))}{\sum_{\pi_{i}}\exp(\gamma ( {Z}_{i\pi_{i}}' {R}_i))}\times T( {Z}_{i\pi_{i}}, {R}_{i})\right]
%\end{equation*}
%has a unique solution $\gamma^* > 0$, and $\widetilde{\Gamma} = \exp(\gamma^*)$ is the design sensitivity.
%\end{theorem}
We now prove the design sensitivity formula stated in Theorem 3. 
\begin{proof}(Proof of Theorem 3) We define
\begin{align*}
    &E_{ {u}^{+}}(T)= \sum_{i=1}^I \sum_{\pi_{i}} \big\{\frac{\exp(\gamma ( {Z}_{i\pi_{i}} {R}_i^{T}))}{\sum_{\Tilde{\pi}_{i}}\exp(\gamma ( {Z}_{i\Tilde{\pi}_{i}} {R}_i^{T}))}\times q_{i}( {Z}_{i\pi_{i}}, {R}_{i})\big\},\\
    &\text{var}_{ {u}^{+}}(T)=\sum_{i=1}^I \sum_{\pi_{i}} \big\{\frac{\exp(\gamma ( {Z}_{i\pi_{i}} {R}_i^{T}))}{\sum_{\Tilde{\pi}_{i}}\exp(\gamma ( {Z}_{i\Tilde{\pi}_{i}} {R}_i^{T}))}\times q_{i}^{2}( {Z}_{i\pi_{i}}, {R}_{i})\big\}\\
    &\quad \quad \quad \quad \quad \quad \quad \quad \quad \quad -\sum_{i=1}^I \Big[\sum_{\pi_{i}} \big\{\frac{\exp(\gamma ( {Z}_{i\pi_{i}} {R}_i^{T}))}{\sum_{\Tilde{\pi}_{i}}\exp(\gamma ( {Z}_{i\Tilde{\pi}_{i}} {R}_i^{T}))}\times q_{i}( {Z}_{i\pi_{i}}, {R}_{i})\big\}\Big]^{2}.
\end{align*}
By the law of large numbers, we have as $I\rightarrow \infty$,
\begin{equation*}
\begin{aligned}
    &I^{-1}\sum_{i=1}^{I}q_{i}( {Z}_i, {R}_i) \xrightarrow{a.s.} E[q_{i}( {Z}_i, {R}_i)], \\ 
    & I^{-1}E_{ {u}^{+}}(T) \xrightarrow{a.s.}  E\left[\sum_{\pi_{i}} \big\{\frac{\exp(\gamma ( {Z}_{i\pi_{i}} {R}_i^{T}))}{\sum_{\Tilde{\pi}_{i}}\exp(\gamma ( {Z}_{i\Tilde{\pi}_{i}} {R}_i^{T}))}\times q_{i}( {Z}_{i\pi_{i}}, {R}_{i})\big\}\right]=\phi(\gamma),\\
    & I^{-1}\text{var}_{ {u}^{+}}(T) \xrightarrow{a.s.} E\Big[\sum_{\pi_{i}} \big\{\frac{\exp(\gamma ( {Z}_{i\pi_{i}} {R}_i^{T}))}{\sum_{\Tilde{\pi}_{i}}\exp(\gamma ( {Z}_{i\Tilde{\pi}_{i}} {R}_i^{T}))}\times q_{i}^{2}( {Z}_{i\pi_{i}}, {R}_{i})\big\}\Big]\\
    &\quad \quad \quad \quad \quad \quad \quad \quad \quad \quad \quad-E\Big(\Big[\sum_{\pi_{i}} \big\{\frac{\exp(\gamma ( {Z}_{i\pi_{i}} {R}_i^{T}))}{\sum_{\Tilde{\pi}_{i}}\exp(\gamma ( {Z}_{i\Tilde{\pi}_{i}} {R}_i^{T}))}\times q_{i}( {Z}_{i\pi_{i}}, {R}_{i})\big\}\Big]^{2}\Big).
\end{aligned}
\end{equation*}
By Lemma~\ref{lem:unique}, the following equation 
\begin{equation*}
    \phi(\gamma)=E[q_{i}( {Z}_i, {R}_i)]
\end{equation*}
has a unique solution, denoted as $\widetilde{\gamma}$, such that $\phi(\gamma)<E[q_{i}( {Z}_i, {R}_i)]$ for $\gamma<\widetilde{\gamma}$ and $\phi(\gamma)>E[q_{i}( {Z}_i, {R}_i)]$ for $\gamma>\widetilde{\gamma}$.

The assumptions made in the statement of Theorem 3 ensure that Condition~\ref{normal_reg} holds (see the remark under Condition~\ref{normal_reg}). Therefore, invoking Theorem 2 in the main text, power of sensitivity analysis using some test statistic $T=\sum_{i=1}^{I}q_{i}( {Z}_i, {R}_i) = \sum_{i=1}^{I}\sum_{j=1}^{n_i} m(Z_{ij}) \times R_{ij}$ is (or asymptotically equals)
\begin{equation*}
\begin{aligned}
&\quad \ \text{pr}\left( \frac{\sum_{i=1}^{I}q_{i}( {Z}_i, {R}_i)-E_{ {u}^{+}}(T) }{\sqrt{\text{var}_{ {u}^{+}}(T)}} \geq \Phi^{-1}(1-\alpha)\right) \\
&=\text{pr}\left( \frac{I^{-1}\sum_{i=1}^{I}q_{i}( {Z}_i, {R}_i)-I^{-1}E_{ {u}^{+}}(T)}{I^{-1/2}\sqrt{I^{-1}\text{var}_{ {u}^{+}}(T)}}\geq \Phi^{-1}(1-\alpha)\right),
\end{aligned}
\end{equation*}
in which we have 
\begin{align*}
    &I^{-1}\sum_{i=1}^{I}q_{i}( {Z}_i, {R}_i)-I^{-1}E_{ {u}^{+}}(T)\xrightarrow{a.s.} E[q_{i}( {Z}_i, {R}_i)]-\phi(\gamma)>0 \text{ \ for \ $\gamma<\widetilde{\gamma}$,}\\
    &I^{-1}\sum_{i=1}^{I}q_{i}( {Z}_i, {R}_i)-I^{-1}E_{ {u}^{+}}(T)\xrightarrow{a.s.} E[q_{i}( {Z}_i, {R}_i)]-\phi(\gamma)<0 \text{\ for \  $\gamma>\widetilde{\gamma}$.}
\end{align*}
Since $I^{-1/2}\sqrt{I^{-1}\text{var}_{ {u}^{+}}(T)}\xrightarrow{a.s.}0$ as $I\rightarrow \infty$, we have  
\begin{align*}
    \frac{\sum_{i=1}^{I}q_{i}( {Z}_i, {R}_i)-E_{ {u}^{+}}(T) }{\sqrt{\text{var}_{ {u}^{+}}(T)}}\xrightarrow{a.s.} +\infty \text{\ for \ $\gamma<\widetilde{\gamma}$,}\\
    \frac{\sum_{i=1}^{I}q_{i}( {Z}_i, {R}_i)-E_{ {u}^{+}}(T) }{\sqrt{\text{var}_{ {u}^{+}}(T)}}\xrightarrow{a.s.} -\infty \text{\ for \ $\gamma>\widetilde{\gamma}$.}
\end{align*}
Recall that $\Gamma=\exp(\gamma)$ and $\widetilde{\Gamma}=\exp(\widetilde{\gamma})$. Therefore, we have: as $I\rightarrow \infty$, power of sensitivity analysis $\Psi_{\Gamma, I}\rightarrow 1$ if $\Gamma < \widetilde{\Gamma}$ and $\Psi_{\Gamma, I}\rightarrow 0$ if $\Gamma > \widetilde{\Gamma}$.
\end{proof}

\section*{Appendix B: Implications of the Sensitivity Parameter $\Gamma$ After Matching}

Note that the sensitivity parameter $\Gamma$ (or equivalently, $\gamma$) is a parameter imposed in the population exposure dose assignment model (i.e., the Rosenbaum dose assignment model)  before matching (\citealp{rosenbaum1989sensitivity, gastwirth1998dual}). After matching, it is helpful to examine to what extent various values of $\Gamma$ would depart exposure dose assignments from random (i.e., uniform) assignments among matched datasets. A meaningful quantity to report is the average (over all the matched sets) total variation distance (calculated within each matched set) between the biased dose assignment distribution in the worst-case scenario and the uniform dose assignment distribution. This shows that, in the worst-case scenario, how far a departure from random dose assignments arises under the Rosenbaum sensitivity model under some fixed $\Gamma$. In our real data application, we plot various histograms (under various $\Gamma$) of such total variation distances in Figure~\ref{fig:TV}.
\begin{figure}[h]    
\caption{Histogram of total variation distances between the biased exposure dose assignments distribution (in the worst-case scenario) and the uniform distribution among the 75 discordant matched sets in the real data example. The red dotted line represents the mean value.}
    \begin{center}
            \includegraphics[scale = 0.7]{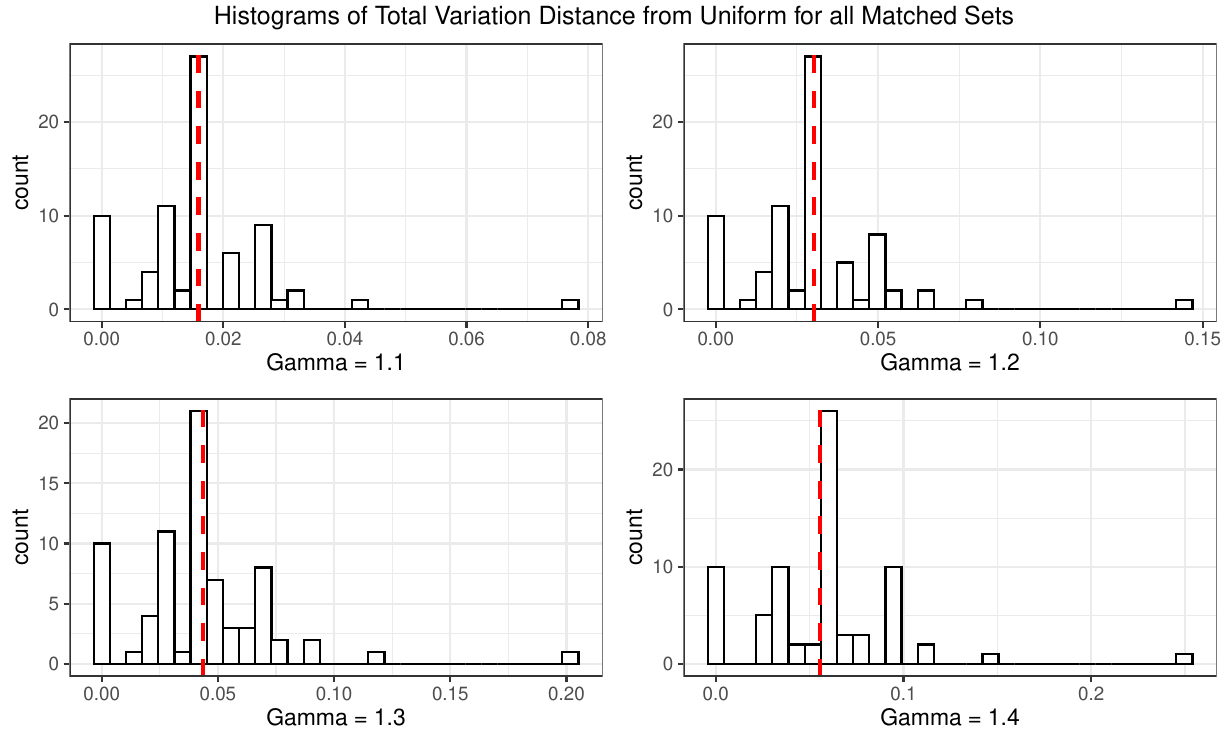}
    \end{center}
    \label{fig:TV}
\end{figure}

%\begin{equation*}
%p_{i\pi}=\frac{\exp(\gamma( {z}_{i\pi}' {u}_i))\prod_{j=1}^{n_i}f_{z_{ij}}(0)\exp(\kappa_{Z_{ij}}(x_{ij}))}{\sum_{ {z}_{i\pi} \in \Omega_i}\exp(\gamma( {z}_{i\pi}' {u}_i))\prod_{j=1}^{n_i}f_{z_{ij}}(0)\exp(\kappa_{Z_{ij}}(x_{ij}))}=\frac{\exp(\gamma( {z}_{i\pi}' {u}_i))}{\sum_{ {z}_{i\pi} \in \Omega_i}\exp(\gamma( {z}_{\pi}' {u}_i))}.
%\end{equation*}

\section*{Appendix C: Additional Details About the Hardness of Sensitivity Analysis with Continuous Exposures}

We here give more details on the hardness of sensitivity analysis in matched observational studies with continuous exposures. Assuming independence across matched sets, it suffices to illustrate the principle for each matched set $i$. Under the Rosenbaum sensitivity analysis model, the sharp null $H_{0}$, and the test statistic $T=\sum_{i=1}^{I}q_{i}( {Z}_{i},  {R}_{i})$ where $q_{i}( {Z}_{i},  {R}_{i})=\sum_{j=1}^{n_{i}}q_{ij}( {Z}_{i},  {R}_{i})$, for any fixed $k\in \mathbbm{R}$, we define 
\begin{equation*}
    \alpha_{i}( {r}_{i}, {u}_{i})=\text{pr}(q_{i}( {Z}_{i},  {r}_{i})\geq k) = \sum_{\pi_{i}} \Big[ \mathbbm{1}\{q_{i}( {Z}= {z}_{i\pi_{i} },  {r}_{i})\geq k\}\times \frac{\exp\{\gamma( {z}_{i\pi_{i}} {u}_{i}^{T})\}}{\sum_{\widetilde{\pi}_{i}}\exp\{\gamma( {z}_{i\widetilde{\pi}_{i}} {u}_{i}^{T})\}}\Big],
\end{equation*}
in which we have ${r}_{i}=(r_{i1},\dots, r_{in_{i}})=(R_{i1},\dots, R_{in_{i}})$ under $H_{0}$ and $ {u}_{i}=(u_{i1},\dots, u_{in_{i}})$. Without loss of generality, we assume $r_{i1}\leq \dots \leq r_{in_{i}}$. It is easy to verify that $p_{i\pi_{i}} ( {z}_{i}, {u}_i)=\exp\{\gamma( {z}_{i\pi_{i}} {u}_{i}^{T})\}/\sum_{\widetilde{\pi}_{i}}\exp\{\gamma( {z}_{i\widetilde{\pi}_{i}} {u}_{i}^{T})\}$ is arrangement increasing (\citealp{rosen_krieger, rosenbaum_obs}). Applying Theorem 3.3 from \citet{hollander1977functions}, we have that if $q_{i}( {Z}_{i}, {r}_i)$ is arrangement increasing, so is $\alpha( {r}_{i}, {u}_{i})$. This means that the maximizing $ {u}_i$ must be ordered according to $ {r}_{i}$. Therefore, we must have $u_{i1} \leq \cdots \leq u_{in_i}$. For clarity of exposition, we consider the case of one matched set and suppress the dependence on $i$ (assuming independence across different matched sets, conclusions can be easily generalized to the multiple matched sets case). Let $A_c^j=\sum_{\pi:z_{\pi(j)}=z_{(c)},q( {z}_{\pi}, {r}) \geq k}\exp(\gamma {z}_{\pi} {u}^{T})$ and $D_c^j=\sum_{\pi:z_{\pi(j)}=z_{(c)}}\exp(\gamma {z}_{\pi} {u}^{T})$, in which $A_c^j$ represents the sum of the probability weights for permutations under which subject $j$ is assigned the $c$-th smallest dose $z_{(c)}$ (i.e., we have $z_{(1)}< \dots z_{(c-1)}< z_{(c)}< z_{(c+1)}< \dots < z_{(n)}$) and the test statistic exceeds $k$. Meanwhile, $D_c^j$ represents the sum of the probability weights for permutations under which subject $j$ is assigned the $c$-th smallest dose $z_{(c)}$. We can then rewrite 
\begin{align*}
\alpha( {r}, {u}) &= \frac{\sum_{c=1}^{n}A_c^j}{\sum_{c=1}^{n} D_c^j}.
\end{align*}
Then it follows that 
\begin{align*}
    \frac{\partial \alpha( {r}, {u})}{\partial u_j}&=\frac{\left(\sum_{c=1}^{n}D_c^j\right)\left(\sum_{c=1}^{n}\gamma z_{(c)}A_c^j\right)-\left(\sum_{c=1}^{n_i}A_c^j\right)\left(\sum_{c=1}^{n}\gamma z_{(c)}D_c^j\right)}{(\sum_{c=1}^{n} D_c^j) ^2} \\ &= \frac{\left(\sum_{c=1}^{n}\gamma z_{(c)} A_c^j\right)}{(\sum_{c=1}^{n} D_c^j)}-\frac{\alpha( {r}, {u})\left(\sum_{c=1}^{n}\gamma z_{(c)}D_c^j\right)}{(\sum_{c=1}^{n} D_c^j)},
\end{align*}
which can be positive or negative. Unlike the binary exposure case demonstrated in \citet{rosen_krieger}, the $p$-value is not necessarily monotone in $u_{ij}$ here. In the binary exposure case, this monotonicity property guarantees that the worst-case (maximal) $p$-value is achieved at one of the corners of the unit cube, no matter the number of subjects in each matched set. The lack of monotonicity when both the exposure and outcome are continuous precludes a straightforward way of solving the worst-case $p$-value. In the main text, a counterexample is presented where the worst-case $p$-value is not achieved at a corner of the unit cube when both the exposure and the outcome are continuous.

\section*{Appendix D: Sensitivity Analysis for the Threshold Attributable Effect via A Generalized Asymptotic Separability Algorithm}

Under the setting introduced in Section 4 of the main text (including the monotonicity assumption: $r_{ij}(z) \leq r_{ij}(z')$ for all $z \leq z'$), there are study designs in which the problem is simpler in that the contribution to the threshold attributable effect (TAE) from each matched set is either 0 or 1. This allows for an analytical solution to inference and sensitivity analysis for the TAE in a similar approach as given in \citet{rosenbaum_attributable_obs}. A prominent example is if the exposure is dichotomized at some prespecified dose level $c$ for the purpose of matching so that in each matched set, there is one subject with exposure greater than $c$ and the rest having exposure less than $c$ (although the statistical inference and sensitivity analysis will still use the exact dose information). We hope to find a computationally tractable way to construct a prediction set for the TAE in the above cases. Note that for some value $a$ of the TAE $=\sum_{i=1}^{I}\sum_{j=1}^{n_{i}}[\mathbbm{1}\{Z_{ij} > c\}R_{ij} - \mathbbm{1}\{Z_{ij} > c\}r_{0ij}]$ to occur, in exactly $a$ of the matched sets with $\sum_{j=1}^{n_{i} }\mathbbm{1}\{Z_{ij} > c\}R_{ij} = 1$, the event (i.e., outcome equal to one) that occurred would need to be attributable to exposure greater than $c$. The goal is to find the $a$ such matched sets that make the pattern most difficult to reject. If we can reject this allocation of reference potential outcomes, then we can reject all allocation of reference potential outcomes that lead to a TAE value of $a$. First, note that we can write $\sum_{i=1}^{I} \sum_{j=1}^{n_{i}}\mathbbm{1}\{Z_{ij} > c\}r_{0ij} = \sum_{i=1}^{I} B_i$ where $B_i \sim \text{Bern}(\pi_i)$ for some unknown $\pi_i$. Let $T = \sum_{i=1}^{I}\sum_{j=1}^{n_{i}}\mathbbm{1}\{Z_{ij} > c\}R_{ij} $ be the number of overexposed (i.e., $Z_{ij}>c$) individuals that experienced an event (i.e., $R_{ij}=1$). Let $\beta(k,  {\pi})$ be the probability that there are at least $k$ successes in $I$ independent binary trials, where the trial $i$ has a probability of success $\pi_i$ and $ {\pi}=(\pi_{1}, \dots, \pi_{I})$. Under the Rosenbaum sensitivity analysis model, based on the arguments in Sections 3 and 4 of the main text, we have 
\begin{equation}
\label{bern_worst}
    \pi_i \leq \frac{\sum_{\pi_{i} }  {z}_{i\pi_{i} > c} {r}_{0i}^{T}\exp(\gamma( {z}_{i\pi_{i}} {r}_{0i}^{T}))}{\sum_{\pi_{i} } \exp(\gamma( {z}_{i\pi_{i}} {r}_{0i}^{T}))} = \overline{\pi}_i,
\end{equation}
since the $ {u}$ values that maximize the tail probability $\pi_i$ is at $ {u}_{i}= {r}_{0i}$. Note that it follows immediately that $\beta(k,\overline{ {\pi}}) \geq \beta(k, {\pi})$. As $I \to \infty$, under mild regularity conditions, we have 
\begin{equation}
\label{large_samp}
    \beta(k,\overline{ {\pi}}) \approx 1 - \Phi\left(\frac{k - \sum_{i=1}^{I} \overline{\pi}_{i}}{\sqrt{\sum_{i=1}^{I} \overline{\pi}_{i}(1-\overline{\pi}_{i})}}\right).
\end{equation}
For each matched set $i$, we want to compare the difference in expectations of $B_i$ if we attribute the event/effect to exposure or not in matched set $i$ (i.e., if we set $\sum_{j=1}^{n_{i}}[\mathbbm{1}\{Z_{ij} > c\}R_{ij} - \mathbbm{1}\{Z_{ij} > c\}r_{0ij}]=1$ or $0$). Let $\overline{\overline{\lambda}}_i$ denote the value of $\overline{\pi}_{i}$ in equation (\ref{bern_worst}) if we do not attribute the effect to matched set $i$ and $\overline{\lambda}_i$ if we attribute. In a case-control study, we have $\overline{\lambda}_i = 0$ since attributing the effect will result in $ {r}_{0i}$ being a vector of zeros. Meanwhile, $\overline{\overline{\lambda}}_i$ can be easily computed and depends on the observed doses in matched set $i$. Next, let $\overline{\overline{\omega}}_i = \overline{\overline{\lambda}}_i(1-\overline{\overline{\lambda}}_i)$ and $\overline{\omega}_i = \overline{\lambda}_i(1-\overline{\lambda}_i)$ be the associated contributed variances from matched set $i$. Following a similar argument to that in \citet{rosenbaum_attributable_obs}, the $ {r}_0$ formed using the $a$ smallest declines in expectations (i.e., $\overline{\overline{\lambda}}_i-\overline{\lambda}_i$) will be the hardest to reject based on large sample approximation, according to the notion of asymptotic separability originally developed in \citet{asymp_sep}. Specifically, in Proposition 1 from \citet{asymp_sep}, it is shown that as $I \to \infty$, the approximate worst-case (largest) $p$-value $\beta(k, \pi)$ is obtained from the normal reference distribution with the highest expectation, and among those with the same highest expectation, the one with the highest variance. This motivates the following generalized asymptotic separability algorithm for testing $\text{TAE}=a$ in a sensitivity analysis, which is computationally efficient and asymptotically exact.
\begin{algorithm}\label{algo: asymptotic separability}
\caption{A Generalized Asymptotic Separability Algorithm for Testing $\text{TAE}=a$}
\textbf{Input}: The prespecified $a$ for testing, the significance level $\alpha$, the sensitivity parameter $\Gamma$, the observed exposure doses $(Z_{11}, \dots, Z_{In_{I}})$ and the observed outcomes $(R_{11}, \dots, R_{In_{I}})$.

\textbf{Step 1: } Stop if $a > \sum_{i=1}^{I}\sum_{j=1}^{n_{i}} \mathbbm{1}\{Z_{ij} > c\} R_{ij} $ and reject the null hypothesis.

\textbf{Step 2: } For each matched set $i$ with $\sum_{j=1}^{n_i}\mathbbm{1}\{Z_{ij} > c\} R_{ij} = 1$, compute $\overline{\overline{\lambda}}_i$, $\overline{\lambda}_i$, $\overline{\overline{\omega}}_i$,$\overline{\omega}_i$.

\textbf{Step 3: } Among those matched sets with $\sum_{j=1}^{n_i}\mathbbm{1}\{Z_{ij} > c\} R_{ij} = 1$, select the $a$ sets with smallest $\overline{\overline{\lambda}}_i-\overline{\lambda}_i$, and if there are ties, select ones with smallest $\overline{\overline{\omega}}_i-\overline{\omega}_i$. For those $a$ sets, write $\overline{\pi}_i = \overline{\lambda}_i$. For the other $I-a$ sets, write $\overline{\pi}_i = \overline{\overline{\lambda}}_i$. Next, if $\sum_{i=1}^{I} \overline{\pi}_i \geq \sum_{i=1}^{I}\sum_{j=1}^{n_{i}} \mathbbm{1}\{Z_{ij} > c\} R_{ij} - a$, i.e., the expectation of the test statistic exceeds the observed test statistic, we accept the null hypothesis and stop.

\textbf{Step 4: } Compute the large-sample approximation of the upper
 bound on the significance level $\beta(k, \pi)$ (i.e., the $\beta(k, \overline{\pi})$) given in (\ref{large_samp}), where $k = \sum_{i=1}^{I}\sum_{j=1}^{n_{i}} \mathbbm{1}\{Z_{ij} > c\} R_{ij} - a$. That is, we approximate $\beta(k, \overline{\pi})$ via 
 \begin{equation}\label{eqn: approximate p-value}
     1 - \Phi\left(\frac{\sum_{i=1}^{I}\sum_{j=1}^{n_{i}} \mathbbm{1}\{Z_{ij} > c\} R_{ij} - a - \sum_{i=1}^{I} \overline{\pi}_{i}}{\sqrt{\sum_{i=1}^{I} \overline{\pi}_{i}(1-\overline{\pi}_{i})}}\right).
 \end{equation}
 If equation (\ref{eqn: approximate p-value}) is less than $\alpha$, we reject the null hypothesis; otherwise, we accept the null hypothesis.
\end{algorithm}

To construct an asymptotically exact $100(1-\alpha)\%$ confidence set for TAE in a sensitivity analysis with sensitivity parameter $\Gamma$, we just need to collect all the values $a$ that cannot be rejected by Algorithm 1 with level $\alpha$ and sensitivity parameter $\Gamma$.

\section*{Appendix E: More Details on the Matching In the Real Data Application}
We first describe the preprocessing and trimming that we performed to arrive at a sufficiently balanced matched dataset. First, any individuals missing in the outcome, exposure, or measured confounders to be matched were discarded, leaving 10,582 individuals. Next, we applied the \texttt{nbpfull} package from \citet{bo_continuous} using a rank-based Mahalanobis distance with a penalty parameter to avoid overly large matched sets and another penalty parameter to avoid the case that all the matched individuals have similar exposure doses (\citealp{bo_continuous}). Near-fine balance was encouraged on the confounder ``Birth year" and the confounder ``Year of the blood test." Moreover, to further improve the matching quality, we followed the advice in \citet{rosenbaum2020modern} to conduct data trimming to ensure adequate confounder balance. Specifically, matched sets with ``star-distance" (see \citet{bo_continuous}) above the 95th percentile were removed, as well as matched sets that were not exactly matched on the indicator confounder ``Black." Next, matched sets with a range of confounders ``Year of construction," ``Percent black (block)," and ``Log home value" exceeding 20, 0.25, and 0.25, respectively, were removed. This resulted in the final 4,134 individuals in the matched dataset.

In Table~\ref{tab: balance}, we present a confounder balance table for the matched cohort study considered in our real data application. For each confounder, before matching, we consider the sample means among subjects below and above the median dose, respectively, and the standardized mean difference (SMD) is computed by taking the difference in these two mean values and dividing it by the standard deviation of the confounder among the whole sample. After matching, we collect the subjects whose doses are above the median dose within their corresponding matched sets, calling them the ``high" group. We group the remaining subjects into the ``low" group. Once again, we compute the sample means and the SMD, where the SMD is calculated by taking the difference in these two newly calculated mean values and dividing it by the standard deviation of the confounder among the whole sample (i.e., the denominator is the same as that used in the before-matching SMD). In addition to comparisons of mean values of confounders, we also compute $p$-values reported by Kolmogorov-Smirnov (KS) tests for testing the equality of the distributions of the below (median dose) versus above (median dose) groups before matching and the low dose group versus high dose group after matching. In summary, matching can sufficiently balance the observed confounders in our real data application, with all $p$-values reported by KS tests are above 0.81 and all absolute SMDs below 0.1.

\begin{table}[ht]
\begin{center}
\caption{Assessment of confounder balance before and after matching. We report the mean values of confounders before and after matching among different dose groups, the standardized mean differences (SMDs) before and after matching, and $p$-values for testing equity of confounder distributions between different dose groups before and after matching (using KS tests).}
\label{tab: balance}
\begin{tabular}{lrrrrrrrr}
 \hline
  & \multicolumn{4}{c}{Before Matching} & \multicolumn{4}{c}{After Matching} \\
Confounder & Below & Above & SMD & KS $p$ & Low & High & SMD & KS $p$\\
  \hline
Age & 23.20 & 22.79 & -0.02 & 0.00 & 19.83 & 19.63 & -0.01 & 1.00 \\ 
Year of test & 2005.01 & 2004.33 & -0.37 & 0.00 & 2004.53 & 2004.52 & 0.00 & 1.00 \\ 
Gender & 0.49 & 0.50 & 0.03 & 0.82 & 0.48 & 0.48 & -0.00 & 1.00 \\ 
Year of Construction & 1987.05 & 1982.37 & -0.22 & 0.00 & 1989.61 & 1989.53 & -0.01 & 0.97 \\ 
Log home value & 12.27 & 12.07 & -0.33 & 0.00 & 12.13 & 12.13 & -0.00 & 0.81 \\ 
Blood draw (venous) & 0.87 & 0.91 & 0.13 & 0.00 & 0.92 & 0.92 & 0.00 & 1.00 \\ 
Birth year & 2003.09 & 2002.44 & -0.34 & 0.00 & 2002.89 & 2002.89 & 0.00 & 1.00 \\ 
Percent black (block)& 0.22 & 0.28 & 0.23 & 0.00 & 0.23 & 0.23 & 0.00 & 0.95 \\ 
Log(med income) (block) & 11.19 & 11.07 & -0.26 & 0.00 & 11.18 & 11.19 & 0.01 & 0.97 \\ 
Season 2 & 0.27 & 0.28 & 0.01 & 1.00 & 0.29 & 0.29 & 0.00 & 1.00 \\ 
Season 3 & 0.24 & 0.24 & 0.00 & 1.00 & 0.25 & 0.25 & 0.01 & 1.00 \\ 
Season 4 & 0.23 & 0.24 & 0.04 & 0.57 & 0.22 & 0.22 & 0.00 & 1.00 \\ 
Black & 0.17 & 0.30 & 0.31 & 0.00 & 0.21 & 0.21 & 0.00 & 1.00 \\ 
   \hline
\end{tabular}
\end{center}
\end{table}

\section*{Appendix F: Additional Simulation Results}

In Table \ref{tab:ds+pow_supp}, we present additional simulation results on investigating the design sensitivity and finite-sample power of sensitivity analysis of various test statistics. The data-generating mechanisms are identical to that considered in Section 3 of the main text, with the only difference being that the random effects for each matched set are drawn from $N(-2,1)$ rather than $N(0,1)$. The test statistics compared are the same, and the general patterns of the simulation results are consistent with those from the main text.
\begin{table}[H]
\centering
\caption{Design sensitivity and finite-sample power (under the significance level 0.05) for the five test statistics under $f(z) = z^a$ for $a = 1/4, 1/2, 2, 4$ and $\beta=1.5$ for the outcome model. The random effect for each matched set is drawn from $N(-2,1)$, and the exposure dose $Z$ is drawn from $\text{Unif}[0,1]$ or $\text{Beta}(2,2)$.}
\small
\begin{tabular}{lrrrrr|rrrrr}
\multicolumn{11}{c}{$f(z)=z^{0.25}$} \\
 & \multicolumn{5}{c|}{$Z\sim \text{Unif}[0,1]$} & \multicolumn{5}{c}{$Z\sim$ Beta(2,2)} \\
\cline{1-1} \cline{2-6} \cline{7-11}
Test & $T_{t}$ & $T_{0.1}$ & $T_{0.25}$ & $T_{0.5}$ & $T_{\text{adap}}$ & $T_{t}$ & $T_{0.1}$ & $T_{0.25}$ & $T_{0.5}$ & $T_{\text{adap}}$ \\

\hline
$\widetilde{\Gamma}$ & 2.14 & 3.04 & 2.48 & 2.06 & 3.04 & 2.06 & 3.19 & 2.24 & 2.10 & 3.19\\

$\Gamma = 1.75$ & 0.38 & 0.57 & 0.50 & 0.24 & 0.52 & 0.15 & 0.18 & 0.19 & 0.12 & 0.17\\

$\Gamma = 2.00$ & 0.16 & 0.42 & 0.30 & 0.09 & 0.32 & 0.06 & 0.12 & 0.13 & 0.04 & 0.09\\

$\Gamma = 2.25$ & 0.04 & 0.29 & 0.16 & 0.03 & 0.20 & 0.02 & 0.09 & 0.07 & 0.01 & 0.07\\

$\Gamma = 2.50$ & 0.01 & 0.20 & 0.08 & 0.00 & 0.13 & 0.01 & 0.07 & 0.04 & 0.00 & 0.05\\
%\hline
%Power5 & 0.00 & 0.14 & 0.03 & 0.00 & 0.08 & 0.00 & 0.06 & 0.01 & 0.00 & 0.04\\
%\hline
%Power6 & 0.00 & 0.09 & 0.02 & 0.00 & 0.08 & 0.00 & 0.05 & 0.01 & 0.00 & 0.04\\
\hline
\end{tabular}

\begin{tabular}{lrrrrr|rrrrr}
\multicolumn{11}{c}{$f(z)=z^{0.5}$} \\
 & \multicolumn{5}{c|}{$Z\sim \text{Unif}[0,1]$} & \multicolumn{5}{c}{$Z\sim$ Beta(2,2)} \\
\cline{1-1} \cline{2-6} \cline{7-11}
Test & $T_{t}$ & $T_{0.1}$ & $T_{0.25}$ & $T_{0.5}$ & $T_{\text{adap}}$ & $T_{t}$ & $T_{0.1}$ & $T_{0.25}$ & $T_{0.5}$ & $T_{\text{adap}}$ \\
\hline
$\widetilde{\Gamma}$ & 3.21 & 4.85 & 3.93 & 3.12 & 4.85 & 3.18 & 5.60 & 4.09 & 3.11 & 5.60\\

$\Gamma = 3.00$ & 0.12 & 0.31 & 0.22 & 0.08 & 0.22 & 0.06 & 0.14 & 0.12 & 0.04 & 0.11\\

$\Gamma = 3.25$ & 0.06 & 0.24 & 0.15 & 0.04 & 0.16 & 0.02 & 0.12 & 0.08 & 0.02 & 0.06\\

$\Gamma = 3.50$ & 0.02 & 0.17 & 0.10 & 0.02 & 0.11 & 0.01 & 0.10 & 0.07 & 0.01 & 0.05\\

$\Gamma = 3.75$ & 0.00 & 0.14 & 0.07 & 0.00 & 0.09 & 0.00 & 0.09 & 0.04 & 0.01 & 0.04\\
%\hline
%Power5 & 0.00 & 0.11 & 0.04 & 0.00 & 0.06 & 0.00 & 0.08 & 0.02 & 0.00 & 0.04\\
%\hline
%Power6 & 0.00 & 0.09 & 0.02 & 0.00 & 0.06 & 0.00 & 0.06 & 0.02 & 0.00 & 0.04\\
\hline
\end{tabular}

\begin{tabular}{lrrrrr|rrrrr}
\multicolumn{11}{c}{$f(z)=z^{2}$} \\
 & \multicolumn{5}{c|}{$Z\sim \text{Unif}[0,1]$} & \multicolumn{5}{c}{$Z\sim$ Beta(2,2)} \\
\cline{1-1} \cline{2-6} \cline{7-11}
Test & $T_{t}$ & $T_{0.1}$ & $T_{0.25}$ & $T_{0.5}$ & $T_{\text{adap}}$ & $T_{t}$ & $T_{0.1}$ & $T_{0.25}$ & $T_{0.5}$ & $T_{\text{adap}}$ \\
\hline
$\widetilde{\Gamma}$ & 4.55 & 2.94 & 3.48 & 4.58 & 4.55 & 4.48 & 2.84 & 3.38 & 4.56 & 4.48\\

$\Gamma = 3.00$ & 0.64 & 0.04 & 0.14 & 0.52 & 0.54 & 0.47 & 0.01 & 0.06 & 0.36 & 0.33\\

$\Gamma = 3.25$ & 0.52 & 0.04 & 0.08 & 0.38 & 0.39 & 0.33 & 0.01 & 0.04 & 0.27 & 0.23\\

$\Gamma = 3.50$ & 0.37 & 0.02 & 0.04 & 0.26 & 0.25 & 0.23 & 0.01 & 0.03 & 0.20 & 0.17\\

$\Gamma = 3.75$ & 0.25 & 0.02 & 0.02 & 0.19 & 0.17 & 0.17 & 0.01 & 0.03 & 0.15 & 0.11\\
%\hline
%Power5 & 0.17 & 0.01 & 0.01 & 0.13 & 0.09 & 0.13 & 0.01 & 0.02 & 0.11 & 0.07\\
%\hline
%Power6 & 0.10 & 0.01 & 0.00 & 0.09 & 0.05 & 0.09 & 0.01 & 0.02 & 0.08 & 0.05\\
\hline
\end{tabular}

\begin{tabular}{lrrrrr|rrrrr}
\multicolumn{11}{c}{$f(z)=z^{4}$} \\
 & \multicolumn{5}{c|}{$Z\sim \text{Unif}[0,1]$} & \multicolumn{5}{c}{$Z\sim$ Beta(2,2)} \\
\cline{1-1} \cline{2-6} \cline{7-11}
Test & $T_{t}$ & $T_{0.1}$ & $T_{0.25}$ & $T_{0.5}$ & $T_{\text{adap}}$ & $T_{t}$ & $T_{0.1}$ & $T_{0.25}$ & $T_{0.5}$ & $T_{\text{adap}}$ \\
\hline
$\widetilde{\Gamma}$ & 3.50 & 2.19 & 2.30 & 3.18 & 3.50 & 3.29 & 2.14 & 1.89 & 2.89 & 3.29\\

$\Gamma = 1.75$ & 0.96 & 0.09 & 0.24 & 0.80 & 0.92 & 0.72 & 0.10 & 0.07 & 0.44 & 0.58\\

$\Gamma = 2.00$ & 0.83 & 0.05 & 0.12 & 0.57 & 0.76 & 0.50 & 0.07 & 0.04 & 0.27 & 0.37\\

$\Gamma = 2.25$ & 0.65 & 0.02 & 0.06 & 0.37 & 0.52 & 0.32 & 0.06 & 0.02 & 0.17 & 0.23\\

$\Gamma = 2.50$ & 0.42 & 0.01 & 0.03 & 0.19 & 0.30 & 0.20 & 0.04 & 0.01 & 0.11 & 0.14\\
%\hline
%Power5 & 0.25 & 0.01 & 0.02 & 0.09 & 0.14 & 0.12 & 0.03 & 0.00 & 0.07 & 0.08\\
%\hline
%Power6 & 0.11 & 0.00 & 0.01 & 0.04 & 0.07 & 0.06 & 0.03 & 0.00 & 0.04 & 0.05\\
\hline
\end{tabular}
\label{tab:ds+pow_supp}
\end{table}

We also conducted additional simulations under a setting where the dose distribution has a point mass at the zero dose (i.e., many individuals are completely unexposed) and is right-skewed, which is realistic in many datasets (e.g., the real data example considered in the main text). Specifically, when generating the exposure dose $Z$ under the general data-generating framework described in Section 3.2 of the main text, we randomly generate each exposure dose $Z_{ij}$ from a mixture distribution: the first component distribution of this mixture distribution is a point mass at 0 with probability $0.2$, and the second component distribution is $\text{Unif}[0,1]$ or $\text{Beta}(2,8)$ with probability $0.8$. In Table~\ref{tab:ds+pow point_mass}, we report the corresponding design sensitivity and simulated finite-sample power (under the significance level 0.05) of the five considered test statistics under $f(z)=\mathbbm{1}\{z>0\}$ and $f(z) = z^a$ for $a = 1/4, 1/2, 2, 4$ and $\beta=0.8$ for the outcome model considered in Section 3.2, with 2000 matched sets for calculating the finite-sample power. Several insights are observed in Table~\ref{tab:ds+pow point_mass}. First, when there is a point mass at the zero dose, the design sensitivity can still provide helpful guidance for comparing different test statistics as it agrees well with the finite-sample power: a test statistic with larger design sensitivity, in general, tends to have higher finite-sample power (although exceptions may exist in some finite-sample settings). Second, by comparing the design sensitivity and finite-sample power of the five considered test statistics when there is a point mass at the zero exposure dose, we still find that different test statistics may perform very differently in a sensitivity analysis, and which test statistics is most powerful (either asymptotically or in finite sample) depends on the underlying data-generating process. In particular, for the considered data-generating processes, we notice that when $f(z)=\mathbbm{1}\{z>0\}$, $f(z)=z^{0.25}$, or $f(z)=z^{0.5}$ (i.e., $f(z)$ is concave in $z$), the test statistic $T_{0.1}$ typically performs best. When $f(z) = z^2$ or $z^4$, either $T_t$ or $T_{0.5}$ performs best. These patterns generally agree with those observed in Table 2 in the main text. Therefore, the relevant discussions in Section 3.2 of the main text (on pages 8 and 9 of the main text) about the insights on the different performances of various test statistics are still helpful when there is a point mass at the zero exposure dose.

\begin{table}[H]
\centering
\caption{Design sensitivity and finite-sample power (under the significance level 0.05) for the five test statistics under $f(z)=\mathbbm{1}\{z>0\}$ and $f(z) = z^a$ for $a = 1/4, 1/2, 2, 4$ and $\beta=0.8$ for the outcome model. The random effect for each matched set is drawn from $N(0,1)$, and the exposure dose $Z$ is drawn from a mixture distribution. The first component of the mixture is a point mass at 0 with probability $0.2$, and the second component is $\text{Unif}[0,1]$ or $\text{Beta}(2,8)$.}
\small
  \begin{tabular}{lrrrrr|lrrrrr}
    \multicolumn{12}{c}{$f(z) = \mathbbm{1}\{z > 0\}$} \\
    \multicolumn{1}{c}{} & \multicolumn{5}{c}{$Z\sim \text{Unif}[0,1]$} & \multicolumn{1}{c}{} & \multicolumn{5}{c}{$Z\sim$ Beta(2,8)} \\
    \cline{1-6}\cline{7-12}
    Test & $T_{t}$ & $T_{0.1}$ & $T_{0.25}$ & $T_{0.5}$ & $T_{\text{adap}}$ & Test & $T_{t}$ & $T_{0.1}$ & $T_{0.25}$ & $T_{0.5}$ & $T_{\text{adap}}$\\
    \cline{1-6}\cline{7-12}
    $\Tilde{\Gamma}$ & 1.87 & 3.07 & 2.15 & 1.61 & 3.07 & $\Tilde{\Gamma}$ & 3.91 & 6.83 & 2.25 & 1.39 & 6.83\\
    $\Gamma = 1.33$ & 0.71 & 0.98 & 0.81 & 0.25 & 0.97 & $\Gamma = 2.00$ & 0.60 & 0.84 & 0.13 & 0.03 & 0.78\\
    $\Gamma = 1.67$ & 0.18 & 0.88 & 0.37 & 0.02 & 0.80 & $\Gamma = 2.33$ & 0.43 & 0.75 & 0.06 & 0.02 & 0.67\\
    $\Gamma = 2.00$ & 0.02 & 0.61 & 0.09 & 0.00 & 0.48 & $\Gamma = 2.67$ & 0.31 & 0.66 & 0.04 & 0.01 & 0.54\\
    $\Gamma = 2.33$ & 0.00 & 0.34 & 0.01 & 0.00 & 0.24 & $\Gamma = 3.00$ & 0.20 & 0.56 & 0.02 & 0.01 & 0.46\\
    $\Gamma = 2.67$ & 0.00 & 0.17 & 0.00 & 0.00 & 0.09 & $\Gamma = 4.00$ & 0.04 & 0.36 & 0.00 & 0.01 & 0.25\\
    \cline{1-6}\cline{7-12}
  \end{tabular}
  \begin{tabular}{lrrrrr|lrrrrr}
    \multicolumn{12}{c}{$f(z) = z^{0.25}$}\\
    \multicolumn{1}{c}{} & \multicolumn{5}{c}{$Z\sim \text{Unif}[0,1]$} & \multicolumn{1}{c}{} & \multicolumn{5}{c}{$Z\sim$ Beta(2,8)} \\
    \cline{1-6}\cline{7-12}
    Test & $T_{t}$ & $T_{0.1}$ & $T_{0.25}$ & $T_{0.5}$ & $T_{\text{adap}}$ & Test & $T_{t}$ & $T_{0.1}$ & $T_{0.25}$ & $T_{0.5}$ & $T_{\text{adap}}$\\
    \cline{1-6}\cline{7-12}
    $\Tilde{\Gamma}$ & 2.11 & 2.95 & 2.37 & 1.93 & 2.95 & $\Tilde{\Gamma}$ & 4.26 & 6.87 & 3.04 & 1.98 & 6.87\\
    $\Gamma = 1.33$ & 0.97 & 0.99 & 0.95 & 0.76 & 0.99 & $\Gamma = 2.00$ & 0.56 & 0.69 & 0.17 & 0.03 & 0.63\\
    $\Gamma = 1.67$ & 0.53 & 0.91 & 0.63 & 0.25 & 0.83 & $\Gamma = 2.33$ & 0.39 & 0.57 & 0.11 & 0.02 & 0.46\\
    $\Gamma = 2.00$ & 0.13 & 0.60 & 0.24 & 0.04 & 0.48 & $\Gamma = 2.67$ & 0.23 & 0.44 & 0.07 & 0.01 & 0.34\\
    $\Gamma = 2.33$ & 0.02 & 0.31 & 0.06 & 0.01 & 0.22 & $\Gamma = 3.00$ & 0.15 & 0.34 & 0.03 & 0.01 & 0.26\\
    $\Gamma = 2.67$ & 0.00 & 0.14 & 0.01 & 0.00 & 0.08 & $\Gamma = 4.00$ & 0.03 & 0.16 & 0.01 & 0.00 & 0.11\\
    \cline{1-6}\cline{7-12}
  \end{tabular}
  \begin{tabular}{lrrrrr|lrrrrr}
    \multicolumn{12}{c}{$f(z) = z^{0.5}$} \\
    \multicolumn{1}{c}{} & \multicolumn{5}{c}{$Z\sim \text{Unif}[0,1]$} & \multicolumn{1}{c}{} & \multicolumn{5}{c}{$Z\sim$ Beta(2,8)} \\
    \cline{1-6}\cline{7-12}
    Test & $T_{t}$ & $T_{0.1}$ & $T_{0.25}$ & $T_{0.5}$ & $T_{\text{adap}}$ & Test & $T_{t}$ & $T_{0.1}$ & $T_{0.25}$ & $T_{0.5}$ & $T_{\text{adap}}$\\
    \cline{1-6}\cline{7-12}
    $\Tilde{\Gamma}$ & 2.26 & 2.65 & 2.40 & 2.17 & 2.65 & $\Tilde{\Gamma}$ & 3.58 & 4.83 & 2.83 & 3.29 & 4.83\\
    $\Gamma = 1.33$ & 0.97 & 0.97 & 0.96 & 0.89 & 0.98 & $\Gamma = 2.00$ & 0.34 & 0.39 & 0.15 & 0.06 & 0.34\\
    $\Gamma = 1.67$ & 0.63 & 0.73 & 0.62 & 0.42 & 0.69 & $\Gamma = 2.33$ & 0.19 & 0.28 & 0.08 & 0.05 & 0.22\\
    $\Gamma = 2.00$ & 0.19 & 0.39 & 0.24 & 0.12 & 0.33 & $\Gamma = 2.67$ & 0.10 & 0.19 & 0.04 & 0.03 & 0.14\\
    $\Gamma = 2.33$ & 0.03 & 0.18 & 0.07 & 0.02 & 0.11 & $\Gamma = 3.00$ & 0.06 & 0.13 & 0.03 & 0.02 & 0.09\\
    $\Gamma = 2.67$ & 0.00 & 0.06 & 0.01 & 0.00 & 0.03 & $\Gamma = 4.00$ & 0.01 & 0.05 & 0.00 & 0.01 & 0.02\\
    \cline{1-6}\cline{7-12}
  \end{tabular}
   \begin{tabular}{lrrrrr|lrrrrr}
    \multicolumn{12}{c}{$f(z) = z^{2}$}  \\
    \multicolumn{1}{c}{} & \multicolumn{5}{c}{$Z\sim \text{Unif}[0,1]$} & \multicolumn{1}{c}{} & \multicolumn{5}{c}{$Z\sim$ Beta(2,8)} \\
    \cline{1-6}\cline{7-12}
    Test & $T_{t}$ & $T_{0.1}$ & $T_{0.25}$ & $T_{0.5}$ & $T_{\text{adap}}$ & Test & $T_{t}$ & $T_{0.1}$ & $T_{0.25}$ & $T_{0.5}$ & $T_{\text{adap}}$\\
    \cline{1-6}\cline{7-12}
    $\Tilde{\Gamma}$ & 2.07 & 1.85 & 1.94 & 2.10 & 2.07 & $\Tilde{\Gamma}$ & 1.40 & 1.24 & 1.52 & 1.98 & 1.40\\
    $\Gamma = 1.10$ & 1.00 & 0.75 & 0.92 & 0.99 & 0.99 & $\Gamma = 1.10$ & 0.18 & 0.10 & 0.15 & 0.14 & 0.14\\
    $\Gamma = 1.25$ & 0.95 & 0.52 & 0.75 & 0.93 & 0.91 & $\Gamma = 1.25$ & 0.11 & 0.06 & 0.08 & 0.11 & 0.07\\
    $\Gamma = 1.40$ & 0.81 & 0.29 & 0.51 & 0.76 & 0.73 & $\Gamma = 1.40$ & 0.06 & 0.03 & 0.06 & 0.09 & 0.04\\
    $\Gamma = 1.55$ & 0.58 & 0.14 & 0.28 & 0.55 & 0.45 & $\Gamma = 1.55$ & 0.03 & 0.02 & 0.04 & 0.08 & 0.02\\
    $\Gamma = 1.70$ & 0.34 & 0.07 & 0.12 & 0.33 & 0.23 & $\Gamma = 1.70$ & 0.02 & 0.01 & 0.03 & 0.07 & 0.01\\
    \cline{1-6}\cline{7-12}
  \end{tabular}
\begin{tabular}{lrrrrr|lrrrrr}
    \multicolumn{12}{c}{$f(z) = z^{4}$}  \\
    \multicolumn{1}{c}{} & \multicolumn{5}{c}{$Z\sim \text{Unif}[0,1]$} & \multicolumn{1}{c}{} & \multicolumn{5}{c}{$Z\sim$ Beta(2,8)} \\
    \cline{1-6}\cline{7-12}
    Test & $T_{t}$ & $T_{0.1}$ & $T_{0.25}$ & $T_{0.5}$ & $T_{\text{adap}}$ & Test & $T_{t}$ & $T_{0.1}$ & $T_{0.25}$ & $T_{0.5}$ & $T_{\text{adap}}$\\
    \cline{1-6}\cline{7-12}
    $\Tilde{\Gamma}$ & 1.67 & 1.38 & 1.47 & 1.68 & 1.67 & $\Tilde{\Gamma}$ & 1.09 & 1.07 & 1.15 & 1.00 & 1.09\\
    $\Gamma = 1.10$ & 0.88 & 0.27 & 0.49 & 0.81 & 0.80 & $\Gamma = 1.05$ & 0.06 & 0.04 & 0.06 & 0.05 & 0.04\\
    $\Gamma = 1.25$ & 0.60 & 0.11 & 0.20 & 0.52 & 0.47 & $\Gamma = 1.10$ & 0.05 & 0.03 & 0.05 & 0.04 & 0.03\\
    $\Gamma = 1.40$ & 0.28 & 0.04 & 0.07 & 0.26 & 0.19 & $\Gamma = 1.15$ & 0.03 & 0.03 & 0.04 & 0.04 & 0.02\\
    $\Gamma = 1.55$ & 0.11 & 0.01 & 0.02 & 0.09 & 0.06 & $\Gamma = 1.20$ & 0.02 & 0.02 & 0.03 & 0.03 & 0.02\\
    $\Gamma = 1.70$ & 0.03 & 0.00 & 0.01 & 0.02 & 0.02 & $\Gamma = 1.25$ & 0.01 & 0.02 & 0.02 & 0.03 & 0.02\\
    \cline{1-6}\cline{7-12}
  \end{tabular}
  \label{tab:ds+pow point_mass}
\end{table}

\section*{Appendix G: Testing the Uniform Dose Assignment Assumption}

In Appendix G, under the no unmeasured confounding assumption, we describe a detailed procedure for testing the following uniform exposure dose assignment assumption (i.e., assumption (1) in Section 2.1 of the main text): 
\begin{align*}
    &\text{$p_{i\pi_{i}}=\text{pr}( {Z}_i= {z}_{i\pi_{i}}|\mathcal{F}_{0},\mathcal{Z}_{i})=\frac{1}{n_{i}!}$ for all $\pi_{i} \in S_{n_i}$ (i.e., for all $ {z}_{i\pi_{i}} \in \mathcal{Z}_{i}$).} \\
   & \quad \quad \quad \quad \text{(The Uniform Dose Assignment Assumption)} 
\end{align*}
One strategy to derive a valid randomization test for testing the uniform dose assignment assumption is to directly generalize the existing randomization-based balance tests in the binary exposure case (e.g., \citealp{gagnon_cpt, branson2020evaluating, branson2021randomization, biased_randomization}) to the continuous exposure case. For example, one specific way is to generalize the testing procedure in \citet{biased_randomization} to the continuous exposure case, of which a detailed procedure is described in Algorithm~\ref{alg: perfect randomization} as follows.

\begin{algorithm}
\caption{Testing the Uniform Dose Assignment Assumption.}\label{alg: perfect randomization}
\textbf{Input} The measured confounder information $\mathcal{X} = \{x_{ij}, i = 1,\ldots,I; j \ 1,\ldots,n_i\}$, the treatment indicators $Z =(Z_{11}, \dots, Z_{In_{I}})$, and the significance level $\alpha$.

\textbf{Step 1:}  Randomly split the matched set indexes into two non-overlapping parts (with an equal number of matched sets): $\mathcal{I}^{(1)}$ and $\mathcal{I}^{(2)}$ such that
$\mathcal{I}^{(1)} \cup \mathcal{I}^{(2)} = \{1,\ldots, I\}$ and $\mathcal{I}^{(1)} \cap \mathcal{I}^{(2)} = \emptyset$.

\textbf{Step 2:}  Train a prediction model (e.g., linear regression model) $g_1$ with the measured confounder information $X^{(1)} = \{x_{ij}, i \in \mathcal{I}^{(1)}, j = 1, \ldots, n_i\}$ from the first sample part $\mathcal{I}^{(1)}$ as predictors and the corresponding dose $Z^{(1)} = \{Z_{ij}, i \in \mathcal{I}^{(1)}, j = 1, \ldots, n_i\}$ as labels (i.e., dependent variables).

\textbf{Step 3:}  Train a prediction model (e.g., linear regression model) $g_2$ with the measured confounder information $X^{(2)} = \{x_{ij}, i \in \mathcal{I}^{(1)}, j = 1, \ldots, n_i\}$ from the second sample part $\mathcal{I}^{(2)}$ as predictors and the corresponding dose $Z^{(2)} = \{Z_{ij}, i \in \mathcal{I}^{(2)}, j = 1, \ldots, n_i\}$ as labels (i.e., dependent variables). 

\textbf{Step 4:}  Let $g_{1\to 2} = \{
g_1(x_{ij} ), i \in \mathcal{I}^{(2)}, j = 1,\ldots, n_i \}$ denote the predicted exposure doses for $\mathcal{I}^{(2)}$ based on
the fitted model $g_1$. Define the test statistic $T_{1\to2} =
\sum_{i\in \mathcal{I}^{(2)}} \sum_{j=1}^{n_i} Z_{ij}g_1(x_{ij})$, and use its null distribution (i.e., the randomization distribution of $T_{1\to2}$ under the uniform dose assignment assumption) to obtain a corresponding permutational $p$-value $p_{1 \to 2}$. 

\textbf{Step 5:} Let $g_{2\to 1} = \{
g_2(x_{ij} ), i \in \mathcal{I}^{(1)}, j = 1,\ldots, n_i \}$ denote the predicted exposure doses for $\mathcal{I}^{(1)}$ based on
the fitted model $g_2$. Define the test statistic $T_{2\to1} =
\sum_{i\in \mathcal{I}^{(1)}} \sum_{j=1}^{n_i} Z_{ij}g_2(x_{ij})$, and use its null distribution (i.e., the randomization distribution of $T_{2\to1}$ under the uniform dose assignment assumption) to obtain a corresponding permutational $p$-value $p_{2 \to 1}$. 

\textbf{Step 6:} Reject the uniform dose assignment assumption under the significance level $\alpha$ if and only if
$\min\{p_{1 \to 2},p_{2 \to 1}\} < \alpha/2$.
\end{algorithm}

In our data application, applying Algorithm~\ref{alg: perfect randomization} with setting the dose prediction models $g_{1}$ and $g_{2}$ as linear regression models (i.e., the default prediction model used in Algorithm~\ref{alg: perfect randomization}), we failed to reject the random exposure dose assignment assumption among the matched dataset under the significance level $\alpha=0.1$.

\begin{remark}
If the uniform dose assignment assumption was rejected, i.e., if the matched dataset significantly violated the uniform dose assignment assumption based on the measured confounder information, then the bias due to inexact matching on measured confounders is also a major concern and should be further quantified and carefully addressed. In matched observational studies with binary exposures, some previous studies proposed to quantify the biased exposure assignment probabilities and incorporate it into the downstream sensitivity analysis (\citealp{biased_randomization, pimentel2022covariateadaptive}). To generalize this type of strategy to the continuous exposure and binary outcome case, there are two main steps, and we identified both opportunities and challenges among them:
\begin{itemize}
    \item Step 1: Quantifying the biased exposure dose assignment due to inexact matching on measured confounders (e.g., some calibrated value $\Gamma_{0}$ such that $1<\Gamma_{0}\leq \Gamma$).
    \item Step 2: Incorporating the biased exposure dose assignments due to inexact matching (e.g., the calibrated value $\Gamma_{0}$) into our downstream sensitivity analysis (e.g., by setting the sensitivity parameter $\Gamma\geq \Gamma_{0})$).
\end{itemize}
Step 2 is relatively straightforward to be accomplished in the setting with continuous exposure and binary outcome. Specifically, by directly combining the arguments in Theorem 3 in \citet{pimentel2022covariateadaptive} with our proposed sensitivity analysis method, it is straightforward to show that we can directly adjust our sensitivity analysis method to conduct a finite-population-exact sensitivity analysis under Fisher's sharp null $H_{0}$ with some biased (due to inexact matching) baseline exposure dose assignment probability distribution (in the continuous exposure and binary outcome case). Therefore, the only remaining problem is how to appropriately determine/quantify the biased exposure dose assignments due to inexact matching (i.e., the goal of Step 1). After some preliminary investigations, we found that this problem (i.e., Step 1) is much more challenging than that in the binary exposure case. Specifically, in the binary exposure case, there are two main strategies to quantify the biased exposure assignments in matched observational studies:
\begin{itemize}
    \item Strategy 1 (\citealp{biased_randomization}): Researchers conduct randomization balance tests for biased exposure assignments to quantify some sensible calibrated value $\Gamma_{0}$ (called the residual sensitivity value) and use it as a baseline value for the sensitivity parameter $\Gamma$ in the downstream sensitivity analysis.

    \item Strategy 2 (\citealp{pimentel2022covariateadaptive}): Researchers adopt some working model (e.g., logistic model) to estimate the biased exposure assignment probabilities due to inexact matching for each matched set and use them as the baseline exposure distribution in the downstream sensitivity analysis.
\end{itemize}

Generalizing either one of the above two strategies to the continuous exposure case is challenging. For generalizing Strategy 1, researchers need to come up with an appropriate model for the biased exposure dose assignment due to inexact matching (assuming no unmeasured confounding):
\begin{equation}\label{eqn: biased dose assignment model}
    \text{$p_{i\pi_{i}}=\text{pr}({Z}_i={z}_{i\pi_{i}}|\mathcal{F}_{0},\mathcal{Z}_{i})=\frac{\prod_{j=1}^{n_i} f(z = z_{i\pi_{i}(j)} \mid x_{ij})}{\sum_{\widetilde{\pi}_{i}\in S_{n_{i}} }\prod_{j=1}^{n_i} f(z = z_{i\widetilde{\pi}_{i}(j)} \mid x_{ij})}$ for $i=1,\dots, I$,}
\end{equation}
where $Z_{i}=(Z_{i1},\dots, Z_{in_{i}})$ is the random exposure dose vector for matched set $i$, the ${z}_{i\pi_{i}}=({z}_{i\pi_{i}(1)},\dots, {z}_{i\pi_{i}(n_{i})})$ is a possible realization of the random exposure vector (corresponding to permutation $\pi_{i}$ of the observed exposure dose vector $(z_{i1}, \dots, z_{in_{i}})$), and $f(z\mid x_{ij})$ is the conditional density function of continuous exposure dose (conditional on the measured confounders $x_{ij}$ for subject $j$ in matched set $i$). In the binary exposure dose case, researchers directly consider a biased exposure dose model that is similar to the Rosenbaum sensitivity bounds in the binary exposure case (\citealp{biased_randomization}). However, in the continuous exposure case, an appropriate biased exposure dose assignment model for (\ref{eqn: biased dose assignment model}) should incorporate the exposure dosing information ${z}_{i\pi_{i}}$ in each matched set, which involves modeling the conditional density function $f(z\mid x_{ij})$ (assuming no unmeasured confounding). Therefore, without imposing additional strong modeling assumptions on $f(z\mid x_{ij})$, Strategy 1 is difficult to generalize to the continuous exposure case. A similar difficulty also holds for generalizing Strategy 2 because when directly estimating the biased exposure dose assignment probabilities (\ref{eqn: biased dose assignment model}) using measured confounding information, we need to accurately estimate the conditional density $f(z\mid x_{ij})$, which is a very challenging problem without imposing additional strong modeling assumptions on $f(z\mid x_{ij})$, unless there are only very few measured confounders.

\end{remark}

\end{document}